\documentclass[onecolumn]{IEEEtran}
\usepackage{amsmath,amssymb,bm,cite,color,amsthm,mathrsfs}
\usepackage[thicklines]{cancel}
\usepackage{stfloats}
\usepackage{enumerate}
\usepackage[linesnumbered,ruled]{algorithm2e}
\usepackage{hyperref}
\usepackage{cleveref}

\newtheorem{theorem}{Theorem}[section]
\newtheorem{lemma}{Lemma}[section]

\newtheorem{proposition}{Proposition}[section]
\newtheorem{corollary}{Corollary}[section]
\newtheorem{definition}{Definition}[section]

\newtheorem{example}{Example}[section]
\newtheorem{remark}{Remark}[section]
\newtheorem{construction}{Construction}[section]

\newcommand\nc\newcommand
\nc{\cA}{\mathcal{A}}\nc{\cB}{\mathcal{B}}\nc{\cC}{\mathcal{C}}\nc{\cD}{\mathcal{D}}
\nc{\cE}{\mathcal{E}}\nc{\cF}{\mathcal{F}}\nc{\cG}{\mathcal{G}}\nc{\cH}{\mathcal{H}}
\nc{\cI}{\mathcal{I}}\nc{\cJ}{\mathcal{J}}\nc{\cK}{\mathcal{K}}\nc{\cL}{\mathcal{L}}
\nc{\cM}{\mathcal{M}}\nc{\cN}{\mathcal{N}}\nc{\cO}{\mathcal{O}}\nc{\cP}{\mathcal{P}}
\nc{\cQ}{\mathcal{Q}}\nc{\cR}{\mathcal{R}}\nc{\cS}{\mathcal{S}}\nc{\cT}{\mathcal{T}}
\nc{\cU}{\mathcal{U}}\nc{\cV}{\mathcal{V}}\nc{\cW}{\mathcal{W}}\nc{\cX}{\mathcal{X}}
\nc{\cY}{\mathcal{Y}}\nc{\cZ}{\mathcal{Z}}

\nc{\bba}{\mathbf{a}}\nc{\bbb}{\mathbf{b}}\nc{\bbc}{\mathbf{c}}\nc{\bbd}{\mathbf{d}}
\nc{\bbe}{\mathbf{e}}\nc{\bbf}{\mathbf{f}}\nc{\bbg}{\mathbf{g}}\nc{\bbh}{\mathbf{h}}
\nc{\bbi}{\mathbf{i}}\nc{\bbj}{\mathbf{j}}\nc{\bbk}{\mathbf{k}}\nc{\bbl}{\mathbf{l}}
\nc{\bbm}{\mathbf{m}}\nc{\bbn}{\mathbf{n}}\nc{\bbo}{\mathbf{o}}\nc{\bbp}{\mathbf{p}}
\nc{\bbq}{\mathbf{q}}\nc{\bbr}{\mathbf{r}}\nc{\bbs}{\mathbf{s}}\nc{\bbt}{\mathbf{t}}
\nc{\bbu}{\mathbf{u}}\nc{\bbv}{\mathbf{v}}\nc{\bbw}{\mathbf{w}}\nc{\bfx}{\mathbf{x}}
\nc{\bby}{\mathbf{y}}\nc{\bbz}{\mathbf{z}}

\nc{\bbA}{\mathbf{A}}\nc{\bbB}{\mathbf{B}}\nc{\bbC}{\mathbf{C}}\nc{\bbD}{\mathbf{D}}
\nc{\bbE}{\mathbf{E}}\nc{\bbF}{\mathbf{F}}\nc{\bbG}{\mathbf{G}}\nc{\bbH}{\mathbf{H}}
\nc{\bbI}{\mathbf{I}}\nc{\bbJ}{\mathbf{J}}\nc{\bbK}{\mathbf{K}}\nc{\bbL}{\mathbf{L}}
\nc{\bbM}{\mathbf{M}}\nc{\bbN}{\mathbf{N}}\nc{\bbO}{\mathbf{O}}\nc{\bbP}{\mathbf{P}}
\nc{\bbQ}{\mathbf{Q}}\nc{\bbR}{\mathbf{R}}\nc{\bbS}{\mathbf{S}}\nc{\bbT}{\mathbf{T}}
\nc{\bbU}{\mathbf{U}}\nc{\bbV}{\mathbf{V}}\nc{\bbW}{\mathbf{W}}\nc{\bfX}{\mathbf{X}}
\nc{\bbY}{\mathbf{Y}}\nc{\bbZ}{\mathbf{Z}}

\nc{\sA}{\mathsf{A}}\nc{\sB}{\mathsf{B}}\nc{\sC}{\mathsf{C}}\nc{\sD}{\mathsf{D}}
\nc{\sE}{\mathsf{E}}\nc{\sF}{\mathsf{F}}\nc{\sG}{\mathsf{G}}\nc{\sH}{\mathsf{H}}
\nc{\sI}{\mathsf{I}}\nc{\sJ}{\mathsf{J}}\nc{\sK}{\mathsf{K}}\nc{\sL}{\mathsf{L}}
\nc{\sM}{\mathsf{M}}\nc{\sN}{\mathsf{N}}\nc{\sO}{\mathsf{O}}\nc{\sP}{\mathsf{P}}
\nc{\sQ}{\mathsf{Q}}\nc{\sR}{\mathsf{R}}\nc{\sS}{\mathsf{S}}\nc{\sT}{\mathsf{T}}
\nc{\sU}{\mathsf{U}}\nc{\sV}{\mathsf{V}}\nc{\sW}{\mathsf{W}}\nc{\sX}{\mathsf{X}}
\nc{\sY}{\mathsf{Y}}\nc{\sZ}{\mathsf{Z}}

\nc{\bma}{\bm{a}}\nc{\bmb}{\bm{b}}\nc{\bmc}{\bm{c}}\nc{\bmd}{\bm{d}}
\nc{\bme}{\bm{e}}\nc{\bmf}{\bm{f}}\nc{\bmg}{\bm{g}}
\nc{\bmh}{\bm{h}}\nc{\bmi}{\bm{i}}\nc{\bmj}{\bm{j}}
\nc{\bmk}{\bm{k}}\nc{\bml}{\bm{l}}\nc{\bmm}{\bm{m}}
\nc{\bmn}{\bm{n}}\nc{\bmo}{\bm{o}}\nc{\bmp}{\bm{p}}
\nc{\bmq}{\bm{q}}\nc{\bmr}{\bm{r}}\nc{\bms}{\bm{s}}\nc{\bmt}{\bm{t}}\nc{\bmu}{\bm{u}}\nc{\bmv}{\bm{v}}\nc{\bmw}{\bm{w}}\nc{\bmx}{\bm{x}}\nc{\bmy}{\bm{y}}\nc{\bmz}{\bm{z}}

\nc{\bmA}{\bm{A}}\nc{\bmB}{\bm{B}}\nc{\bmC}{\bm{C}}\nc{\bmD}{\bm{D}}
\nc{\bmE}{\bm{E}}\nc{\bmF}{\bm{F}}\nc{\bmG}{\bm{G}}
\nc{\bmH}{\bm{H}}\nc{\bmI}{\bm{I}}\nc{\bmJ}{\bm{J}}
\nc{\bmK}{\bm{K}}\nc{\bmL}{\bm{L}}\nc{\bmM}{\bm{M}}
\nc{\bmN}{\bm{N}}\nc{\bmO}{\bm{O}}\nc{\bmP}{\bm{P}}
\nc{\bmQ}{\bm{Q}}\nc{\bmR}{\bm{R}}\nc{\bmS}{\bm{S}}\nc{\bmT}{\bm{T}}\nc{\bmU}{\bm{U}}\nc{\bmV}{\bm{V}}\nc{\bmW}{\bm{W}}\nc{\bmX}{\bm{X}}\nc{\bmY}{\bm{Y}}\nc{\bmZ}{\bm{Z}}

\newcommand{\mathset}[1]{\left\{#1\right\}}

\newcommand{\abs}[1]{\left|#1\right|}
\newcommand{\ceilenv}[1]{\left\lceil #1 \right\rceil}
\newcommand{\floorenv}[1]{\left\lfloor #1 \right\rfloor}
\newcommand{\parenv}[1]{\left( #1 \right)}
\newcommand{\sparenv}[1]{\left[ #1 \right]}

\nc{\set}[1]{\llbracket #1 \rrbracket}

\newcommand{\dth}[1]{\mathsf{d}_H\left(#1\right)}
\newcommand{\Wth}[2]{\mathsf{wt}_{#1}\left(#2\right)}
\newcommand{\Sum}[1]{\mathsf{Sum}\left(#1\right)}
\newcommand{\vt}{\mathsf{VT}}

\newcommand{\e}{\mathsf{e}}
\newcommand{\temp}{\mathsf{temp}}
\newcommand{\Enc}{\textup{Enc}}
\newcommand{\Dec}{\textup{Dec}}
\newcommand{\expan}[2]{\mathsf{expan}_{#1}\left(#2\right)}
\newcommand{\T}{\mathsf{T}}
\newcommand{\E}{\mathbb{E}}

\newcommand{\cfcite}[2]{\textit{c.f.}~\cite[#2]{#1}}

\title{Bounds and Constructions of Codes for Ordered Composite DNA Sequences}
\author{Zuo~Ye, Yuling~Li, Zhaojun~Lan and Gennian~Ge%
\thanks{This research is supported by the National Key Research and Development Program of China under Grant 2025YFC3409900, the National
Natural Science Foundation of China under Grant 12231014 and Grant 12501466, Beijing Scholars Program, and Xiaomi Young Scholars Program.

Z. Ye is with the Institute of Mathematics and Interdisciplinary Sciences, Xidian University, 
Xian 710126, China. Email: yezuo@xidian.edu.cn.

Y. Li, Z. Lan, and G. Ge are with the School of Mathematical Sciences, Capital Normal University, Beijing 100048, China, Emails: 2240501022@cnu.edu.cn, 2200501014@cnu.edu.cn, gnge@zju.edu.cn.
}
}

\begin{document}
\maketitle

\begin{abstract}
This paper extends the foundational work of Dollma \emph{et al}. on codes for ordered composite DNA sequences. We consider the general setting with an alphabet of size $q$ and a resolution parameter $k$, moving beyond the binary ($q=2$) case primarily studied previously. We investigate error-correcting codes for substitution errors and deletion errors under several channel models, including $(e_1,\ldots,e_k)$-composite error/deletion, $e$-composite error/deletion, and the newly introduced $t$-$(e_1,\ldots,e_t)$-composite error/deletion model.

We first establish equivalence relations among families of composite-error correcting codes (CECCs) and among families of composite-deletion correcting codes (CDCCs). This significantly reduces the number of distinct error-parameter sets that require separate analysis. We then derive novel and general upper bounds on the sizes of CECCs using refined sphere-packing arguments and probabilistic methods. These bounds together cover all values of parameters $q$, $k$, $(e_1,\ldots,e_k)$ and $e$. In contrast, previous bounds were only established for $q=2$ and limited choices of $k$, $(e_1,\ldots,e_k)$ and $e$. For CDCCs, we generalize a known non-asymptotic upper bound for $(1,0,\ldots,0)$-CDCCs and then provide a cleaner asymptotic bound.

On the constructive side, for any $q\ge2$, we propose $(1,0,\ldots,0)$-CDCCs, $1$-CDCCs and  $t$-$(1,\ldots,1)$-CDCCs with near-optimal redundancies. These codes have efficient and systematic encoders. For substitution errors, we design the first explicit encoding and decoding algorithms for the binary $(1,0,\ldots,0)$-CECC constructed by Dollma \emph{et al}, and extend the approach to general $q$. Furthermore, we give an improved construction of binary $1$-CECCs, a construction of nonbinary $1$-CECCs, and a construction of $t$-$(1,\ldots,1)$-CECCs. These constructions are also systematic.
\end{abstract}

\begin{IEEEkeywords}
\boldmath DNA-based data storage, composite DNA, deletion error, substitution error, error-correcting code
\end{IEEEkeywords}

\section{Introduction}\label{sec_introduction}
%%%%%%%%%%%%%%%%%%%%%%%%%%%%%%%%%%%%%%%%%%%%%%%%%
The exponential growth of global digital data has intensified the search for high‑density, long‑term storage alternatives. DNA‑based storage has emerged as a promising candidate, owing to its exceptional information density, longevity, and physical stability, alongside continuous advances in synthesis and sequencing technologies \cite{science2012,nature2013,Yazdi2015TMBMC,YANIV2017Science}. In conventional DNA storage, data is encoded into sequences over the four‑letter alphabet $\{A, C, G, T\}$. Then DNA strands, which will be stored in suitable containers, are synthesized according to these sequences. To retrieve the original
data, the stored DNA strands are sampled and sequenced. 

Although there has been significant progress in DNA-based data storage systems over the past two decades, the high cost of DNA synthesis--reportedly orders of magnitude greater than that of sequencing--remains a major barrier to practical deployment \cite{YANIV2017Science}. The synthesis cost is closely related to the total number of synthesis cycles required to write a given amount of data \cite{Leon2019NB}. This has motivated recent research focusing on reducing the number of synthesis cycles \cite{Lenz2020ISIT,Makarychev2022IT,Ohad2023IT,Abu-Sini2023ISIT,Chrisnata2023ISIT,Tuan2024ISIT,Immink2024TMBMSC,Yajuan2025ISIT,Leon2019NB,Yeongjae2019SR,Yan2023SR,Press2024SR}.

A common theme among these efforts is increasing the number of bits written per synthesis cycle. A notable advance in this direction is the concept of \emph{composite DNA letters} \cite{Leon2019NB,Yeongjae2019SR}. In a standard DNA strand, each position is represented by a single letter from the DNA alphabet $\mathset{A,C,G,T}$. In contrast, a composite letter represents a position by a predetermined mixture of $A$, $C$, $G$ and $T$ in a given ratio $\sigma=\parenv{p_A,p_C,p_G,p_T}$, where $p_A,p_C,p_G,p_T\ge 0$ and $p_A+p_C+p_G+p_T=1$. In other words, a fraction $p_b$ of the mixture corresponds to base $b$, for $b\in\mathset{A,C,G,T}$. Therefore, a sequence composed of composite letters (called a composite sequence) corresponds to multiple standard DNA sequences. This is feasible because the DNA synthesis process typically produces numerous copies of standard strands for each designed sequence. When $p_A$, $p_C$, $p_G$, and $p_T$ are rational numbers with a common denominator $k$, the letter is said to have resolution $k$. The use of composite alphabets of resolution $k$ enlarges the alphabet from size $4$ to $\binom{k+3}{3}$, thereby raising the theoretical logical density beyond the classical limit of $2$ bits/cycle.

Despite increasing the number of bits written per synthesis cycle and reducing the overall cost of DNA-based storage systems, the use of composite letters brings new challenges. One of them is the reconstruction of composite sequences by DNA sequencing. To overcome this challenge, researchers have proposed the concept of combinatorial composite letters (a variant of composite letters) and investigated related coding problems \cite{Yan2023SR,Press2024SR,Omer2024ISIT, Zuo2025DCC}; others have examined the sequencing coverage‑depth problem \cite{Preuss2024TMBMSC,Roman2025TCOMM,Cohen2025JSAIT}.

Another challenge lies in designing codes tailored to the composite channel. Once digital information is converted to composite sequences, standard DNA strands are synthesized based on these composite sequences. Because a composite letter is a mixture of $A$, $C$, $G$ and $T$, many distinct standard strands can be derived from a single composite sequence. For example, let $\bms=AMCMTM$ be a composite sequence, where $M$ is a mixture of $G$ and $T$ (that is, $p_G,p_T>0$ and $p_A=p_C=0$). Then even in the noiseless case, the synthesis process may produce up to eight standard strands:
$$
\begin{array}{cccc}
    AGCGTG,&AGCGTT,&AGCTTG,&ATCGTG,\\
    AGCTTT,&ATCGTT,&ATCTTG,&ATCTTT.
\end{array}
$$
As the length of the composite sequence grows, the number of possible standard strands can increase rapidly. The large set of possible strands and the ambiguity about which are actually synthesized constitute the core coding challenge for the composite channel. Several coding problems arising from this synthesis method have been studied \cite{Wenkai2025IT,Frederik2024ISIT,Frederik2025ISIT,Cohen2025ISIT,Tuan2025ISIT,WangChen2025ITW}.

The explosion in the number of standard strands in the example stems from the fact that, for the composite letter $M$ at a given position in $\bms$, either $G$ or $T$ may appear at the corresponding position of a standard strand. Very recently, Dollma \emph{et al} \cite{BesartDollma202509} introduced a new variant of the composite DNA channel, called the \emph{$k$-resolution ordered composite DNA channel}, where it is assumed that the standard DNA strands to be produced are already partitioned into $k$ groups. When synthesizing a composite letter $\parenv{\frac{k_A}{k},\frac{k_C}{k},\frac{k_G}{k},\frac{k_T}{k}}$ of resolution $k$, one base $A$ is appended to strands in each of the first $k_A$ groups, one base $C$ is appended to strands in each of the second $k_C$ groups, one base $G$ is appended to strands in each of the next $k_G$ groups, and one base $T$ is appended to strands in each of the last $k_T$ groups. The resulting standard strands are then transmitted through $k$ independent channels, with strands in the same group being transmitted through the same channel. For code design, it is sufficient to consider $k$ representative strands, because the synthesis process produces the same strands in a group. Notice that parameter $k$ is independent of the length of composite sequences.

For $q\ge2$, let $\Sigma_q=\mathset{0,1,\ldots,q-1}$ be the $q$-ary alphabet. Identify $A$ with $0$, $C$ with $1$, $G$ with $2$, and $T$ with $3$. Then the DNA alphabet $\mathset{A,C,G,T}$ corresponds to $\Sigma_4$. The $4$-ary $k$-resolution ordered composite channel generalizes naturally to a $q$-ary $k$-resolution ordered composite channel. The channel input is a $k\times n$ matrix over $\Sigma_q$, where each column—required to be nondecreasing—represents an \emph{ordered} composite letter of resolution $k$. The $k$ rows of this matrix are respectively transmitted to $k$ independent channels. Each of these channels may introduce errors, including substitutions and deletions.

The authors of \cite{BesartDollma202509} conducted a preliminary study of this channel model for the binary case ($q=2$). They defined four error types: $(e_1,\ldots,e_k)$-composite-error/deletion and $e$-composite-error/deletion. An $(e_1, e_2, \ldots, e_{k})$-composite-error/deletion means that for each $1 \le i \le k$, the $i$-th channel introduces at most $e_i$ substitutions/deletions, while an $e$-composite-error/deletion means that the $k$ channels collectively introduce up to $e$ substitutions/deletions. Their work presented initial bounds and code constructions for these models.

\subsection{Our Contributions}
%%%%%%%%%%%%%%%%%%%%%%%%%%%%%%%%%%%%%%%%%%%%%%%%
This paper provides a deeper investigation of the $k$-resolution ordered composite channel. Rather than focusing solely on the binary case, we study the channel for general $q$. Below we summarize our main contributions and contrast them with the results in \cite{BesartDollma202509}.

\vspace{5pt}
\subsubsection{\textbf{Upper Bounds}}
%%%%%%%%%%%%%%%%%%%%%%%%%%%%%%%%%%%%%%%%%%%
In \cite[Theorem 1]{BesartDollma202509}, standard sphere‑packing upper bounds were given for binary $2$-resolution $(e_1,e_2)$-composite-error correcting codes ($(e_1,e_2)$-CECCs) and $2$-resolution $e$-composite-error correcting codes ($e$-CECCs). In \Cref{thm_boundCECC1}, we derive sphere-packing bounds for $q$-ary $(e_1,\ldots,e_k)$-CECCs (where $e_1\ge\cdots\ge e_k$) and $k$-resolution $e$-CECCs, for arbitrary $q$ and $k$. Our bounds are not mere generalizations of those in \cite{BesartDollma202509}. They are also tighter, due to improved estimations of lower bounds on the sizes of corresponding error balls.

Because the sphere‑packing bound for $\parenv{e_1,e_2}$-CECCs depends only on $\min\{e_1,e_2\}$, not both of $e_1$ and $e_2$, the authors of \cite{BesartDollma202509} also derived two asymptotic upper bounds, from which an asymptotic bound for $2$-resolution $e$-CECCs with even $e$ follows. These bounds outperform the standard sphere-packing bounds for sufficiently large code-length $n$. Using a generalized sphere‑packing framework, they further obtained bounds for binary $k$-resolution $(1,0,\ldots,0)$-CECCs/$1$-CECCs and $2$-resolution $(1,1)$-CECCs/$2$-CECCs. 

In \eqref{eq_subupperbound3}, \Cref{thm_boundCECC2}, \Cref{thm_boundCECC3}, \Cref{cor_evene} and \Cref{subsec_mdyq}, we establish several asymptotic upper bounds for $q$-ary $(e_1,\ldots,e_k)$-CECCs and $k$-resolution $e$-CECCs. These bounds collectively cover all values of parameters $q$, $k$, $(e_1,\ldots,e_k)$ and $e$. Our bounds match or improve upon the asymptotic bounds in \cite{BesartDollma202509} when restricted to the same parameters. Moreover, when $n$ is sufficiently large, our asymptotic bounds are stronger than the generalized sphere‑packing bounds in certain scenarios. A detailed comparison is given in \Cref{subsec_comparison}.

Regarding CDCCs, \cite{BesartDollma202509} provided a non‑asymptotic upper bound on the maximum size of binary $(1,0)$-CDCCs via a general sphere‑packing argument. In \Cref{sec_bounddeltion}, we first generalized this result to arbitrary $k$-resolution $(1,0,\ldots,0)$-CDCCs. Both bounds lack closed‑form expressions. We therefore derive an asymptotic upper bound for binary $k$-resolution $(1,0,\ldots,0)$-CDCCs. We also explain how these results can be extended to general $q$-ary codes.

All asymptotic upper bounds in \cite{BesartDollma202509} and this paper are obtained by a common strategy: for a chosen subset $\cA\subseteq\Phi_{q,k}$, partition the code $\cC$ into $\cC_1$ and $\cC_2$, where $\cC_1$ consists of codewords that contain a sufficient amount of symbols from $\cA$ and $\cC_2=\cC\setminus\cC_1$. Subset $\cA$ is designed so that error balls centered at codewords in $\cC_1$ are sufficiently large and $\abs{\cC_2}=o\parenv{\abs{\cC_1}}$. Consequently, $\abs{\cC}$ is bounded above by $\abs{\cC_1}(1+o(1))$. Our improvements come from two aspects, enabling us to handle general parameters. First, we select the subset $\cA$ more carefully. Second, we employ a different method to bound $\abs{\cC_2}$. Specifically, in \cite[Theorem 2]{BesartDollma202509}, one of the key steps in bounding $\abs{\cC_2}$ is estimating upper bounds of two partial sums of binomial coefficients. Instead, we relate the estimation of $\abs{\cC_2}$ to estimating the expectation of the sum of random variables. Then the Hoeffding's Inequality (\Cref{lem_hoeffding}) or its generalized version (\Cref{lem_dependenthoeffding}) can be applied.

\vspace{5pt}
\subsubsection{\textbf{Constructions}}
%%%%%%%%%%%%%%%%%%%%%%%%%%%%%%%%%%%%%%%%%%
In \cite[Section IV]{BesartDollma202509}, a binary $k$-resolution $(1,0,\ldots,0)$-CECC was constructed for any $k$. However, no encoding or decoding algorithms for this code were provided. In \Cref{sec_10CECC}, we design an enumeration-based encoding algorithm and the corresponding decoding algorithm for that code. We also generalize the construction and algorithms to general alphabets.

Dollma \emph{et al} also constructed a binary $k$-resolution $1$-CECC with redundancy $\ceilenv{\log_{k+1}(2n+1)}$ for any \emph{even} $k$. In \Cref{sec_1CECC}, we present a new construction achieving the same redundancy for \emph{any} $k$. Moreover, our construction is systematic. By exploiting the nondecreasing property of each column of codewords, we also give a $q$-ary $k$-resolution $1$-CECC for any $q>2$ and any $k$.

For CDCCs, \cite{BesartDollma202509} gave a binary $2$-resolution $(1,0)$-CDCC with redundancy $\ceilenv{\log_3(n)}+3$ and a binary $2$-resolution $1$-CDCC with redundancy $\ceilenv{\log_3(2n)}+5$. In \Cref{sec_10CDCC}, we provide a construction of binary $k$-resolution $1$-CDCCs/$(1,0,\ldots,0)$-CDCCs with redundancy $\ceilenv{\log_{k+1}(n+1)}$ for any $k$. These constructions are extended to general $q$-ary codes in \Cref{sec_qaryCDCC}. All constructions in \cite{BesartDollma202509} and this paper are systematic.

\vspace{5pt}
\subsubsection{\textbf{A New Error Type}}
%%%%%%%%%%%%%%%%%%%%%%%%%%%%%%%%%%%%
Beyond the four error patterns mentioned earlier, we introduce two novel models: $t$-$(e_1,\ldots,e_t)$-composite-error and $t$-$(e_1,\ldots,e_t)$-composite-deletion. In these error models, it is assumed that up to $t$ channels introduce substitutions/deletions, with at most $e_i$ substitutions/deletions introduced by the $i$-th affected channel, and the identities of these $t$ channels are not known in advance. Clearly, the $t$-$(e_1,\ldots,e_t)$ models are more general than the corresponding $(e_1,\ldots,e_k)$ models.

We construct systematic codes capable of correcting 
a $t$-$(1,\ldots,1)$-composite-error/composite-deletion. These constructions heavily rely on the invertibility of all square submatrices of a Vandermonde-type matrix, which we establish in \Cref{lem_subvanmatrix} by connecting their determinants to Schur polynomials.

\vspace{5pt}
The rest of this paper is organized as follows. In \Cref{sec_pre}, we introduce some necessary definitions. \Cref{sec_equiv} establishes equivalence relations among families of CECCs and among families of CDCCs, substantially reducing the number of distinct error-parameter sets that require separate analysis. Next, \Cref{sec_upperbound} is devoted to deriving upper bounds on the sizes of $k$-resolution $(e_1,\ldots,e_k)$-CECCs, $e$-CECCs and $(1,0,\ldots,0)$-CDCCs. Constructions of codes are presented in \Cref{sec_consdel,sec_conssub}. Finally, \Cref{sec_conclusion} concludes the paper and outlines directions for future research.

\section{Preliminary}\label{sec_pre}
For integers $m$ and $n$ with $m\le n$, let the interval $[m,n]$ denote the set $\mathset{m,m+1,\ldots,n}$. Throughout this paper, we abbreviate $[1,n]$ and $[0,n-1]$ as $[n]$ and $[[n]]$, respectively. For any alphabet $\Sigma$, its elements are referred to as \emph{letters} or \emph{symbols}. Denote by $\Sigma_q^n$ the set of all sequences (or vectors) of length $n$ over $\Sigma$, where $n$ is a nonnegative integer. The \emph{concatenation} of two sequences $\bmx$ and $\bmy$ over $\Sigma$ is denoted by $\bmx\bmy$.

Given a sequence (or vector) $\bms\in\Sigma^n$ and an index $i\in[n]$, we write $\bms[i]$ for the symbol at position $i$ in $\bms$. Thus, a sequence (resp. vector) $\bms\in\Sigma^n$ can also be written as $\bms[1]\bms[2]\cdots\bms[n]$ (resp. $\parenv{\bms[1],\bms[2],\ldots,\bms[n]}$). For a set $I=\mathset{i_1,\ldots,i_t}\subseteq[n]$ with $i_1<\cdots<i_t$ and a sequence $\bms\in\Sigma^n$, let $\bms\mid_I$ denote the \emph{subsequence} $\bms[i_1]\cdots\bms[i_t]$. In particular, we refer to $\bms\mid_I$ as a \emph{substring} if $I$ is an interval.

For a finite set $S$, we use $\abs{S}$ to denote the cardinality of $S$. For $\alpha\in\Sigma$ and a sequence (or vector) $\bms$ of length $n$ over $\Sigma$, define $\Wth{\alpha}{\bms}\triangleq\abs{\mathset{i\in[n]:\bms[i]=\alpha}}$, which counts the number of occurrences of $\alpha$ in $\bms$.

\subsection{Ordered Composite Letters and Sequences}
Let $k,q\ge2$ be two integers. Denote by $\Sigma_q$ the alphabet $\mathset{0,1,\ldots,q-1}$. A \emph{composite letter} over $\Sigma_q$ is a vector of probabilities $\parenv{p_0,p_1,\ldots,p_{q-1}}$ satisfying $0\le p_0,p_1,\ldots,p_{q-1}\le 1$ and $\sum_{i=0}^{q-1}p_i=1$. This means that the symbol $i\in\Sigma_q$ is observed with probability $p_i$. A composite letter of \emph{resolution} $k$ over $\Sigma_q$ is a vector of the form $\parenv{\frac{k_0}{k},\frac{k_1}{k},\ldots,\frac{k_{q-1}}{k}}$, where $k_0,k_1,\ldots,k_{q-1}$ are nonnegative integers summing to $k$. 

Given a composite letter $\parenv{\frac{k_0}{k},\frac{k_1}{k},\ldots,\frac{k_{q-1}}{k}}$ of resolution $k$, we construct a column vector $\sigma=\sparenv{a_1,a_2,\ldots,a_{k}}^\T$ of length $k$ over $\Sigma_q$ with the property that $a_1\le a_2\le \cdots\le a_{k}$ and $\Wth{i}{\sigma}=k_i$ for every $i\in\Sigma_q$. In other words, $\sigma$ is a nondecreasing vector, and symbol $i\in\Sigma_q$ appears exactly $k_i$ times in $\sigma$. We call $\sigma$ an \emph{ordered composite letter} of \emph{resolution} $k$ over $\Sigma_q$. The set of all ordered composite letters of resolution $k$ over $\Sigma_q$ is denoted by $\Phi_{q,k}$. Define $Q_{q,k}=\abs{\Phi_{q,k}}$, i.e., the number of valid letters. Clearly, $Q_{q,k}=\binom{k+q-1}{q-1}$. Throughout this paper, an \emph{invalid} letter (or symbol) refers to any element in $\Sigma_q^k\setminus\Phi_{q,k}$, which violates the nondecreasing constraint. Elements in $\Phi_{q,k}$ and $\Phi_{q,k}^n$ are also called \emph{valid} letters and sequences, respectively.

Since each letter in $\Phi_{q,k}$ is a column vector of length $k$, a sequence $\bms$ in $\Phi_{q,k}^n$ can be regarded as a $k\times n$ matrix over $\Sigma_q$. For any $\bms\in\Phi_{q,k}^n$ and each $i\in[k]$, we always denote by $\bms_i$ the $i$-th row of $\bms$, unless stated otherwise.
\begin{example}
    Let $q=2$ and $k=n=3$. Then $\Phi_{2,3}=\mathset{\sparenv{0,0,0}^\T,\sparenv{0,0,1}^\T,\sparenv{0,1,1}^\T,\sparenv{1,1,1}^\T}$. The symbol $\sparenv{0,1,0}^\T$ is invalid, since it is not nondecreasing. Consider the sequence
    \begin{equation*}
        \bms=
        \begin{pmatrix}
            0&0&1\\
            0&1&1\\
            0&1&1
        \end{pmatrix}
    \end{equation*}
    in $\Phi_{2,3}^3$. By definition, we have $\bms_1=001$, $\bms_2=011$ and $\bms_3=011$.
\end{example}

Let $\cA_{q,k}=\mathset{\sum_{i=1}^{q-1}n_i(k+1)^{i-1}:n_1,\ldots,n_{q-1}\in\mathbb{Z}_{\ge0},\sum_{i=1}^{q-1}n_i\le k}$. Then $\abs{\cA_{q,k}}=Q_{q,k}$. Define a mapping
\begin{equation*}
    \begin{array}{l}
    \Phi_{q,k}\rightarrow\cA_{q,k},\\
    \quad\sigma\mapsto\sum_{i=1}^{q-1}\Wth{i}{\sigma}(k+1)^{i-1}.
    \end{array}
\end{equation*}
It is straightforward to verify that this is a bijection. On the other hand, by ordering the elements in $\cA_{q,k}$ lexicographically, we obtain a natural bijection from $\cA_{q,k}$ to $\Sigma_{Q_{q,k}}$. Thus, we obtain a bijection from $\Phi_{q,k}$ to $\Sigma_{Q_{q,k}}$. In particular, when $q=2$, this mapping sends $\sigma$ to $\Wth{1}{\sigma}$. Fixing this bijection, we may also interpret $\bms$ as a sequence in $\Sigma_{Q_{q,k}}^n$. In this paper, we alternately view sequences in $\Phi_{q,k}^n$ as $k\times n$ matrices over $\Sigma_q$ and as sequences in $\Sigma_{Q_{q,k}}^n$.

\subsection{The Ordered Composite DNA Channel and Error Models}
%%%%%%%%%%%%%%%%%%%%%%%%%%%%%%%%%%%%%%%%%%%%
The authors of \cite{BesartDollma202509} initialized the study of the \emph{ordered composite DNA channel}. This channel takes a sequence $\bms\in\Phi_{q,k}^n$ as input. Each row $\bms_i$ of $\bms$ is sent through a separate channel $i$ which may introduce errors. Let the output of the $i$-th channel be $\bmy_i$. The aim is to recover $\bms$ from $\bmy_1,\bmy_2,\ldots,\bmy_{k}$. 

The following four error patterns are introduced in \cite{BesartDollma202509}.
\begin{definition}
   \begin{itemize}
       \item An $(e_1, e_2, \ldots, e_{k})$-composite-error is an error pattern where, for each $1 \le i \le k$, the $i$-th channel introduces up to $e_i$ substitutions.
       \item A $k$-resolution $e$-composite-error is an error pattern where the $k$ channels together introduce up to $e$ substitutions.
       \item An $(e_1, e_2, \ldots, e_{k})$-composite-deletion is an error pattern where, for each $1\le i \le k$, the $i$-th channel introduces $e_i$ deletions.
       \item A $k$-resolution $e$-composite-deletion is an error pattern where the $k$ channels together introduce $e$ deletions.
   \end{itemize}
\end{definition}

Correspondingly, the following four families of codes were defined.
\begin{definition}
    \begin{itemize}
        \item An $(e_1, e_2, \ldots, e_{k})$-composite-error correcting code (abbreviated as $(e_1, e_2, \ldots, e_{k})$-CECC) is a code that can correct an $(e_1, e_2, \ldots, e_{k})$-composite-error.
        \item A $k$-resolution $e$-composite-error correcting code (abbreviated as $e$-CECC) is a code that can correct a $k$-resolution $e$-composite-error.
        \item An $(e_1, e_2, \ldots, e_{k})$-composite-deletion correcting code (abbreviated as $(e_1, e_2, \ldots, e_{k})$-CDCC) is a code that can correct an $(e_1, e_2, \ldots, e_{k})$-composite-deletion.
        \item A $k$-resolution $e$-composite-deletion correcting code (abbreviated as $e$-CDCC) is a code that can correct a $k$-resolution $e$-composite-deletion.
    \end{itemize}
\end{definition}

In \cite{BesartDollma202509}, Dollma \emph{et al} conducted preliminary research on these codes. Focusing on the case $q=2$, they derived some upper bounds on the sizes of codes and presented some constructions of CECCs /CDCCs. In this paper, we focus on general $q$ and delve further into the research on these codes. 

Notice that the $(e_1,\ldots,e_k)$-composite-error (or deletion) model imposes a constraint on the maximum number of errors introduced by each channel, while the $e$-composite-error (or deletion) model only imposes a constraint on the total number of errors. In this paper, we also study the following error model, which assumes the maximum number of channels that introduce errors and the maximum number of errors introduced by each channel. But it is not known a priori which of these channels will introduce errors.
\begin{definition}
 Let $t\le k$. A $k$-resolution $t$-$(e_1,\ldots,e_{t})$-composite-error (resp. $t$-$(e_1,\ldots,e_{t})$-composite-deletion) is an error pattern where up to $t$ channels introduce substitutions (resp. deletions), with at most (resp. exactly) $e_i$ substitutions (resp. deletions) introduced by the $i$-th affected channel, and the identities of these $t$ channels are not known in advance.

 A $k$-resolution $t$-$(e_1,\ldots,e_{t})$-composite-error correcting code (resp. $t$-$(e_1,\ldots,e_{t})$-composite-deletion correcting code), abbreviated as $t$-$(e_1,\ldots,e_{t})$-CECC (resp. $t$-$(e_1,\ldots,e_{t})$-CDCC), is a code that can correct a $k$-resolution $t$-$(e_1,\ldots,e_{t})$-composite-error (resp. $t$-$(e_1,\ldots,e_{t})$-composite-deletion).
\end{definition}

For a $k$-resolution CECC or CDCC $\cC$ of length $n$, we say that $\cC$ is a $q$-ary code, if $\cC\subseteq\Phi_{q,k}^n$. The redundancy of $\cC$ is defined to be $\log_{Q_{q,k}}\parenv{\frac{Q_{q,k}^n}{\abs{\cC}}}$.

Throughout the paper, we assume that the alphabet size $q$, the resolution parameter $k$, and the total number of errors are constants independent of the sequence length $n$. For any integer $m\ge2$, when we refer to a prime $p>m$, we specifically mean the minimum prime satisfying $m<p<2m$. Such a prime is guaranteed to exist by the Bertrand–Chebyshev theorem.

\section{Code Equivalence}\label{sec_equiv}
%%%%%%%%%%%%%%%%%%%%%%%%%%%%%%%%%%%%%%%
The previous section introduced the concepts of $\parenv{e_1,e_2,\ldots,e_{k}}$-CECCs and $\parenv{e_1,e_2,\ldots,e_{k}}$-CDCCs. In this section, we show that it is not necessary to study these codes for all possible tuples $\parenv{e_1,e_2,\ldots,e_{k}}$, as many are equivalent.

For a given tuple $\parenv{e_1,e_2,\ldots,e_{k}}\in\mathbb{N}^k$, let $\mathscr{C}_{q,k}^S\parenv{n;e_1,e_2,\ldots,e_{k}}$ (resp. $\mathscr{C}_{q,k}^D\parenv{n;e_1,e_2,\ldots,e_{k}}$)
denote the collection of all $\parenv{e_1,e_2,\ldots,e_{k}}$-CECCs (resp. CDCCs) in $\Phi_{q,k}^n$.

We say that two families, $\mathscr{C}_{q,k}^S\parenv{n;e_1,e_2,\ldots,e_{k}}$ and $\mathscr{C}_{q,k}^S\parenv{n;e_1^\prime,e_2^\prime,\ldots,e_{k}^\prime}$, are \textbf{equivalent}, if there is a bijection $g:\Phi_{q,k}^n\rightarrow\Phi_{q,k}^n$ satisfying the following two properties:
\begin{itemize}
    \item For every code $\cC\in\mathscr{C}_{q,k}^S\parenv{n;e_1,e_2,\ldots,e_{k}}$, its image $g(\cC)\triangleq\mathset{g(\bmc):\bmc\in\cC}$ belongs to $\mathscr{C}_{q,k}^S\parenv{n;e_1^\prime,e_2^\prime,\ldots,e_{k}^\prime}$.
    \item For every code $\cC\in\mathscr{C}_{q,k}^S\parenv{n;e_1^\prime,e_2^\prime,\ldots,e_{k}^\prime}$, its preimage $g^{-1}\parenv{\cC}$ belongs to $\mathscr{C}_{q,k}^S\parenv{n;e_1,e_2,\ldots,e_{k}}$.
\end{itemize}

Equivalence for the families $\mathscr{C}_{q,k}^D\parenv{n;e_1,e_2,\ldots,e_{k}}$ and $\mathscr{C}_{q,k}^D\parenv{n;e_1^\prime,e_2^\prime,\ldots,e_{k}^\prime}$ is defined similarly.

This section establishes equivalence relations among CECCs (resp. CDCCs). We begin with $\parenv{e_1,e_2,\ldots,e_{k}}$-CECCs. \Cref{pro_subreverse,pro_subequal} below generalize \cite[Proposition 3]{BesartDollma202509} and \cite[Proposition 4]{BesartDollma202509} to general alphabets. While \Cref{pro_subequal} generalizes \cite[Proposition 4]{BesartDollma202509}, its proof is novel.
\begin{proposition}\label{pro_subreverse}
    For any $\parenv{e_1,e_2,\ldots,e_{k}}\in\mathbb{N}^k$, the families $\mathscr{C}_{q,k}^S\parenv{n;e_1,e_2,\ldots,e_{k}}$ and $\mathscr{C}_{q,k}^S\parenv{n;e_{k},e_{k-1},\ldots,e_1}$ are equivalent.
\end{proposition}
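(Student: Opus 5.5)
We need a bijection $g$ on $\Phi_{q,k}^n$ that exchanges the roles of row $i$ and row $k+1-i$ while keeping columns nondecreasing. Reversing the row order alone makes each column nonincreasing, so we also complement every symbol. Concretely, for $\bms\in\Phi_{q,k}^n$ we set $g(\bms)$ to be the $k\times n$ matrix with
\begin{equation*}
    g(\bms)_i[j]=q-1-\bms_{k+1-i}[j],\qquad i\in[k],\ j\in[n].
\end{equation*}
Each column $\sigma=[a_1,\ldots,a_k]^\T$ with $a_1\le\cdots\le a_k$ is sent to $[q-1-a_k,\ldots,q-1-a_1]^\T$, which is again nondecreasing, so $g$ maps $\Phi_{q,k}^n$ into itself. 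Since $g\circ g$ is the identity, $g$ is a bijection and $g^{-1}=g$.

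The key observation is that $g$ respects the channel structure. Let $\phi(x)=q-1-x$ be the symbol complement, applied coordinatewise to rows of length $n$. Then the $i$-th row of $g(\bms)$ is $\phi(\bms_{k+1-i})$. Because $\phi$ is a bijection of $\Sigma_q$, it preserves Hamming distance between words of length $n$. Hence a word $\bmy$ lies within $e_{k+1-i}$ substitutions of $\bms_{k+1-i}$ exactly when $\phi(\bmy)$ lies within $e_{k+1-i}$ substitutions of $g(\bms)_i$.

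We use this to match the two error patterns. For a code $\cC$, an $(e_1,\ldots,e_k)$-composite-error applied to $\bmc$ yields outputs $(\bmy_1,\ldots,\bmy_k)$ with $\dth{\bmy_i,\bmc_i}\le e_i$. Define the map $\psi(\bmy_1,\ldots,\bmy_k)=(\phi(\bmy_k),\ldots,\phi(\bmy_1))$. Under $\psi$, the tuple of possible outputs from $\bmc$ under an $(e_1,\ldots,e_k)$-composite-error corresponds exactly to the tuple of possible outputs from $g(\bmc)$ under an $(e_k,\ldots,e_1)$-composite-error. That is, $\psi$ maps the first error ball of $\bmc$ bijectively onto the second error ball of $g(\bmc)$.

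The correction property then transfers directly. $\cC$ is an $(e_1,\ldots,e_k)$-CECC if and only if the error balls of distinct codewords are disjoint, or equivalently a decoder exists. Since $\psi$ is injective, these balls are disjoint exactly when their $\psi$-images are disjoint. This shows $g(\cC)$ is an $(e_k,\ldots,e_1)$-CECC, and the same argument applied to $g^{-1}=g$ gives the reverse direction.

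The only delicate point is showing that $g$ preserves validity, meaning the columns stay nondecreasing. Row reversal alone fails this test, which is why we include the complement $\phi$; the rest is routine bookkeeping.
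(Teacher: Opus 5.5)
Your proof is correct and follows the paper's argument. You use the same involution $g$ (reverse each column and complement each entry via $x\mapsto q-1-x$), and you reverse and complement the received rows so that an $(e_k,\ldots,e_1)$-composite-error on $g(\bmc)$ becomes an $(e_1,\ldots,e_k)$-composite-error on $\bmc$; the converse follows from $g^{-1}=g$. The paper phrases the transfer as building a decoder for $g(\cC)$ from one for $\cC$, rather than as disjointness of output sets under $\psi$, but these are the same argument. It works because your ``error balls'' are the sets of all output tuples, not the paper's valid-only balls $\cB^{(q,k)}$.
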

\begin{IEEEproof}
    For any $\bms=\sparenv{s_1,s_2,\ldots,s_{k}}^\T\in\Phi_{q,k}$, define $g(\bms)=\sparenv{q-1-s_{k},\ldots,q-1-s_2,q-1-s_1}^\T$. Clearly, $g$ is a bijection on $\Phi_{q,k}$. Mappings $g$ and $g^{-1}$ can be naturally extended component-wise to bijections on $\Phi_{q,k}^n$. By abuse of notation, we still denote them by $g$ and $g^{-1}$, respectively.

    For a given $\cC\in\mathscr{C}_{q,k}^S\parenv{n;e_1,e_2,\ldots,e_{k}}$, let $\cC^\prime=g(\cC)$. We will show that $\cC^\prime$ is an $\parenv{e_{k},e_{k-1},\ldots,e_1}$-CECC. Suppose that the transmitted codeword is $\bmc^\prime=g(\bmc)$ and $\bmy^\prime_i$, the output of the $i$-th channel, satisfies $\dth{\bmy^\prime_i,\bmc_i^\prime}\le e_{k+1-i}$. To decode $\bmc^\prime$ from $\bmy_1^\prime,\bmy_2^\prime,\ldots,\bmy_{k}^\prime$, it suffices to recover $\bmc$. For $i\in[k]$, define $\bmy_i=q-1-\bmy_{k+1-i}^\prime$. One can verify that $\dth{\bmy_i,\bmc_i}\le e_{i}$ for all $i\in[k]$. Since $\cC$ is an $\parenv{e_1,e_2,\ldots,e_{k}}$-CECC, we can decode $\bmc$ from $\bmy_1,\bmy_2,\ldots,\bmy_{k}$. Thus, $\cC^\prime$ is indeed an $\parenv{e_{k},e_{k-1},\ldots,e_{1}}$-CECC.

    As $g^{-1}=g$, it also holds that $g^{-1}\parenv{\cC}\in\mathscr{C}_{q,k}^S\parenv{n;e_1,e_2,\ldots,e_{k}}$ for any $\cC\in\mathscr{C}_{q,k}^S\parenv{n;e_{k},e_{k-1},\ldots,e_1}$. This completes the proof.
\end{IEEEproof}

Throughout this section, let $\bme_i\triangleq\parenv{0,\ldots,0,1,0,\ldots,0}$ denote the $i$-th unit vector in $\mathbb{N}^k$ (with the $1$ in the $i$-th position).
\begin{proposition}\label{pro_subequal}
    For any $i,j\in[k]$ and any $s\ge 1$, the families $\mathscr{C}_{q,k}^S\parenv{n;s\cdot\bme_i}$ and $\mathscr{C}_{q,k}^S\parenv{n;s\cdot\bme_j}$ are equivalent.
\end{proposition}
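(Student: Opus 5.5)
The plan is to build a single bijection $\varphi:\Phi_{q,k}\to\Phi_{q,k}$ on letters and extend it coordinatewise to $\Phi_{q,k}^n$, as in the proof of \Cref{pro_subreverse}. The first step is to reduce the error-correcting property to a combinatorial condition. Under an $s\cdot\bme_i$ composite-error, rows $l\neq i$ are received intact and only row $i$ suffers at most $s$ substitutions. Hence $\cC\in\mathscr{C}_{q,k}^S(n;s\cdot\bme_i)$ if and only if the following holds: for any two distinct $\bmc,\bmc'\in\cC$ with $\bmc_l=\bmc'_l$ for all $l\neq i$, we have $\dth{\bmc_i,\bmc'_i}\ge 2s+1$. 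If $\bmc_l\ne\bmc'_l$ for some $l\ne i$, the two codewords are distinguished by the error-free row $l$.

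Columnwise, define the partition $\cP_i$ of $\Phi_{q,k}$ in which two letters are in the same block iff they agree in every row except row $i$. Two sequences agree off row $i$ iff every pair of corresponding columns lies in a common block of $\cP_i$. In that case, $\dth{\bmc_i,\bmc'_i}$ equals the number of columns in which the two letters differ. It therefore suffices to find a bijection $\varphi$ on $\Phi_{q,k}$ that maps every block of $\cP_i$ onto a block of $\cP_j$; the same must then hold for $\varphi^{-1}$ from $\cP_j$ to $\cP_i$. The coordinatewise extension $g$ then preserves both the relation ``agree off the distinguished row'' and the number of differing columns. This gives both bullet conditions of equivalence.

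The main step is constructing $\varphi$. A block of $\cP_i$ is indexed by the letter with row $i$ deleted, i.e.\ a nondecreasing $\bmb=(b_1,\ldots,b_{k-1})$ over $\Sigma_q$. Row $i$ then ranges over $[b_{i-1},b_i]$ with the conventions $b_0=0$ and $b_k=q-1$, so the block has size $b_i-b_{i-1}+1$. Encode $\bmb$ by its gap vector $(d_1,\ldots,d_k)$ with $d_m=b_m-b_{m-1}\ge0$ and $\sum_m d_m=q-1$. This is a bijection between the blocks of $\cP_i$ (for every $i$) and weak compositions of $q-1$ into $k$ parts, and the block of $\cP_i$ indexed by $d$ has size $d_i+1$.

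Let $\tau$ swap the coordinates $d_i$ and $d_j$. Then $\tau$ is a bijection on compositions and sends a $\cP_i$-block of size $d_i+1$ to the $\cP_j$-block indexed by $\tau(d)$, whose size is $\tau(d)_j+1=d_i+1$. Within matched blocks, define $\varphi$ by the order-preserving map on the free row value: the offset $a_i-b_{i-1}$ in row $i$ is sent to the same offset in row $j$. Then $\varphi$ is a bijection on $\Phi_{q,k}$ carrying $\cP_i$ blockwise onto $\cP_j$, and $\varphi^{-1}$ does the reverse. The only delicate point, which I expect to be the main obstacle, is verifying this block-size matching carefully at the boundary indices $i=1$ or $i=k$, where $b_0=0$ and $b_k=q-1$ are fixed conventions rather than actual rows. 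Once the matching is checked there, the equivalence follows from the reduction in the first paragraph.
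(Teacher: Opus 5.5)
Your proposal is correct, and it takes a genuinely different route from the paper.

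\textbf{The paper's route.} The paper proves only $s=1$ and says the general case is similar. It reduces, by transitivity and \Cref{pro_subreverse}, to adjacent indices $i,i+1$, using the explicit letter map $g(\bms)=[q-1-s_k,\,s_1+q-1-s_k,\ldots,s_{k-1}+q-1-s_k]^\T$. It then argues operationally: it pulls the received word back through $g^{-1}$ and decodes with $\cC$. This forces a separate case when the received word contains an invalid column; there the column is rebuilt from the error-free rows, with an arbitrary symbol placed in row $i$.

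\textbf{Your route.} You characterize $s\cdot\bme_i$-CECCs by a distance condition on codewords agreeing off row $i$. This is valid because the received row may be any word of $\Sigma_q^n$, valid or not. It reduces the whole statement to finding a letter-level bijection carrying $\cP_i$ onto $\cP_j$. This buys three things:
\begin{itemize}
\item it removes the invalid-output case analysis;
\item it handles every $s\ge1$ and every pair $(i,j)$ in one step, with no transitivity;
\item it explains why the result holds: the multiset of block sizes $d_i+1$, over weak compositions $d$ of $q-1$ into $k$ parts, does not depend on $i$.
\end{itemize}
The paper's $g$ is in fact one such partition-respecting bijection. Two letters agreeing off row $i\le k-1$ share $s_k$, so their images agree off row $i+1$. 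What the paper's version buys is a closed-form map and a direct recipe for turning a decoder for $\cC$ into one for $g(\cC)$.

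\textbf{Two small remarks.} The boundary check you flag as the main obstacle is not one. For $i=1$ the free entry ranges over $[0,a_2]=[b_0,b_1]$, and for $i=k$ it ranges over $[a_{k-1},q-1]=[b_{k-1},b_k]$. So your conventions $b_0=0$, $b_k=q-1$ already give block size $d_i+1$ uniformly. Also, the order-preserving choice inside blocks is inessential: any bijection between matched blocks works.
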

\begin{IEEEproof}
   It suffices to prove the case $s=1$, as the general case follows similarly. Moreover, by transitivity and \Cref{pro_subreverse}, we only need to show that for any $1\le i\le k-1$, the families $\mathscr{C}_{q,k}^S\parenv{n;\bme_i}$ and $\mathscr{C}_{q,k}^S\parenv{n;\bme_{i+1}}$ are equivalent.

    For any $\bms=[s_1,s_2,\ldots, s_{k}]^\T\in \Phi_{q,k}$, define $g(\bms)=[q-1-s_{k},s_1+q-1-s_{k},s_2+q-1-s_{k},\ldots, s_{k-1}+q-1-s_{k}]^\T$. Since $[s_1,s_2,\ldots, s_{k}]^{\sT}$ is nondecreasing and $s_{k}\le q-1$, its image $g(\bms)$ is also a nondecreasing vector in $\Sigma_q^k$ and thus belongs to $\Phi_{q,k}$. The mapping $g$ is a bijection, with its inverse given by
    \begin{equation}\label{eq_inversesubg}
        g^{-1}\parenv{\sparenv{s_1,s_2,\ldots,s_{k}}^\T}=\sparenv{s_2-s_1,\ldots,s_{k}-s_1,q-1-s_{1}}^\T
    \end{equation}
    for any $\sparenv{s_1,s_2,\ldots,s_{k}}^\T\in\Phi_{q,k}$. Both $g$ and $g^{-1}$ extend component-wise to bijections on $\Phi_{q,k}^n$.

    Let $\cC\subseteq\Phi_{q,k}^n$ be an $\bme_i$-CECC and $\cC^\prime=g(\cC)$. Next, we prove that $\cC^\prime$ is an $\bme_{i+1}$-CECC. Suppose that $\bmc^\prime\in\cC^\prime$ is transmitted and the $(i+1)$-th channel introduces a substitution, yielding output $\bmy^\prime$. We need to decode $\bmc^\prime$ from $\bmy^\prime$. Let $\bmc=g^{-1}\parenv{\bmc^\prime}$. It suffices to recover $\bmc$ from $\bmy^\prime$.
    
    Since $i+1\ge2$, the first row of $\bmy^\prime$ is error-free. Thus, it follows from \eqref{eq_inversesubg} that if $\bmy^\prime$ is a sequence in $\Phi_{q,k}^n$, then $g^{-1}\parenv{\bmy^\prime}$ is obtained from $\bmc$ by a substitution error in the $i$-th row. As $\cC$ is an $\bme_i$-CECC, we can decode $\bmc$ from $g^{-1}\parenv{\bmy^\prime}$.
    
    Now suppose that $\bmy^\prime$ is not a sequence in $\Phi_{q,k}^n$. Then there is exactly one index $m\in[n]$ such that $\bmc[m]$ is not a nondecreasing vector. This implies that the substitution error occurred in the $m$-th column. Construct a sequence $\bmy$ as follows:
    \begin{itemize}
        \item For $j\in[n]\setminus\{m\}$, let $\bmy[j]=g^{-1}\parenv{\bmy^\prime[j]}$.
        \item For the column $\bmy[m]$, set $\bmy_s[m]=\bmy^\prime_{s+1}[m]-\bmy^\prime_1[m]$ for $s\in\mathset{1,2,\ldots,k-1}\setminus\{i\}$; set $\bmy_{k}[m]=q-1-\bmy^\prime_1[m]$; let $\bmy_i[m]$ be any symbol in $\Sigma_q$.
    \end{itemize}
    Since $\bmy^\prime_1[m]=\bmc^\prime_1[m]$, \Cref{eq_inversesubg} implies $\bmy_s[m]=\bmc_s[m]$ for all $s\ne i$. It is clear that $\bmy$ is obtained from $\bmc$ by at most one substitution in the $i$-th row. As $\cC$ is an $\bme_i$-CECC, we can recover $\bmc$ from $g^{-1}\parenv{\bmy^\prime}$. 

    In both cases, $\bmc$ can be decoded. Therefore, $\cC^\prime$ is an $\bme_{i+1}$-CECC. A symmetric argument using $g^{-1}$ shows that $g^{-1}(\cC)\in\mathscr{C}_{q,k}^S\parenv{n;\bme_{i}}$ for any $\mathscr{C}_{q,k}^S\parenv{n;\bme_{i+1}}$, establishing the equivalence.
\end{IEEEproof}

We now turn to $\parenv{e_1,e_2,\ldots,e_{k}}$-CDCCs. The following proposition, analogous to \Cref{pro_subreverse}, has a similar proof.

\begin{proposition}\label{prop_delreverse}
   The two classes of codes $\mathscr{C}_{q,k}^D\parenv{n;e_1,e_2,\ldots,e_{k}}$ and $\mathscr{C}_{q,k}^D\parenv{n;e_{k},e_{k-1},\ldots,e_1}$ are equivalent for any $\parenv{e_1,e_2,\ldots,e_{k}}\in\mathbb{N}^k$.
\end{proposition}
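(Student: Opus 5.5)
We use the same map as in the proof of \Cref{pro_subreverse}: for $\bms=\sparenv{s_1,\ldots,s_k}^\T\in\Phi_{q,k}$, set $g(\bms)=\sparenv{q-1-s_k,\ldots,q-1-s_1}^\T$ and extend $g$ componentwise to $\Phi_{q,k}^n$. As before, $g$ is an involution on $\Phi_{q,k}$, so it is a bijection on $\Phi_{q,k}^n$ with $g^{-1}=g$. Because $g^{-1}=g$ and the two parameter tuples are reverses of each other, it suffices to show one thing: if $\cC$ is an $\parenv{e_1,\ldots,e_k}$-CDCC, then $\cC^\prime=g(\cC)$ is an $\parenv{e_k,\ldots,e_1}$-CDCC. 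The preimage condition then follows by applying the same statement with the tuple reversed.

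Row by row, $g$ acts as follows. The $i$-th row of $g(\bmc)$ is $\bmc^\prime_i=q-1-\bmc_{k+1-i}$, where $q-1-\cdot$ is applied symbolwise. The key step is that this symbolwise complement $\bmx\mapsto q-1-\bmx$ commutes with deletions. Precisely, if $\bmy$ is obtained from $\bmx$ by deleting the entries at positions $I$, then $q-1-\bmy$ is obtained from $q-1-\bmx$ by deleting the same positions $I$. The number of deletions is therefore unchanged. This holds because the map acts independently on each symbol and preserves length and order.

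Now suppose $\bmc^\prime=g(\bmc)$ is transmitted, and channel $i$ outputs $\bmy^\prime_i$, obtained from $\bmc^\prime_i$ by $e_{k+1-i}$ deletions. Define $\bmy_i=q-1-\bmy^\prime_{k+1-i}$ for each $i\in[k]$. By the observation above, $\bmy_i$ is obtained from $q-1-\bmc^\prime_{k+1-i}=\bmc_i$ by exactly $e_i$ deletions. Hence $\parenv{\bmy_1,\ldots,\bmy_k}$ is a valid output of an $\parenv{e_1,\ldots,e_k}$-composite-deletion applied to $\bmc$. Since $\cC$ is an $\parenv{e_1,\ldots,e_k}$-CDCC, the decoder of $\cC$ recovers $\bmc$, and then $\bmc^\prime=g(\bmc)$.

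There is no real obstacle here. Unlike \Cref{pro_subequal}, the channel outputs need not form valid columns after the error, since rows have different lengths. The only point to check is therefore that the transformation is applied row by row (reverse the order of the rows, complement each symbol) rather than column by column. Once that is done, no column-validity issue arises.
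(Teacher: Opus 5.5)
Your proof is correct and matches the paper's intended argument. The paper gives no separate proof and defers to \Cref{pro_subreverse}; you reuse the same involution $g$ (reverse the rows, complement each symbol), note that the symbolwise complement commutes with deletions, so $\bmy_i=q-1-\bmy^\prime_{k+1-i}$ is an output of $\bmc_i$ after exactly $e_i$ deletions, and get the preimage condition from $g^{-1}=g$ applied to the reversed tuple.
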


For any $\bmx\in\Sigma_q^n$ and $t\ge0$, let $\cD_t(\bmx)$ be the set of all sequences obtained from $\bmx$ by deleting $t$ symbols. The next proposition parallels \Cref{pro_subequal}.
\begin{proposition}\label{prop_delequal}
    The two families $\mathscr{C}_{q,k}^D\parenv{n;s\cdot\bme_i}$ and $\mathscr{C}_{q,k}^D\parenv{n;s\cdot\bme_j}$ are equivalent for any $i,j\in[k]$ and any $s\ge1$.
\end{proposition}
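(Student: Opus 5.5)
We plan to follow the proof of \Cref{pro_subequal} closely, using the same bijection $g:\Phi_{q,k}\to\Phi_{q,k}$,
\begin{equation*}
g\parenv{[s_1,\ldots,s_k]^\T}=[q-1-s_k,\,s_1+q-1-s_k,\,\ldots,\,s_{k-1}+q-1-s_k]^\T ,
\end{equation*}
extended column-wise to $\Phi_{q,k}^n$. Its inverse is given by \eqref{eq_inversesubg}. By transitivity and \Cref{prop_delreverse}, it suffices to show that for each $1\le i\le k-1$ the map $g$ sends every $s\cdot\bme_i$-CDCC to an $s\cdot\bme_{i+1}$-CDCC. The reverse direction, that $g^{-1}$ sends $s\cdot\bme_{i+1}$-CDCCs to $s\cdot\bme_i$-CDCCs, is handled by the symmetric argument.

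Let $\cC$ be an $s\cdot\bme_i$-CDCC, and let $\bmc'=g(\bmc)$ be transmitted with $\bmc\in\cC$. Under an $s\cdot\bme_{i+1}$-composite-deletion, only row $i+1$ is affected, so we receive $\bmc'_1,\ldots,\bmc'_k$ intact except for $\bmy'_{i+1}\in\cD_s(\bmc'_{i+1})$. By \eqref{eq_inversesubg}, the rows of $\bmc$ are
\begin{equation*}
\bmc_j=\bmc'_{j+1}-\bmc'_1 \ (j\le k-1), \qquad \bmc_k=q-1-\bmc'_1,
\end{equation*}
computed entrywise. Row $1$ of $\bmc'$ is error-free because $i+1\ge2$, and every row of $\bmc'$ other than row $i+1$ is known. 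Hence every row $\bmc_j$ with $j\ne i$ can be computed exactly. Only row $i$ of $\bmc$ is unknown, and it equals $\bmc'_{i+1}-\bmc'_1$.

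The key step is to produce from $\bmy'_{i+1}$ a valid channel output for row $i$ of $\bmc$, namely some $\bmy_i\in\cD_s(\bmc_i)$. Unlike the substitution case, the deletion positions are unknown, so we cannot simply subtract $\bmc'_1$ position-wise. This is the main obstacle. We intend to exploit the fact that $\cC$ corrects deletions in row $i$ while all other rows of $\bmc$ are known, including rows $i-1$ and $i+1$, which sandwich row $i$ columnwise. Concretely, we will argue that the decoder of $\cC$ is equivalent to a map recovering $\bmc$ from the data $\parenv{\bmc_j}_{j\ne i}$ together with any element of $\cD_s(\bmc_i)$. Equivalently, for codewords $\bmc\neq\tilde{\bmc}$ in $\cC$ that agree on all rows $j\ne i$, we have $\cD_s(\bmc_i)\cap\cD_s(\tilde\bmc_i)=\emptyset$.

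It therefore suffices to prove the following: if $\bmc,\tilde\bmc$ agree outside row $i$, then $\cD_s(\bmc'_{i+1})\cap\cD_s(\tilde\bmc'_{i+1})\ne\emptyset$ implies $\cD_s(\bmc_i)\cap\cD_s(\tilde\bmc_i)\ne\emptyset$. Note that $\bmc'_1=\tilde\bmc'_1$, since both equal $q-1-\bmc_k$ (if $i=k$ is excluded, as it is here). We then have
\begin{equation*}
\bmc'_{i+1}=\bmc_i+\bmu, \qquad \tilde\bmc'_{i+1}=\tilde\bmc_i+\bmu
\end{equation*}
with the same shift $\bmu=\bmc'_1$. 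The shift is not constant, so a common subsequence of the shifted rows need not come from a common subsequence of the original rows. We would try to handle this by a direct combinatorial argument. Take a common subsequence $\bmz$ of the shifted rows, obtained via index sets $I,J$ of size $n-s$. Realign $I$ and $J$ by an exchange argument that uses the monotone column constraints $\bmc_{i-1}\le\bmc_i\le\bmc_{i+1}$, which hold for both codewords with identical neighbouring rows, to obtain index sets on which the two $\bmu$-restrictions coincide. If this fails in general, the fallback is to search for a different bijection tailored to deletions, for example a row permutation combined with complementation. The key property needed from such a map is that it preserves row $i$ up to a fixed symbol-wise bijection, so that deletion balls are mapped exactly onto deletion balls.
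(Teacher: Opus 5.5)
The step you flag as ``the main obstacle'' is a genuine gap, and it cannot be closed: the implication you reduce to is false. Take $q=3$, $k=2$, $n=2$, $i=1$ and $\bmc=\begin{pmatrix}0&0\\1&2\end{pmatrix}$, $\tilde{\bmc}=\begin{pmatrix}1&1\\1&2\end{pmatrix}$. Both are valid and agree outside row $1$, with common shift $\bmu=\bmc'_1=10$. Then $\bmc'_2=10$ and $\tilde{\bmc}'_2=21$ share the subsequence $1$, whereas $\cD_1(00)=\{0\}$ and $\cD_1(11)=\{1\}$ are disjoint. So $\{\bmc,\tilde{\bmc}\}$ is an $\bme_1$-CDCC whose image under $g$ is not an $\bme_2$-CDCC. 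No realignment argument using the sandwich $\bmc_{i-1}\le\bmc_i\le\bmc_{i+1}$ can repair this. The converse direction fails as well: first rows $01$ and $12$ over the common second row $12$ are confusable, while their images $11$ and $22$ are not. The paper's own proof makes exactly this leap in one sentence (``$\bms_i$ and $\bmc_i$ share a common subsequence of length $n-1$'') without justification, and it fails for the same reason.

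Your fallback of searching for another bijection cannot succeed either, because the statement itself fails for $k\ge 3$. Apply the definition of equivalence to two-element codes. This shows that any admissible $g$ must be a graph isomorphism between the confusability graphs $G_i$ and $G_j$ on $\Phi_{q,k}^n$. In $G_i$, two sequences are adjacent iff they agree outside the affected row and the deletion balls of that row intersect. The connected components are the blocks of sequences sharing all unaffected rows, since Hamming neighbours inside a block are always single-deletion confusable.

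Now take $q=2$, $k=3$, $n=4$, $s=1$. A four-vertex block of $G_1$ has row $1$ ranging over the words supported on a fixed $2$-set. It is $K_4$ minus an edge: the two weight-one words both reduce to $000$, and only the weight-$0$ and weight-$2$ words are non-adjacent. In $G_2$, consider the block with $\bmc_1=0010$ and $\bmc_3=1011$, where row $2$ ranges over $\{0010,0011,1010,1011\}$. The words $0010$ and $1011$ have weights differing by $2$. Also $\cD_1(0011)=\{001,011\}$ is disjoint from $\cD_1(1010)=\{010,100,101,110\}$. So this block is a $4$-cycle.

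Hence $G_1\not\cong G_2$, and $\mathscr{C}_{2,3}^D\parenv{4;\bme_1}$ and $\mathscr{C}_{2,3}^D\parenv{4;\bme_2}$ are not equivalent. The case $j=k+1-i$ from \Cref{prop_delreverse} remains valid, but the proposition as stated has to be weakened, not merely reproved.
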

\begin{IEEEproof}
  We prove this proposition for $s=1$, since the proof for general $s$ is similar. Furthermore, by \Cref{prop_delreverse}, it is sufficient to prove that for any $1\le i\le k-1$, $\mathscr{C}_{q,k}^D\parenv{n;\bme_i}$ and $\mathscr{C}_{q,k}^D\parenv{n;\bme_{i+1}}$ are equivalent.

  Let $\cC\subseteq\Phi_{q,k}^n$ be an $\bme_i$-CDCC and $\cC^\prime=g(\cC)$, where $g$ is the bijection from $\Phi_{q,k}^n$ to $\Phi_{q,k}^n$ defined in the proof of Proposition \ref{pro_subequal}. We will prove that $\cC^\prime$ is an $\bme_{i+1}$-CDCC.

  Suppose that the codeword $\bmc^\prime=g(\bmc)$ is transmitted and the $(i+1)$-th channel introduces one deletion, giving output $\bmy^\prime$. Then we have $\bmy^\prime_s=\bmc^\prime_s$ for all $s\ne i+1$, and that $\bmy^\prime_{i+1}\in\cD_1\parenv{\bmc^\prime_{i+1}}$. Since $i+1\ge 2$, we claim that there is no error occurring in $\bmc_1^\prime$. Thus, by \eqref{eq_inversesubg}, we can recover $\bmc_1, \bmc_2, \ldots, \bmc_{i-1}, \bmc_{i+1}, \ldots, \bmc_{k}$ from those $\bmy^\prime_s$, where $s\in[k]\setminus\{i+1\}$.
  
  Define $M=\{ \bms\in \cC: \bms_s=\bmc_s,\forall s\ne i\}$
  and $M^\prime=g(M)$. It is clear that $\bmc\in M$ and hence, $\bmc^\prime\in M^\prime$. Recall that $\bmy^\prime_{i+1}\in\cD_1\parenv{\bmc^\prime_{i+1}}$. We claim that $\bmc^\prime$ is the unique sequence in $M^\prime$ with this property and therefore, $\bmc^\prime$ can be uniquely decoded. Indeed, if there is another sequence $\bms^\prime\in M^\prime$ with this property, let $\bms=g^{-1}\parenv{\bms^\prime}$. Then $\bms_s=\bmc_s$ for all $s\ne i$. Furthermore, $\bms_i$ and $\bmc_i$ share a common subsequence of length $n-1$. This contradicts the assumption that $\cC$ is an $\bme_i$-CDCC.

  The above argument proves that $\cC^\prime$ is an $\bme_{i+1}$-CDCC. The converse direction follows symmetrically by applying $g^{-1}$, completing the proof of equivalence.
\end{IEEEproof}

For the remainder of this paper, let $A_{q,k}^S\parenv{n;\parenv{e_1,e_2,\ldots,e_{k}}}$ and $A_{q,k}^D\parenv{n;\parenv{e_1,e_2,\ldots,e_{k}}}$ denote the maximum possible size of an $\parenv{e_1,e_2,\ldots,e_{k}}$-CECC and CDCC, respectively. As an immediate consequence of \Cref{pro_subreverse,pro_subequal,prop_delreverse,prop_delequal}, we have the following result.
\begin{corollary}
    For any $\parenv{e_1,e_2,\ldots,e_{k}}\in\mathbb{N}^k$, any $s\ge1$ and any $i,j\in[k]$, we have
    \begin{enumerate}[$(i)$]
        \item $A_{q,k}^S\parenv{n;\parenv{e_1,e_2,\ldots,e_{k}}}=A_{q,k}^S\parenv{n;\parenv{e_{k},e_{k-1},\ldots,e_{1}}}$ and $A_{q,k}^S\parenv{n;s\cdot\bme_i}=A_{q,k}^S\parenv{n;s\cdot\bme_j}$;
        \item $A_{q,k}^D\parenv{n;\parenv{e_1,e_2,\ldots,e_{k}}}=A_{q,k}^D\parenv{n;\parenv{e_{k},e_{k-1},\ldots,e_{1}}}$ and $A_{q,k}^D\parenv{n;s\cdot\bme_i}=A_{q,k}^D\parenv{n;s\cdot\bme_j}$.
    \end{enumerate}
\end{corollary}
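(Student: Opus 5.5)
The corollary is a direct translation of the four equivalence propositions into statements about maximum code sizes. The plan is to isolate one general fact: if two families are equivalent via a bijection $g$, then their maximum sizes coincide.

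First, let $\mathscr{F}$ and $\mathscr{F}'$ be families of codes in $\Phi_{q,k}^n$ that are equivalent via a bijection $g:\Phi_{q,k}^n\to\Phi_{q,k}^n$. Take a code $\cC\in\mathscr{F}$ of maximum size. By the first defining property of equivalence, $g(\cC)\in\mathscr{F}'$. Since $g$ is injective, $\abs{g(\cC)}=\abs{\cC}$, so $\max_{\mathscr{F}'}\abs{\cdot}\ge\max_{\mathscr{F}}\abs{\cdot}$. Symmetrically, let $\cC'\in\mathscr{F}'$ be of maximum size. By the second property, $g^{-1}(\cC')\in\mathscr{F}$ and $\abs{g^{-1}(\cC')}=\abs{\cC'}$, which gives the reverse inequality. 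Both maxima exist because each family is a nonempty finite collection: $\Phi_{q,k}^n$ is finite, and any single codeword forms a code that corrects every error pattern.

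Second, apply this fact in turn to the four propositions.
\begin{itemize}
\item \Cref{pro_subreverse} gives the reversal identity in $(i)$.
\item \Cref{pro_subequal} gives $A_{q,k}^S\parenv{n;s\cdot\bme_i}=A_{q,k}^S\parenv{n;s\cdot\bme_j}$.
\item \Cref{prop_delreverse} gives the reversal identity in $(ii)$.
\item \Cref{prop_delequal} gives the $s\cdot\bme_i$ versus $s\cdot\bme_j$ identity for CDCCs.
\end{itemize}

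There is no real obstacle. The only point needing care is that the equivalence definition asks for both the image condition and the preimage condition, and both are needed to get equality rather than a one-sided inequality.
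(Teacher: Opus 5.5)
Your reduction (an equivalence via a bijection forces equal maximum sizes, then apply the four propositions) is exactly the route the paper takes. It correctly yields both identities in $(i)$ and the reversal identity in $(ii)$. The gap is the last identity, $A_{q,k}^D(n;s\cdot\bme_i)=A_{q,k}^D(n;s\cdot\bme_j)$. It rests entirely on \Cref{prop_delequal}, and that proposition is false in general, so the step fails in your write-up just as in the paper.

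Being a CDCC is a pairwise condition, so an equivalence must be an isomorphism of the confusability graphs. In these graphs, distinct words are adjacent iff they agree outside row $i$ and their $i$-th rows share a subsequence of length $n-1$. The connected components are the classes of words with all rows other than $i$ fixed. Take $q=2$, $k=3$, $n=4$, $s=1$.
\begin{itemize}
\item For $\bme_1$, every $4$-vertex component fixes some $\bmc_2$ of weight $2$ (and $\bmc_3\ge\bmc_2$) and lets $\bmc_1\le\bmc_2$ vary, with zeros forced off the support. Each such component is a $4$-cycle plus exactly one diagonal; for example, with $\bmc_2=1001$, the words $0001,1000$ share $000$, while $0000,1001$ share nothing.
\item For $\bme_2$, the class $\bmc_1=0010$, $\bmc_3=1011$ has $\bmc_2\in\{0010,0011,1010,1011\}$. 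It is a bare $4$-cycle, since $\cD_1(0010)\cap\cD_1(1011)=\cD_1(0011)\cap\cD_1(1010)=\emptyset$.
\end{itemize}
The two graphs are therefore not isomorphic, and no bijection can satisfy the two defining properties of equivalence.

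The paper's proof breaks at the claim that $\bms_i$ and $\bmc_i$ share a common subsequence of length $n-1$. Row $i+1$ of $g(\bmc)$ is $\bmc_i$ plus the column-dependent offset $q-1-\bmc_k$. That offset preserves Hamming distance, which is why \Cref{pro_subequal} is fine, but it does not preserve deletion-confusability, because a deletion misaligns the columns. Already for $k=2$, the words with $\bmc_1=0011$, $\tilde{\bm{c}}_1=1010$ and $\bmc_2=\tilde{\bm{c}}_2=1011$ have non-confusable first rows, yet the second rows of their images, $0111$ and $1110$, share $111$. For $k=2$ the identity itself still holds, via \Cref{prop_delreverse}.

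In the $n=4$ example above the two maxima happen to coincide, since both component types have independence number $2$. In general, though, the largest code inside a class depends on the forced values. Among the words $a01b01c$ there are $4$ pairwise non-confusable ones, but among $a00b00c$ only $3$. So this identity would need a separate argument comparing the weighted sums of class-wise maxima, not an equivalence of families.
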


\section{Upper Bounds on Code Sizes}\label{sec_upperbound}
%%%%%%%%%%%%%%%%%%%%%%%%%%%%%%
This section establishes upper bounds on the maximum possible size of composite-error correcting codes (CECCs) and composite-deletion correcting codes (CDCCs).
\subsection{Bounds on Codes Correcting Substitutions}
%%%%%%%%%%%%%%%%%%%%%%%%%%%%%%%%%%%%%%%%%%%%%%%%%
For a sequence $\bms\in\Phi_{q,k}^n$, a tuple $\parenv{e_1,e_2,\ldots,e_{k}}\in\mathbb{N}^k$ and an integer $e\ge 1$, we define two error balls centered at $\bms$:
\begin{equation*}
   \begin{array}{c}
         \cB_{\parenv{e_1,e_2,\ldots,e_{k}}}^{(q,k)}(\bms)\triangleq\mathset{\bmy\in\Phi_{q,k}^n:\dth{\bmy_i,\bms_i}\le e_i,\forall i\in[k]},\\
         \cB_{e}^{(q,k)}(\bms)\triangleq\mathset{\bmy\in\Phi_{q,k}^n:\sum_{i=1}^{k}\dth{\bmy_i,\bms_i}\le e}.
   \end{array}
\end{equation*}
In other words, $\cB_{\parenv{e_1,e_2,\ldots,e_{k}}}^{(q,k)}(\bms)$ consists of all \emph{valid} sequences obtained from $\bms$ by introducing at most $e_i$ substitutions in the $i$-th row, while $\cB_{e}^{(q,k)}(\bms)$ consists of all \emph{valid} sequences obtained from $\bms$ by introducing a total of at most $e$ substitutions.

\begin{remark}
    Applying an $\parenv{e_1,e_2,\ldots,e_{k}}$-composite-error (or an $e$-composite-error) to $\bms$ does not necessarily yield a valid sequence $\bmy$. Our definition includes only valid sequences in error balls. This enables estimating their sizes and consequently, deriving nontrivial upper bounds on code sizes.
\end{remark}

We begin with a fundamental lemma that quantifies the sizes of these error balls.
\begin{lemma}\label{lem_subballsize}
For any two integers  $q,k\ge2$, we have the following results.
    \begin{enumerate}[$(i)$]
        \item Let $\sparenv{
        a_1,\cdots,a_{k}}^\T\in\Phi_{q,k}$. Then
    \begin{align}
        \abs{\cB_{1}^{(q,k)}(\sigma)\setminus\{\sigma\}}=q-1+a_{k}-a_1,\label{eq_subsize1}\\
        \abs{\cB_{(1,1,\ldots,1)}^{(q,k)}(\sigma)\setminus\{\sigma\}}=\sum_{l=1}^{q-1}\binom{l+k-1}{l}.\label{eq_subsize2}
    \end{align}
    \item Let $e\ge 1$ and $0\le l\le q-1$ be two integers. Given an $\bms\in\Phi_{q,k}^n$ such that $\bms_{k}[i]-\bms_1[i]\ge l$ for at least $m$ indices $i\in[n]$, where $e\le m\le n$, we have
    \begin{equation}\label{eq_subsize3}
        \abs{\cB_{e}^{(q,k)}(\bms)}\ge\binom{m}{e}(q-1+l)^e+1.
    \end{equation}
    \item If $e_1\ge e_2\ge\cdots\ge e_{k}\ge1$, then for every $\bms\in\Phi_{q,k}^n$,
    \begin{equation}\label{eq_subsize4}
        \abs{\cB_{(e_1,\ldots,e_{k})}^{(q,k)}(\bms)}\ge\binom{n}{e_{k}}\sparenv{\sum_{l=1}^{q-1}\binom{l+k-1}{l}}^{e_{k}}+1.
    \end{equation}
    \end{enumerate}
\end{lemma}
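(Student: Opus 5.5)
For part $(i)$ the plan is to count directly the valid columns within Hamming distance one of $\sigma=[a_1,\ldots,a_k]^\T$. A single substitution changes one entry $a_j$ to some $b\neq a_j$. The result stays nondecreasing exactly when $a_{j-1}\le b\le a_{j+1}$, with the conventions $a_0=0$ and $a_{k+1}=q-1$. So position $j$ contributes $a_{j+1}-a_{j-1}$ choices of $b$ other than $a_j$.

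This count must not double-count. If two different substitutions gave the same column $\tau$, then $\tau$ would differ from $\sigma$ in two positions, which is impossible at distance one. Summing the contributions telescopes:
\[
\sum_{j=1}^{k}(a_{j+1}-a_{j-1})=a_{k+1}+a_k-a_1-a_0=q-1+a_k-a_1 ,
\]
which is \eqref{eq_subsize1}.

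For \eqref{eq_subsize2}, every valid column differs from $\sigma$ in at most $k$ rows, each by at most one substitution. Hence $\cB_{(1,\ldots,1)}^{(q,k)}(\sigma)=\Phi_{q,k}$, and the count equals $Q_{q,k}-1=\binom{k+q-1}{q-1}-1$. The hockey-stick identity gives
\[
\sum_{l=0}^{q-1}\binom{l+k-1}{l}=\binom{q+k-1}{q-1},
\]
and removing the $l=0$ term, which equals $1$, yields exactly the sum claimed in \eqref{eq_subsize2}.

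For part $(ii)$, fix a set $I$ of $m$ columns with $\bms_k[i]-\bms_1[i]\ge l$. By \eqref{eq_subsize1}, each such column has at least $q-1+l$ valid neighbours at distance exactly one. Choose $e$ distinct columns from $I$ and replace each by one of its distance-one valid neighbours; this produces a valid sequence at total distance exactly $e$. Different choices of the column set, or of the neighbours, give different sequences, because the set of changed columns and their new values can be read off from the resulting sequence. This gives at least $\binom{m}{e}(q-1+l)^e$ sequences at distance exactly $e$, all distinct from $\bms$. Adding $\bms$ itself gives \eqref{eq_subsize3}, and $m\ge e$ ensures the construction is possible.

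For part $(iii)$, choose any $e_k$ columns and replace each by an arbitrary valid column different from the original. By \eqref{eq_subsize2} there are $\sum_{l=1}^{q-1}\binom{l+k-1}{l}$ choices per column. Each row then changes in at most $e_k\le e_i$ positions, so the result lies in $\cB_{(e_1,\ldots,e_k)}^{(q,k)}(\bms)$. Distinctness follows as in $(ii)$, since the changed columns are recoverable from the result. Adding $\bms$ gives \eqref{eq_subsize4}. The only step needing real care is the validity bookkeeping in $(i)$, namely the boundary conventions $a_0,a_{k+1}$ and the no-double-counting observation; everything else is routine.
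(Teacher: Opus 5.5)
Your proof is correct and follows the paper's argument. You use the same count for \eqref{eq_subsize1}: boundary conventions $a_0=0$, $a_{k+1}=q-1$, then a telescoping sum. For parts $(ii)$ and $(iii)$ you use the same construction of choosing $e$ (resp.\ $e_k$) columns and perturbing each.

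The only difference is cosmetic, in \eqref{eq_subsize2}. You note that the ball is all of $\Phi_{q,k}$ and apply the hockey-stick identity; the paper reaches the same sum by conditioning on the last entry $a_k'$, which is just a proof of that identity.
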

\begin{IEEEproof}
    Consider a vector $\sigma^\prime=\begin{bmatrix}
        a_1^\prime,\cdots,a_{k}^\prime
    \end{bmatrix}^\T\in\cB_{1}^{(q,k)}(\sigma)\setminus\{\sigma\}$. It differs from $\sigma$ in exactly one component. Suppose that the single substitution occurs at position $i$, where $1\le i\le k$. For $\sigma^\prime$ to remain nondecreasing, the new value $a_i^\prime$ must satisfy $a_{i-1}^\prime\le a_i^\prime\le a_{i+1}^\prime$, where we set $a_{0}=0$ and $a_{k+1}=q-1$. Excluding the original value $a_i$, there are exactly $a_{i+1}-a_{i-1}$ possible choices for $a_i^\prime$. Summing over all 
    $i$ gives $\abs{\cB_{1}^{(q,k)}(\sigma)\setminus\{\sigma\}}=\sum_{i=1}^{k}(a_{i+1}-a_{i-1})=q-1+a_{k}-a_1$, which establishes \eqref{eq_subsize1}.
    
    To prove \eqref{eq_subsize2}, consider $\sigma^\prime=\sparenv{a_1^\prime,\cdots,a_{k}^\prime}^\T\in\cB_{(1,1,\ldots,1)}^{(q,k)}(\sigma)\setminus\{\sigma\}$. For each fixed value $b\in\Sigma_q$ for $a_{k}^\prime$, for $\sigma^\prime$ to be nondecreasing, the tuple $\sparenv{a_1^\prime,\cdots,a_{k-1}^\prime}^\T$ must be a nondecreasing vector over $\Sigma_{b+1}$. The number of such tuples is $\binom{b+k-1}{b}$. Summing over all $b\in\Sigma_q$ and excluding the original symbol $\sigma$, we obtain that $\abs{\cB_{(1,1,\ldots,1)}^{(q,k)}(\sigma)\setminus\{\sigma\}}=\sum_{l=0}^{q-1}\binom{l+k-1}{l}-1$.

    Thus part (i) is proved. Parts (ii) and (iii) follow directly from the bounds given in (i) by selecting $e$ (or $e_{k-1}$) columns where the errors occur.
\end{IEEEproof}

Upper bounds on the sizes of CECCs are presented in \Cref{thm_boundCECC1,thm_boundCECC2,thm_boundCECC3}. Except for \eqref{eq_subupperbound1} and \eqref{eq_subupperbound2}, all bounds are derived using a common strategy: for a chosen subset $\cA\subseteq\Phi_{q,k}$, partition the code $\cC$ into $\cC_1$ and $\cC_2$, where $\cC_1$ consists of codewords that contain a ``large" number of symbols from $\cA$ and $\cC_2=\cC\setminus\cC_1$. This partition is designed so that error balls centered at codewords in $\cC_1$ are sufficiently large and $\abs{\cC_2}=o\parenv{\abs{\cC_1}}$. Consequently, $\abs{\cC}$ is bounded above by $\abs{\cC_1}(1+o(1))$. The key steps lie in constructing an appropriate subset $\cA$ and estimating lower bounds on the sizes of error balls centered at codewords in $\cC_1$. Details are provided in the proofs of \eqref{eq_subupperbound3}, \eqref{eq_subuppergeneral} and \eqref{eq_subupperbound4}--\eqref{eq_subupperbound6}. 

For estimating the size of $\abs{\cC_2}$, we employ the Hoeffding's inequality introduced below.
\begin{lemma}[Hoeffding's inequality]\cite[Theorem 2.1]{LucGabor2001}\label{lem_hoeffding}
    Let $X_1,\ldots,X_n$ be $n$ mutually independent random variables with $a\le X_i\le b$ for all $i$. Let $X=\sum_{i=1}^nX_i$. Then for all $t>0$, we have
    \begin{equation*}
        \Pr[X-\E[X]\le -t]\le\e^{-\frac{2t^2}{\sum_{i=1}^{n}(b_i-a_i)^2}}.
    \end{equation*}
\end{lemma}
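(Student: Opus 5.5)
The statement is the classical Hoeffding inequality for independent random variables $X_i$ with $a_i\le X_i\le b_i$. The quantity $\sum_i(b_i-a_i)^2$ in the exponent indicates that the bounds are allowed to depend on $i$, and with a common range $[a,b]$ it becomes $n(b-a)^2$. I would follow the standard Chernoff--Hoeffding route.

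By symmetry it suffices to control the lower tail. Applying the upper-tail bound to the variables $-X_i$, which lie in $[-b_i,-a_i]$ and have the same range widths, gives exactly the stated inequality. So I will prove
\[
\Pr[X-\E[X]\ge t]\le \e^{-2t^2/\sum_i (b_i-a_i)^2}.
\]

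For any $\lambda>0$, Markov's inequality applied to $\e^{\lambda(X-\E[X])}$ gives
\[
\Pr[X-\E[X]\ge t]\le \e^{-\lambda t}\,\E\bigl[\e^{\lambda(X-\E[X])}\bigr].
\]
Mutual independence lets the expectation factor as $\prod_i \E\bigl[\e^{\lambda(X_i-\E[X_i])}\bigr]$.

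The key ingredient is Hoeffding's lemma. If $Y$ has mean zero and $a'\le Y\le b'$, then $\E[\e^{\lambda Y}]\le \e^{\lambda^2(b'-a')^2/8}$. I would prove it in two steps.

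\begin{enumerate}[(1)]
\item \textbf{Convexity.} Since $y\mapsto\e^{\lambda y}$ is convex, for $y\in[a',b']$
\[
\e^{\lambda y}\le \frac{b'-y}{b'-a'}\e^{\lambda a'}+\frac{y-a'}{b'-a'}\e^{\lambda b'}.
\]
Taking expectations and using $\E[Y]=0$ yields $\E[\e^{\lambda Y}]\le \e^{\varphi(u)}$. Here $u=\lambda(b'-a')$, $p=-a'/(b'-a')$, and $\varphi(u)=-pu+\ln(1-p+p\e^{u})$.
\item \textbf{Taylor bound.} One checks $\varphi(0)=\varphi'(0)=0$. Also $\varphi''(u)=\theta(1-\theta)\le 1/4$, where $\theta=p\e^u/(1-p+p\e^u)\in[0,1]$. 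Taylor's theorem then gives $\varphi(u)\le u^2/8$.
\end{enumerate}

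This lemma is the main technical step. Everything else is routine optimization.

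Applying the lemma to $Y_i=X_i-\E[X_i]$, whose range width is still $b_i-a_i$, gives
\[
\Pr[X-\E[X]\ge t]\le \exp\Bigl(-\lambda t+\tfrac{\lambda^2}{8}\textstyle\sum_i(b_i-a_i)^2\Bigr).
\]
Choosing $\lambda=4t/\sum_i(b_i-a_i)^2$ minimizes the exponent and produces $\e^{-2t^2/\sum_i(b_i-a_i)^2}$, which completes the argument.
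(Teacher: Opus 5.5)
Your proposal is correct. It is the standard argument: a Chernoff bound, factorization by independence, and Hoeffding's lemma proved by convexity plus the Taylor bound $\varphi''=\theta(1-\theta)\le 1/4$, followed by the choice $\lambda=4t/\sum_i(b_i-a_i)^2$. This is the route of the cited source; the paper itself gives no proof beyond the citation. Your reading of the range as $a_i\le X_i\le b_i$ correctly resolves the statement's notational slip. The only details you leave unstated are trivial: $p\in[0,1]$ because $\E[Y]=0$ forces $a'\le 0\le b'$, and the degenerate cases $b'=a'$ or $\sum_i(b_i-a_i)^2=0$.
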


Recall that $A_{q,k}^S\parenv{n;\parenv{e_1,e_2,\ldots,e_{k}}}$ denotes the maximum possible size of a $q$-ary $\parenv{e_1,e_2,\ldots,e_{k}}$-CECC of length $n$. Let $A_{q,k}^S\parenv{n;e}$ be the maximum size of a $q$-ary $k$-resolution $e$-CECC of length $n$. We are now ready to present the first main result of this section.
\begin{theorem}\label{thm_boundCECC1}
    Let $q,k\ge2$. Recall that $Q_{q,k}=\abs{\Phi_{q,k}}=\binom{k+q-1}{q-1}$. 
    \begin{enumerate}[$(i)$]
        \item For integers $e_1,e_2,\ldots,e_{k}$ satisfying $e_1\ge\cdots\ge e_{k}\ge1$ and integer $e\ge 1$, it holds that
    \begin{equation}\label{eq_subupperbound1}
        A_{q,k}^S(n;(e_1,\ldots,e_{k}))\le\frac{Q_{q,k}^n}{\binom{n}{e_{k}}\sparenv{\sum_{l=1}^{q-1}\binom{l+k-1}{l}}^{e_{k}}+1}
    \end{equation}
    and
    \begin{equation}\label{eq_subupperbound2}
        A_{q,k}^S(n;e)\le\frac{Q_{q,k}^n}{\binom{n}{e}(q-1)^e+1}.
    \end{equation}
    \item For $e\ge 1$ and $1\le l\le q-1$, denote $n_0=Q_{q,k}-(q-l)Q_{l,k-1}-Q_{l,k}$.\footnote{As will be shown in the proof, $n_0$ is a positive integer.} Then
    \begin{equation}\label{eq_subupperbound3}
        A_{q,k}^S(n;e)\le\frac{Q_{q,k}^{n+e}e^e}{\sparenv{n_0(q-1+l)}^en^e}(1+o(1)).
    \end{equation}
    \end{enumerate}
\end{theorem}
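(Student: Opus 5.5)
The plan is to use sphere packing for part $(i)$ and a two-part split of the code for part $(ii)$. Throughout, the standard fact is that the balls $\cB^{(q,k)}_{\cdot}(\bmc)$, $\bmc\in\cC$, are pairwise disjoint subsets of $\Phi_{q,k}^n$ whenever $\cC$ corrects the corresponding error pattern. If two of these balls shared a valid sequence $\bmy$, then $\bmy$ could be the channel output for two different codewords, and decoding would fail.

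\textbf{Part $(i)$.} Apply \Cref{lem_subballsize}$(iii)$ directly. Every ball $\cB^{(q,k)}_{(e_1,\ldots,e_k)}(\bmc)$ has size at least $\binom{n}{e_k}\sparenv{\sum_{l=1}^{q-1}\binom{l+k-1}{l}}^{e_k}+1$. Summing over the disjoint balls and comparing with $Q_{q,k}^n$ gives \eqref{eq_subupperbound1}. For \eqref{eq_subupperbound2}, use \Cref{lem_subballsize}$(ii)$ with $l=0$ and $m=n$. This is valid because $\bms_k[i]-\bms_1[i]\ge0$ holds for every column $i$.

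\textbf{Part $(ii)$, the set $\cA$.} Let $\cA=\mathset{\sigma=[a_1,\ldots,a_k]^\T\in\Phi_{q,k}: a_k-a_1\ge l}$. The first step is to show $\abs{\cA}=n_0$; in particular $n_0\ge1$, since $[0,\ldots,0,q-1]^\T\in\cA$. To count the complement, classify letters with $a_k-a_1\le l-1$ by the value $a=a_1$. The remaining entries form a nondecreasing vector of length $k-1$ over $[a,\min(a+l-1,q-1)]$. For $a\le q-l$ this gives $Q_{l,k-1}$ choices, and for $a=q-l+j$ with $j\in[l-1]$ it gives $Q_{l-j,k-1}$ choices. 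The total is therefore $(q-l+1)Q_{l,k-1}+\sum_{j=1}^{l-1}Q_{j,k-1}=(q-l+1)Q_{l,k-1}+Q_{l-1,k}$. Using $Q_{l,k}=Q_{l,k-1}+Q_{l-1,k}$, this equals $(q-l)Q_{l,k-1}+Q_{l,k}$, so $\abs{\cA}=n_0$.

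\textbf{Part $(ii)$, the split.} Put $p=n_0/Q_{q,k}$ and $m=pn-n^{2/3}$. Let $\cC_1$ be the set of codewords having at least $m$ columns in $\cA$, and let $\cC_2=\cC\setminus\cC_1$.
\begin{itemize}
\item For $\bmc\in\cC_1$, \Cref{lem_subballsize}$(ii)$ gives $\abs{\cB_e^{(q,k)}(\bmc)}\ge\binom{m}{e}(q-1+l)^e\ge (m/e)^e(q-1+l)^e$. Disjointness of the balls then yields
\begin{equation*}
\abs{\cC_1}\le \frac{Q_{q,k}^n e^e}{m^e(q-1+l)^e}=\frac{Q_{q,k}^{n+e}e^e}{\sparenv{n_0(q-1+l)}^en^e}(1+o(1)),
\end{equation*}
because $m=pn(1-o(1))$.
\item For $\cC_2$, bound it by the number of all sequences in $\Phi_{q,k}^n$ with fewer than $m$ columns in $\cA$. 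Draw a sequence uniformly at random and let $X_i$ be the indicator that column $i$ lies in $\cA$. The $X_i$ are independent, take values in $[0,1]$, and satisfy $\E[X]=pn$. \Cref{lem_hoeffding} with $t=n^{2/3}$ gives $\Pr[X<m]\le \e^{-2n^{1/3}}$, so $\abs{\cC_2}\le Q_{q,k}^n\e^{-2n^{1/3}}$. This is $o(Q_{q,k}^n/n^e)$, which is negligible against the main term.
\end{itemize}
Adding the two bounds gives \eqref{eq_subupperbound3}.

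\textbf{Main obstacle.} I expect the only delicate points to be the exact count of $\abs{\cA}$, which produces the stated formula for $n_0$, and choosing the deviation $n^{2/3}$. It must be $o(n)$, so that $m\sim pn$, and also $\omega(\sqrt{n\log n})$, so that the Hoeffding tail beats the polynomial factor $n^e$.
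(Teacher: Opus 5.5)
Your proof is correct and follows the paper's argument. Part (i) is the same sphere-packing step, and part (ii) uses the same set $\cA$ (the paper's $\Phi'_{q,k}$), the same split of $\cC$, and the same Hoeffding bound. The only differences are cosmetic. The paper counts the complement of $\cA$ by conditioning on $a_k$ instead of $a_1$, which gives $(q-l)Q_{l,k-1}+Q_{l,k}$ directly without your hockey-stick and Pascal step. It also uses the deviation $\sqrt{en\ln n}$ (tail $n^{-2e}$) rather than your $n^{2/3}$ (tail $\e^{-2n^{1/3}}$); both choices work.
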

\begin{IEEEproof}
    The upper bounds in \eqref{eq_subupperbound1} and \eqref{eq_subupperbound2} follow from a standard sphere‑packing argument applied to the lower bounds on the ball sizes in \eqref{eq_subsize4} and \eqref{eq_subsize3} (setting $l=0$ and $m=n$), respectively. Next, we prove part (ii).

    Let $\Phi_{q,k}^\prime$ be the subset consisting of all symbols $\sigma=\sparenv{a_1,\cdots,a_{k}}^\T\in\Phi_{q,k}$ that satisfy $a_{k}-a_1\ge l$. We first compute its size. The set $\Phi_{q,k}\setminus\Phi_{q,k}^\prime$ consists of all symbols $\sigma=\sparenv{a_1,\cdots,a_{k}}^\T$ with $a_{k}-a_1<l$.
    A symbol $\sigma$ satisfies $a_{k}-a_1<l$ and $a_{k}<l$ if and only if $\sigma\in\Phi_{l,k}$, which implies that the number of such $\sigma$ is $Q_{l,k}$. For each $l\le i\le q-1$, a symbol $\sigma$ satisfies $a_{k}-a_1<l$ and $a_{k}=i$ if and only if $a_1,\ldots,a_{k-1}\in\mathset{i-l+1,i-l+2,\ldots,i}$. Hence, the number of such $\sigma$ is $Q_{l,k-1}$. Now we can conclude that $\abs{\Phi_{q,k}\setminus\Phi_{q,k}^\prime}=(q-l)Q_{l,k-1}+Q_{l,k}$ and consequently, $\abs{\Phi_{q,k}^\prime}=n_0$. Since $\Phi_{q,k}^\prime$ is not empty, $n_0$ is a positive integer.
    
    Define $m=\frac{n_0n}{Q_{q,k}}-\sqrt{en\ln{n}}$. Let $\cC\subseteq\Phi_{q,k}^n$ be a $k$-resolution $e$-CECC. Split $\cC$ into two parts:
    \begin{equation*}
        \cC_1=\mathset{\bmc\in\cC:\bmc\text{ contains at least }m\text{ symbols from }\Phi_{q,k}^\prime},\quad \cC_2=\cC\setminus\cC_1.
    \end{equation*}
    We bound from above the sizes of $\cC_1$ and $\cC_2$ separately.

    First, it follows from \eqref{eq_subsize3} that $\abs{\cB_{e}^{(q,k)}(\bmc)}\ge\binom{m}{e}(q-1+l)^e+1$ for all $\bmc\in\cC_1$. Notice that $\cC_1$ is also a $k$-resolution $e$-CECC, which implies $\cB_{e}^{(q,k)}(\bmc)\cap\cB_{e}^{(q,k)}(\bmc^\prime)=\emptyset$ for any distinct $\bmc,\bmc^\prime\in\cC_1$. Hence,
    \begin{equation}\label{eq_boundC11}
        \abs{\cC_1}\le\frac{Q_{q,k}^n}{\binom{m}{e}(q-1+l)^e+1}\le\frac{Q_{q,k}^{n+e}e^e}{\sparenv{n_0(q-1+l)}^en^e}(1+o(1)).
    \end{equation}

    Next, we bound the size of $\cC_2$. To this end, we estimate the number of sequences $\bms\in\Phi_{q,k}^n$ containing fewer than $m$ symbols from $\Phi_{q,k}^\prime$. Choose $\bms$ uniformly at random from $\Phi_{q,k}^n$. Define indicator random variables $X_1,\ldots,X_n$ by
    \begin{equation*}
        X_i=
        \begin{cases}
            1,\mbox{ if }\bms[i]\in\Phi_{q,k}^\prime,\\
            0,\mbox{ otherwise},
        \end{cases}
    \end{equation*}
    and set $X=X_1+\cdots+X_n$, which is the number of symbols in $\Phi_{q,k}^\prime$ contained in $\bms$.
    Because $\bms$ is chosen uniformly at random, it holds that each $\bms[i]$ is chosen randomly and uniformly from $\Phi_{q,k}$, and that $X_1,\ldots,X_n$ are mutually independent. Therefore, we have $\E[X]=\sum_{i=1}^{n}\E[X_i]=\sum_{i=1}^{n}\Pr[X_i=1]=\frac{n_0n}{Q_{q,k}}$. Applying \Cref{lem_hoeffding} gives
    \begin{align*}
        \Pr[X<m]=\Pr\sparenv{X-\E[X]<-\sqrt{en\ln{n}}}\le \e^{-\frac{2en\ln{n}}{n}}=\frac{1}{n^{2e}}.
    \end{align*}
    Consequently, we have $\abs{\cC_2}\le\abs{\mathset{\bms\in\Phi_{q,k}^n:\bms\text{ contains less than }m\text{ symbols from }\Phi_{q,k}^\prime}}\le\abs{\Phi_{q,k}^n}\cdot\Pr[X<m]\le\frac{Q_{q,k}^n}{n^{2e}}$. Combining this bound with the bound \eqref{eq_boundC11} for $\abs{\cC_1}$ completes the proof.
\end{IEEEproof}

The bound in \eqref{eq_subupperbound1} is limited as it does not depend on $e_1,\ldots,e_{k-1}$ and relies heavily on the condition $e_1\ge\cdots\ge e_{k}\ge1$. The following theorem provides a sharper bound for sufficiently large $n$ under more general assumptions.
\begin{theorem}\label{thm_boundCECC2}
     Suppose that $k,q\ge 2$. Let $e_1,\ldots,e_{k}$ be $k$ nonnegative integers, and assume that the set of nonzero elements among them is $\mathset{e_{l_1},\ldots,e_{l_m}}$, where $1\le m\le q$ and $l_1<\cdots<l_m$. Define $R=\mathset{1<j<m:l_j-l_{j-1}=1}$ and $e=\sum_{j=1}^me_{l_j}$. Then
     \begin{equation}\label{eq_subuppergeneral}
        A_{q,k}^S(n;(e_1,\ldots,e_{k}))\le\frac{Q_{q,k}^{n+e}\prod_{j=1}^me_{l_j}^{e_{l_j}}}{2^{\abs{R}}(q-1)^{e_{l_m}}\binom{q}{m}^{e-e_{l_m}}n^{e}}\parenv{1+o(1)}.
    \end{equation}
\end{theorem}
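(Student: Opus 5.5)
We follow the partition strategy used for \eqref{eq_subupperbound3}, but with row-dependent column weights. For a valid letter $\sigma=\sparenv{a_1,\ldots,a_k}^\T$ and a row index $i$, the proof of \eqref{eq_subsize1} shows that the number of valid letters obtained from $\sigma$ by substituting only its $i$-th entry is $w_i(\sigma)=a_{i+1}-a_{i-1}$, with the conventions $a_0=0$ and $a_{k+1}=q-1$. For a sequence $\bms\in\Phi_{q,k}^n$ and $j\in[m]$, set $W_j(\bms)=\sum_{t=1}^n w_{l_j}(\bms[t])$.

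\textbf{Step 1: ball lower bound.} We bound $\abs{\cB_{(e_1,\ldots,e_k)}^{(q,k)}(\bms)}$ from below in terms of $W_1(\bms),\ldots,W_m(\bms)$. For each $j$, choose a set of $e_{l_j}$ columns in which row $l_j$ is changed. The column sets for different $j$ are required to be pairwise disjoint. In each chosen column, replace the entry of row $l_j$ by one of its $w_{l_j}$ admissible values. Because every affected column is modified in exactly one row, every choice yields a valid sequence. Distinct choices give distinct sequences, since the set of changed positions together with the new values is recovered from the output.

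Counting ordered choices and dividing by $\prod_j e_{l_j}!$ gives a count of at least
\[
\prod_{j=1}^m\frac{\parenv{W_j(\bms)-O(1)}^{e_{l_j}}}{e_{l_j}!}\ \ge\ \prod_{j=1}^m\parenv{\frac{W_j(\bms)}{e_{l_j}}}^{e_{l_j}}(1-o(1)).
\]
The $O(1)$ loss comes from the disjointness requirement, since each $w\le q-1$ and only $e$ columns are excluded. The second inequality uses $e_{l_j}!\le e_{l_j}^{e_{l_j}}$. This accounts for the factors $e_{l_j}^{e_{l_j}}$ in \eqref{eq_subuppergeneral}.

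\textbf{Step 2: partition and concentration.} Let $\mu_j=\E[w_{l_j}(\sigma)]$ for $\sigma$ uniform on $\Phi_{q,k}$. Define $\cC_1$ to be the set of codewords $\bmc$ with $W_j(\bmc)\ge n\mu_j-\sqrt{c\,n\ln n}$ for all $j$, and put $\cC_2=\cC\setminus\cC_1$. For a uniformly random $\bms$, each $W_j(\bms)$ is a sum of $n$ independent variables taking values in $[0,q-1]$. \Cref{lem_hoeffding} together with a union bound over the $m$ indices then gives
\[
\abs{\cC_2}\le mQ_{q,k}^n n^{-2e},
\]
once $c$ is chosen large in terms of $q$ and $e$; this is negligible. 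The balls around codewords of $\cC_1$ are pairwise disjoint, and each has size at least $\prod_j (n\mu_j/e_{l_j})^{e_{l_j}}(1-o(1))$. The sphere-packing argument therefore yields
\[
\abs{\cC}\le\frac{Q_{q,k}^n\prod_j e_{l_j}^{e_{l_j}}}{n^e\prod_j\mu_j^{e_{l_j}}}(1+o(1)).
\]

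\textbf{Step 3: evaluating the expectations (main obstacle).} It remains to show
\[
\prod_{j}\parenv{Q_{q,k}\mu_j}^{e_{l_j}}\ \ge\ 2^{\abs{R}}(q-1)^{e_{l_m}}\binom{q}{m}^{e-e_{l_m}}.
\]
The sum $\sum_{\sigma}(a_{i+1}-a_{i-1})$ counts pairs $(\sigma,b)$ with $a_{i-1}\le b<a_{i+1}$ plus boundary contributions, and can be evaluated by stars-and-bars.

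\begin{itemize}
\item \emph{Row $l_m$.} I expect $Q_{q,k}\mu_j\ge q-1$ for all rows. This is immediate for the top and bottom rows via the boundary conventions, and for interior rows it should follow from the same counting. This supplies the factor $(q-1)$ for $j=m$.
\item \emph{Rows $l_j$ with $j<m$.} Here I would aim for $Q_{q,k}\mu_j\ge\binom{q}{m}$. One way is to restrict to the letters $\sigma$ whose entries in rows $l_1,\ldots,l_m$ take $m$ distinct values; there are at least $\binom{q}{m}$ such configurations, and each contributes at least $1$ to $w_{l_j}$. This restriction is where the hypothesis $m\le q$ enters.
\item \emph{The factor $2^{\abs{R}}$.} When $l_j=l_{j-1}+1$, the two affected rows are adjacent. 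I expect a direct count to give an extra factor of $2$ in the sum for row $l_j$, because the neighbour value $a_{l_{j-1}}$ can also be moved.
\end{itemize}

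If the exact inequality is hard to verify in full generality, I would first check it by direct enumeration for small $k$ and $q$. Then I would prove it by an injection from pairs $(\sigma,\text{admissible new value})$ to the objects counted by $\binom{q}{m}$ and by $2^{\abs{R}}$.
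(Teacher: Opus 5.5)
Your route is sound and differs from the paper's. Step 3, however, is left unfinished, and one justification in it is false as stated. The inequality you need still holds, and it is easy to prove.

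\emph{How the two routes differ.} The paper does not weight columns. For each $m$-subset $B=\mathset{b_1<\cdots<b_m}\subseteq\Sigma_q$ it fixes one letter $\sigma_B$: rows $l_{j-1}+1,\ldots,l_j$ equal $b_j$, and rows after $l_m$ equal $q-1$. It adds one auxiliary letter $\tau$, which is $0$ in rows $1,\ldots,l_m$ and $q-1$ afterwards. Its $\cC_1$ consists of codewords with about $n\binom{q}{m}/Q_{q,k}$ letters of type $\sigma_B$ and about $n/Q_{q,k}$ copies of $\tau$. Substitutions are placed only in those columns. You instead threshold the total mobilities $W_j$ over all columns.

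Your Steps 1 and 2 are correct:
\begin{itemize}
\item Choosing columns sequentially gives at least $\prod_j\parenv{W_j-(e-1)(q-1)}^{e_{l_j}}$ ordered choices.
\item Distinct choices give distinct outputs, because each new value differs from the old one.
\item Hoeffding with $c=e(q-1)^2$ gives $\abs{\cC_2}\le mQ_{q,k}^n n^{-2e}$.
\end{itemize}
So you get $\abs{\cC}\le\frac{Q_{q,k}^{n+e}\prod_j e_{l_j}^{e_{l_j}}}{n^e\prod_j(Q_{q,k}\mu_j)^{e_{l_j}}}(1+o(1))$, provided every $\mu_j>0$.

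\emph{The problem in Step 3.} You claim that every letter with distinct entries in rows $l_1,\ldots,l_m$ contributes at least $1$ to $w_{l_j}$. This fails. Take $q=3$, $k=4$, $l_1=2$, $l_2=4$: the letter $\sparenv{0,0,0,1}^\T$ has distinct entries in rows $2$ and $4$, yet $w_2=a_3-a_1=0$.

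Your explanation of the factor $2$ is also off. You never change two entries of one column, so the neighbouring row does not move. The factor comes from row $l_j$ itself being movable both up and down.

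\emph{The fix.} Use the paper's letters as witnesses. Since $w_{l_j}\ge 0$ on all of $\Phi_{q,k}$, it suffices to check three facts:
\begin{itemize}
\item For $j<m$: $w_{l_j}(\sigma_B)=a_{l_j+1}-a_{l_j-1}\ge b_{j+1}-b_j\ge 1$.
\item For $j\in R$: this improves to $b_{j+1}-b_{j-1}\ge 2$, because then $a_{l_j-1}=a_{l_{j-1}}=b_{j-1}$.
\item $w_{l_m}(\tau)=q-1$. The analogous letter, with zeros through row $i$, settles your claim that $Q_{q,k}\mu_j\ge q-1$ for every row $i$.
\end{itemize}
Summing over all letters gives $Q_{q,k}\mu_j\ge\binom{q}{m}$ for $j<m$ (resp. $2\binom{q}{m}$ for $j\in R$) and $Q_{q,k}\mu_m\ge q-1$. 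This gives $\mu_j>0$ and the required product inequality, even with $2^{\sum_{j\in R}e_{l_j}}$ in place of $2^{\abs{R}}$. No enumeration or injection is needed.

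\emph{What each approach buys.} With this repair, your proof shows that the paper's hand-picked letters matter only as lower bounds for $\sum_{\sigma\in\Phi_{q,k}}w_{l_j}(\sigma)$. Your bound is therefore never worse than \eqref{eq_subuppergeneral}, and often strictly better. For example, with $m=1$, $k=2$ and $l_1=1$, you get $\sum_\sigma a_2=\frac{(q-1)q(q+1)}{3}$ in place of $q-1$. This even beats the constant $q(q-1)$ of \eqref{eq_subupperbound5} when $q\ge 3$.

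Your Steps 1--2 also never use $m\le q$, nor the fact that the auxiliary letter lies outside $\cA$. So for $m>q$ you only need lower bounds on $\sum_\sigma w_{l_j}(\sigma)$, for instance from the grouped letters of \Cref{subsec_mdyq}, rather than a new partition.

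What the paper's version buys is simplicity. It only counts occurrences of a few fixed letters, so the concentration step uses indicator variables and the ball count is a plain multinomial coefficient.
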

\begin{IEEEproof}
Let $B_1,\ldots,B_{\binom{q}{m}}$ be all subsets of $\Sigma_q$ of size $m$. For each $i$, write $B_i=\mathset{b_{i,1},b_{i,2},\ldots,b_{i,m}}$ with $b_{i,1}<b_{i,2}<\cdots<b_{i,m}$. Construct $\binom{q}{m}$ column vectors $\sigma_1,\ldots,\sigma_{\binom{q}{m}}$ of length $k$ over $\Sigma_q$ as follows: for $1\le i\le\binom{q}{m}$, set $\sigma_i=\sparenv{a_{i,1},a_{i,2},\cdots,a_{i,k}}^\T$, where
    \begin{equation}\label{eq_sigma}
        \begin{array}{c}
            a_{i,l_{j-1}+1}=\cdots=a_{i,l_j}=b_{i,j}\;(1\le j\le m),\\
            a_{i,l_{m}+1}=\cdots=a_{i,k}=q-1.
        \end{array}
    \end{equation}
    Here, we set $l_0=0$. Also, define a special vector $\sigma=\sparenv{a_{1},\cdots,a_{k}}^\T$, where $a_0=\cdots=a_{l_m}=0$ and $a_{l_m+1}=\cdots=a_{k-1}=q-1$.
    It is easy to see that $\sigma_i$'s and $\sigma$ are nondecreasing vectors and thus, symbols in $\Phi_{q,k}$. Denote $\cA=\mathset{\sigma_1,\ldots,\sigma_{\binom{q}{m}}}$. Then $\sigma\notin\cA$ if $m\ge2$. 
    
    Denote $n_1=\frac{n}{Q_{q,k}}\binom{q}{m}-\sqrt{en\ln{n}}$ and $n_2=\frac{n}{Q_{q,k}}-\sqrt{en\ln{n}}$. Let $\cC\subseteq\Phi_{q,k}^n$ be an $\parenv{e_1,\ldots,e_{k}}$-CECC. Partition it into two parts:
    \begin{equation*}
        \cC_1=\mathset{\bmc\in\cC:\bmc\text{ has }\ge n_1\text{ symbols from }\cA\text{ and }\ge n_2\text{ copies of }\sigma},\quad \cC_2=\cC\setminus\cC_1.
    \end{equation*}
    Next, we bound $\abs{\cC_1}$ and $\abs{\cC_2}$ from above.

    For a codeword $\bmc\in\cC_1$, let $I=\mathset{s\in[n]:\bmc[s]\in\cA}$ and $J=\mathset{s\in[n]:\bmc[s]=\sigma}$. Then $\abs{I}\ge n_1$ and $\abs{J}\ge n_2$ by definition of $\cC_1$. Select $m-1$ mutually disjoint subsets $I_1,\ldots,I_{m-1}$ of $I$ with $\abs{I_j}=e_{l_j}$ for each $1\le j< m$. And select one subset $J_1$ of $J$ with $\abs{J_1}=e_{l_m}$. There are at least $\binom{n_1}{e_{l_1},\ldots,e_{l_{m-1}},n_1-\sum_{j=1}^{m-1}e_{l_j}}\binom{n_2}{e_{l_m}}$ choices\footnote{Here, $\binom{n_1}{e_{l_1},\ldots,e_{l_{m-1}},n_1-\sum_{j=1}^{m-1}e_{l_j}}$ is a multinomial coefficient.} for tuple $\parenv{I_1,\ldots,I_{m-1},J_1}$. Since $\sigma\notin\cA$ when $m\ge2$, it holds that $J_1$ is disjoint from $\cup_{j=1}^{m-1}I_j$. For each choice of $\parenv{I_1,\ldots,I_{m-1},J_1}$, construct a sequence $\bmc^\prime$ from $\bmc$ as follows:
    \begin{itemize}
        \item For each $j\in[m-1]\setminus R$ and $s\in I_{j}$, change the $l_j$-th component of $\bmc[s]$ (which is $b_{i,j}$ for some $i$) to $b_{i,j}+1$.
        \item For each $j\in R$ and $s\in I_j$, change the $l_j$-th component of $\bmc[s]$ (which is $b_{i,j}$ for some $i$) to $b_{i,j}+1$ or $b_{i,j}-1$. There are two choices.
        \item For each $s\in J_1$, we have $\bmc[s]=\sigma$. Change the $l_m$-th component of $\bmc[s]$ from $0$ to any symbol in $\Sigma_q\setminus\{0\}$. There are $q-1$ choices.
    \end{itemize}
    Because $b_{i,1}<b_{i,2}<\cdots<b_{i,m-1}<b_{i,m}\le q-1$ and $l_j-l_{j-1}=1$ for each $j\in R$, each $\bmc^\prime[s]$ ($s\in[n]$) is still nondecreasing. Hence, $\bmc^\prime\in\cB_{(e_1,\ldots,e_{k})}^{(q,k)}(\bmc)$. As a result,
    \begin{equation*}
        \abs{\cB_{(e_1,\ldots,e_{k})}^{(q,k)}(\bmc)}\ge\binom{n_1}{e_{l_1},\ldots,e_{l_{m-1}},n_1-\sum_{j=1}^{m-1}e_{l_j}}\binom{n_2}{e_{l_m}}2^{\abs{R}}(q-1)^{e_{l_m}}.
    \end{equation*}
    Since $\cC_1$ is an $\parenv{e_1,\ldots,e_{k}}$-CECC, we obtain
    \begin{equation*}
        \abs{\cC_1}\le\frac{Q_{q,k}^n}{\binom{n_1}{e_{l_1},\ldots,e_{l_{m-1}},n_1-\sum_{j=1}^{m-1}e_{l_j}}\binom{n_2}{e_{l_m}}2^{\abs{R}}(q-1)^{e_{l_m}}}\le\frac{Q_{q,k}^{n+e}\prod_{j=1}^me_{l_j}^{e_{l_j}}}{2^{\abs{R}}(q-1)^{e_{l_m}}\binom{q}{m}^{e-e_{l_m}}n^{e}}\parenv{1+o(1)}.
    \end{equation*}

    Next, we upper-bound $\abs{\cC_2}$. If $\bmc\in\cC_2$, it contains either fewer than $n_1$ symbols from $\cA$, or fewer than $n_2$ copies of $\sigma$. Define
    \begin{equation*}
        \begin{aligned}
            \cB_1=\mathset{\bms\in\Phi_{q,k}^n:\bms\text{ contains fewer than }n_1\text{ symbols from }\cA},\\
            \cB_2=\mathset{\bms\in\Phi_{q,k}^n:\bms\text{ contains fewer than }n_2\text{ copies of }\sigma}.
        \end{aligned}
    \end{equation*}
    Then $\cC_2\subseteq\cB_1\cup\cB_2$, and it is sufficient to bound $\abs{\cB_1}$ and $\abs{\cB_2}$.
    
    Choose a sequence $\bms$ uniformly at random from $\Phi_{q,k}^n$. Define indicator random variables $X_1,\ldots,X_n$ by
    \begin{equation*}
        X_i=
        \begin{cases}
            1,\mbox{ if }\bms[i]\in\cA,\\
            0,\mbox{ otherwise},
        \end{cases}
    \end{equation*}
    and set $X=X_1+\cdots+X_n$, which is the total number of symbols from $\cA$ in $\bms$.
    Since $\bms$ is chosen randomly and uniformly from $\Phi_{q,k}^n$, it holds that each $\bms[i]$ is chosen randomly and uniformly from $\Phi_{q,k}$, and that $X_1,\ldots,X_n$ are mutually independent. Therefore, we have $\E[X]=\sum_{i=1}^{n}\E[X_i]=\sum_{i=1}^{n}\Pr[X_i=1]=\frac{n}{Q_{q,k}}\binom{q}{m}$. Then it follows from \Cref{lem_hoeffding} that
    \begin{align*}
        \Pr[X<n_1]=\Pr\sparenv{X-\E[X]<-\sqrt{en\ln{n}}}\le \e^{-\frac{2en\ln{n}}{n}}=\frac{1}{n^{2e}}.
    \end{align*}
    Hence, we have $\abs{\cB_1}\le\abs{\Phi_{q,k}^n}\cdot\Pr[X<n_1]\le\frac{Q_{q,k}^n}{n^{2e}}$. Similarly, we can show that $\abs{\cB_2}\le\frac{Q_{q,k}^n}{n^{2e}}$. Consequently, $\abs{\cC_2}\le 2\frac{Q_{q,k}^n}{n^{2e}}$. Combining the bounds for $\abs{\cC_1}$ and $\abs{\cC_2}$ gives the overall asymptotic bound stated in the theorem.    
\end{IEEEproof}

In the proof of \Cref{thm_boundCECC2}, we selected $m-1$ mutually disjoint subsets $I_1,\ldots,I_{m-1}$ of $I$ and then modified the $l_j$-th component of $\bmc[s]$ for $s\in I_j$. We cannot alter the $l_m$-th component by selecting another subset $I_m$ because it is possible that $b_{i,m}=q-1$ and $l_{m}-l_{m-1}>1$. To alter the $l_m$-th component, we introduced the additional symbol $\sigma$. In the next theorem, we present improvements upon the bound in \eqref{eq_subuppergeneral} when $l_{m}-l_{m-1}=1$ or $m=1$.
\begin{theorem}\label{thm_boundCECC3}
    Suppose that $k,q\ge 2$. Let $e_1,\ldots,e_{k}$ be $k$ nonnegative integers, and assume that the set of nonzero elements among them is $\mathset{e_{l_1},\ldots,e_{l_m}}$, where $1\le m\le q$ and $l_1<\cdots<l_m$. Let $R=\mathset{1<j<m:l_j-l_{j-1}=1}$ and $e=\sum_{j=1}^me_{l_j}$.
    \begin{enumerate}[$(i)$]
        \item If $m\ge 2$ and $l_m-l_{m-1}=1$, the following bound holds:
    \begin{equation}\label{eq_subupperbound4}
        A_{q,k}^S(n;(e_1,\ldots,e_{k}))\le\frac{Q_{q,k}^{n+e}\prod_{j=1}^me_{l_j}^{e_{l_j}}}{2^{\abs{R}}\parenv{\binom{q}{m}n}^{e}}\parenv{1+o(1)}.
    \end{equation}
    \item If $m=1$, then
    \begin{equation}\label{eq_subupperbound5}
        A_{q,k}^S(n;(e_1,\ldots,e_{k}))\le\frac{Q_{q,k}^{n+e}e^{e}}{\sparenv{nq(q-1)}^{e}}\parenv{1+o(1)}.
    \end{equation}
    \item If $m=2$ and $l_m-l_{m-1}=1$, then
    \begin{equation}\label{eq_subupperbound6}
        A_{q,k}^S(n;(e_1,\ldots,e_{k}))\le\frac{Q_{q,k}^{n+e}\prod_{j=1}^me_{l_j}^{e_{l_j}}}{\parenv{\binom{q}{2}+1}^{e}n^e}\parenv{1+o(1)}.
    \end{equation}
    \end{enumerate}
\end{theorem}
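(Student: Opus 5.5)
The plan is to reuse the scheme from the proof of \Cref{thm_boundCECC2} in all three parts, with the auxiliary symbol $\sigma$ removed. Fix a set (or two sets) of ordered composite letters in which every letter admits a valid substitution in the relevant row. Let $\cC_1$ be the codewords containing at least $\frac{n}{Q_{q,k}}\abs{\cA}-\sqrt{en\ln n}$ letters from each chosen set, and let $\cC_2=\cC\setminus\cC_1$. By the Hoeffding computation of \Cref{lem_hoeffding}, applied once per set exactly as for $\cB_1$ in \Cref{thm_boundCECC2}, we get $\abs{\cC_2}\le cQ_{q,k}^n/n^{2e}$. This is $o$ of the claimed bound, so everything reduces to a lower bound on $\abs{\cB^{(q,k)}_{(e_1,\ldots,e_k)}(\bmc)}$ for $\bmc\in\cC_1$, followed by sphere packing. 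As before, $\binom{n_1}{e_{l_1},\ldots,e_{l_m},\,n_1-e}\sim \prod_j (n_1^{e_{l_j}}/e_{l_j}!)$ and $e_{l_j}!\ge$ the factors absorbed into $e_{l_j}^{e_{l_j}}$ in the same way as in \Cref{thm_boundCECC2}.

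\emph{Part (i).} Take $\cA=\{\sigma_1,\ldots,\sigma_{\binom{q}{m}}\}$ from \eqref{eq_sigma}. Because $l_m-l_{m-1}=1$, the $m$-th block is the single position $l_m$, which carries $b_{i,m}$. Since $b_{i,m-1}<b_{i,m}$, changing it to $b_{i,m}-1$ keeps the column nondecreasing. We therefore choose $m$ disjoint subsets $I_1,\ldots,I_m\subseteq I$ with $\abs{I_j}=e_{l_j}$ and modify them as follows.
\begin{itemize}
\item For $j\in[m-1]\setminus R$, we increase the $l_j$-th entry by $1$, exactly as in \Cref{thm_boundCECC2}.
\item For $j\in R$, we increase or decrease the $l_j$-th entry by $1$, giving two choices.
\item For $j=m$, we decrease the $l_m$-th entry by $1$.
\end{itemize}
This yields at least $\binom{n_1}{e_{l_1},\ldots,e_{l_m},n_1-e}2^{\abs{R}}$ valid words. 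Since $n_1\sim \binom{q}{m}n/Q_{q,k}$, this gives \eqref{eq_subupperbound4}.

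\emph{Part (ii).} Let $l=l_1$. Take $\cA$ to be the $q$ letters whose entries before position $l$ are $0$, whose entries after position $l$ are $q-1$, and whose $l$-th entry is arbitrary. The cases $l=1$ and $l=k$ are harmless under the convention $a_0=0$, $a_{k+1}=q-1$ used in \Cref{lem_subballsize}. In every letter of $\cA$, the $l$-th entry can be replaced by any of the other $q-1$ values. Hence the ball has size at least $\binom{n_1}{e}(q-1)^e$ with $n_1\sim qn/Q_{q,k}$, which gives \eqref{eq_subupperbound5}.

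\emph{Part (iii).} This is the delicate part: there is no single set of $\binom q2+1$ letters in which both rows $l_1$ and $l_2=l_1+1$ can be altered when $q=2$. I will instead use two sets.
\begin{itemize}
\item All letters have zeros before $l_1$, $q-1$ after $l_2$, and a pair $(b_1,b_2)$ at positions $(l_1,l_2)$ with $b_1\le b_2$.
\item $\cA_1$ consists of the pairs with $b_1<b_2$ together with the pair $(q-1,q-1)$. In each such letter row $l_1$ can be changed, to $b_1+1$ or, for the extra pair, to $q-2$.
\item $\cA_2$ consists of the pairs with $b_1<b_2$ together with the pair $(0,0)$. In each such letter row $l_2$ can be changed, to $b_2-1$ or, for the extra pair, to $1$.
\end{itemize}
Both sets have size $\binom q2+1$. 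Define $\cC_1$ by requiring at least $\frac{n}{Q_{q,k}}(\binom q2+1)-\sqrt{en\ln n}$ letters from each of $\cA_1$ and $\cA_2$. Then choose disjoint $I_1$ and $I_2$ of sizes $e_{l_1}$ and $e_{l_2}$ from the respective position sets. Because overlaps are counted only once, the number of such pairs is at least $\binom{n_1}{e_{l_1}}\binom{n_1-e_{l_1}}{e_{l_2}}\sim n_1^{e}/(e_{l_1}!e_{l_2}!)$, with $n_1\sim(\binom q2+1)n/Q_{q,k}$. This gives \eqref{eq_subupperbound6}. Bounding $\cC_2$ then costs two Hoeffding terms instead of one, which only affects the lower-order term.
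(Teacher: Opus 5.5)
Your proposal is correct and follows essentially the paper's proof. Part (i) uses the same set $\cA$ from \eqref{eq_sigma} and the same modification rule. In (ii) you treat a general $l_1$ directly instead of first reducing to $l_1=1$ via \Cref{pro_subequal}. In (iii) your two overlapping sets play the same role as the paper's $\cA_0\cup\{\tau_1\}$ and $\cA_0\cup\{\tau_2\}$, using $q-1\to q-2$ instead of $q-1\to 0$; this saves one Hoeffding application but changes nothing essential.

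The only slip is cosmetic: replacing $\prod_j e_{l_j}!$ by $\prod_j e_{l_j}^{e_{l_j}}$ in the numerator uses $e_{l_j}!\le e_{l_j}^{e_{l_j}}$, not $\ge$.
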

\begin{IEEEproof}
   (i) Construct $\binom{q}{m}$ column vectors $\sigma_1,\ldots,\sigma_{\binom{q}{m}}$ of length $k$ over $\Sigma_q$ as in \eqref{eq_sigma} and let $\cA=\mathset{\sigma_1,\ldots,\sigma_{\binom{q}{m}}}$. Denote $n_0=\frac{n}{Q_{q,k}}\binom{q}{m}-\sqrt{en\ln{n}}$. Let $\cC\subseteq\Phi_{q,k}^n$ be an $\parenv{e_1,\ldots,e_{k}}$-CECC. Partition $\cC$ into two parts:
    \begin{equation*}
        \cC_1=\mathset{\bmc\in\cC:\bmc\text{ contains at least }n_0\text{ symbols from }\cA},\quad \cC_2=\cC\setminus\cC_1.
    \end{equation*}
    We will bound $\abs{\cC_1}$ and $\abs{\cC_2}$ from above separately.

    For a codeword $\bmc\in\cC_1$, let $I=\mathset{s\in[n]:\bmc[s]\in\cA}$. Then $\abs{I}\ge n_0$ by definition of $\cC_1$. Select $m$ mutually disjoint subsets $I_1,\ldots,I_m$ of $I$ with $\abs{I_j}=e_{l_j}$ for each $1\le j\le m$. Construct a sequence $\bmc^\prime$ as follows:
    \begin{itemize}
        \item For each $j\ne m$ and $s\in I_{j}$, change the $l_j$-th component of $\bmc[s]$ as in the proof of \Cref{thm_boundCECC2}.
        \item For each $s\in I_{m}$, change the $l_m$-th component of $\bmc[s]$ (which is $b_{i,m}$ for some $i$) to $b_{i,m}-1$.
    \end{itemize}
    Because $b_{i,1}<b_{i,2}<\cdots<b_{i,m}$ and $l_j-l_{j-1}=1$ for each $j\in R\cup\{m\}$, each $\bmc^\prime[s]$ is still nondecreasing. Hence, $\bmc^\prime\in\cB_{(e_1,\ldots,e_{k})}^{(q,k)}(\bmc)$. As a result, it holds that
    \begin{equation*}
        \abs{\cB_{(e_1,\ldots,e_{k})}^{(q,k)}(\bmc)}\ge\binom{n_0}{e_{l_1},\ldots,e_{l_m},n_0-e}2^{\abs{R}}.
    \end{equation*}
    Since $\cC_1$ is an $\parenv{e_1,\ldots,e_{k}}$-CECC, we obtain
    \begin{equation*}
        \abs{\cC_1}\le\frac{Q_{q,k}^n}{\binom{n_0}{e_{l_1},\ldots,e_{l_m},n_0-e}2^{\abs{R}}}\le\frac{Q_{q,k}^{n+e}\prod_{j=1}^me_{l_j}^{e_{l_j}}}{2^{\abs{R}}\parenv{\binom{q}{m}n}^{e}}\parenv{1+o(1)}.
    \end{equation*}

    Similar to the proofs of \Cref{thm_boundCECC1,thm_boundCECC2}, we can prove that $\abs{\cC_2}\le\frac{Q_{q,k}^n}{n^{2e}}$ by applying \Cref{lem_hoeffding}. Now the proof of part (i) is completed.

    \vspace{5pt}
    (ii) When $m=1$, according to \Cref{pro_subequal}, it suffices to assume that $e_1=e>0$ and $e_2=\cdots=e_{k}=0$. The proof is almost the same as that of part (i). The only difference lies in the lower bound of $\abs{\cB_{(e,0,\ldots,0)}^{(q,k)}(\bmc)}$, where $\bmc\in\cC_1$. In this case, for each $i\in I_1$, there are $q-1$ possible directions for changing the first component of $\bmc[i]$. This implies that $\abs{\cB_{(e,0,\ldots,0)}^{(q,k)}(\bmc)}\ge\binom{n_0}{e}(q-1)^e$.
    
    \vspace{5pt}
    (iii) Let $\cA_0=\mathset{\sigma_1,\ldots,\sigma_{\binom{q}{2}}}$, where $\binom{q}{2}$ symbols $\sigma_1,\ldots,\sigma_{\binom{q}{2}}\in\Phi_{q,k}$ are defined as in \eqref{eq_sigma}. We further define two symbols $\tau_j=\begin{bmatrix}
        a_1^{(j)},\ldots,a_{k}^{(j)}
    \end{bmatrix}^\T$ ($j=1,2$) as
    \begin{equation*}
        \begin{array}{l}
           a_t^{(j)}=0,\;(j=1,2,t<l_1), \\
           a_t^{(j)}=q-1,\;(j=1,2,t>l_2),\\
           a_{l_1}^{(1)}=a_{l_2}^{(1)}=q-1,\\
           a_{l_1}^{(2)}=a_{l_2}^{(2)}=0.
        \end{array}
    \end{equation*}
    For $j=1,2$, let $\cA_j=\mathset{\tau_j}$.  Denote $n_0=\frac{n}{Q_{q,k}}\binom{q}{2}-\sqrt{en\ln{n}}$ and $n_1=n_2=\frac{n}{Q_{q,k}}-\sqrt{en\ln{n}}$. Let $\cC\subseteq\Phi_{q,k}^n$ be an $\parenv{e_1,\ldots,e_{k}}$-CECC. Partition $\cC$ into two subsets $\cC_1=\mathset{\bmc\in\cC:\bmc\text{ contains at least }n_j\text{ symbols from }\cA_j,\forall 0\le j\le 2}$ and $\cC_2=\cC\setminus\cC_1$. An application of \Cref{lem_hoeffding} leads to $\abs{\cC_2}\le\frac{3Q_{q,k}^{n+e}}{n^{2e}}$. Next, we bound $\abs{\cC_1}$.

    For a codeword $\bmc\in\cC_1$, let $I_j=\mathset{s\in[n]:\bmc[s]\in\cA_j}$. Then $\abs{I_j}\ge n_j$ for all $0\le j\le 2$. Select one subset $J_1$ of $I_0\cup I_1$ with $\abs{J_1}=e_{l_1}$. Then select one subset $J_2$ of $I_0\cup I_2\setminus J_1$ with $\abs{J_2}=e_{l_2}$. There are at least $\binom{n_0+n_1}{e_{l_1}}\binom{n_0+n_2-e_{l_1}}{e_{l_2}}$ different choices for $\parenv{J_1,J_2}$. For each choice of $\parenv{J_1,J_2}$, construct a sequence $\bmc^\prime$ from $\bmc$ as follows:
    \begin{itemize}
        \item For each $s\in J_{1}$, if $s\in I_0$, change the $l_1$-th component of $\bmc[s]$ (which is $b_{i,1}$ for some $i$) to $b_{i,1}+1$; if $s\in I_1$, change the $l_1$-th component of $\bmc[s]$ (which is $q-1$) to $0$.
        \item For each $s\in J_{2}$, if $s\in I_0$, change the $l_2$-th component of $\bmc[s]$ (which is $b_{i,2}$ for some $i$) to $b_{i,2}-1$; if $s\in I_2$, change the $l_2$-th component of $\bmc[s]$ (which is $0$) to $q-1$.
    \end{itemize}
    Because $b_{i,1}<b_{i,2}$, $l_2-l_{1}=1$, $a_{l_1}^{(1)}=a_{l_2}^{(1)}=q-1$ and $a_{l_1}^{(2)}=a_{l_2}^{(2)}=0$, each $\bmc^\prime[s]$ is still nondecreasing. Hence, $\bmc^\prime\in\cB_{(e_1,\ldots,e_{k})}^{(q,k)}(\bmc)$. As a result, it holds that
    \begin{equation*}
        \abs{\cB_{(e_1,\ldots,e_{k})}^{(q,k)}(\bmc)}\ge\binom{n_0+n_1}{e_{l_1}}\binom{n_0+n_2-e_{l_1}}{e_{l_2}}.
    \end{equation*}
    Since $\cC_1$ is an $\parenv{e_1,\ldots,e_{k}}$-CECC, we obtain
    \begin{equation*}
        \abs{\cC_1}\le\frac{Q_{q,k}^n}{\binom{n_0+n_1}{e_{l_1}}\binom{n_0+n_2-e_{l_1}}{e_{l_2}}}\le\frac{Q_{q,k}^{n+e}e_{l_1}^{e_{l_1}}e_{l_2}^{e_{l_2}}}{\sparenv{\binom{q}{2}+1}^en^{e}}\parenv{1+o(1)}.
    \end{equation*}
    Now the proof is completed.
\end{IEEEproof}

Notice that when $e$ is even, a $k$-resolution $e$-CECC is also a $k$-resolution $(e/2,e/2,0,\ldots,0)$-CECC. Then \eqref{eq_subupperbound6} immediately yields the following result.
\begin{corollary}\label{cor_evene}
    If $e>0$ is an even integer, then
    \begin{equation}\label{eq_evene}
        A_{q,k}^S(n;e)\le\frac{Q_{q,k}^{n+e}e^e}{\sparenv{q^2-q+2}^en^e}(1+o(1)).
    \end{equation}
\end{corollary}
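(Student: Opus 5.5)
The plan is a reduction to \eqref{eq_subupperbound6}, followed by arithmetic on the constants. Write $e=2e^\prime$ with $e^\prime=e/2\ge1$. Let $\cC$ be any $k$-resolution $e$-CECC. An $(e^\prime,e^\prime,0,\ldots,0)$-composite-error introduces at most $e^\prime$ substitutions in each of the first two rows and none elsewhere, so at most $2e^\prime=e$ substitutions in total. It is therefore a special case of a $k$-resolution $e$-composite-error, and $\cC$ is also an $(e^\prime,e^\prime,0,\ldots,0)$-CECC. Hence
\begin{equation*}
A_{q,k}^S(n;e)\le A_{q,k}^S\parenv{n;(e^\prime,e^\prime,0,\ldots,0)}.
\end{equation*}

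Next I apply \Cref{thm_boundCECC3}(iii) to the right-hand side. The nonzero entries are $e_{l_1}=e_{l_2}=e^\prime$ with $l_1=1$, $l_2=2$. Thus $m=2$ and $l_m-l_{m-1}=1$, so the hypotheses hold, and the total number of errors equals $e^\prime+e^\prime=e$. Because $k\ge2$, the symbols $\tau_1,\tau_2$ and the $\sigma_i$ used in that proof are well defined. The bound \eqref{eq_subupperbound6} then reads
\begin{equation*}
A_{q,k}^S\parenv{n;(e^\prime,e^\prime,0,\ldots,0)}\le\frac{Q_{q,k}^{n+e}\,(e/2)^{e/2}(e/2)^{e/2}}{\parenv{\binom{q}{2}+1}^{e}n^{e}}(1+o(1))=\frac{Q_{q,k}^{n+e}\,e^{e}}{2^{e}\parenv{\binom{q}{2}+1}^{e}n^{e}}(1+o(1)).
\end{equation*}

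Finally I simplify the denominator: $2\parenv{\binom{q}{2}+1}=q(q-1)+2=q^2-q+2$, so $2^{e}\parenv{\binom{q}{2}+1}^{e}=(q^2-q+2)^{e}$. Substituting gives exactly \eqref{eq_evene}.

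There is no real obstacle here. The only points to check are that the inclusion of error patterns runs in the correct direction, so that every $e$-CECC is an $(e/2,e/2,0,\ldots,0)$-CECC, and that the factor $2^{-e}$ coming from $(e/2)^{e}$ is absorbed into the denominator.
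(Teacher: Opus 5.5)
Your proposal is correct and is essentially the paper's own argument: every $e$-CECC is an $(e/2,e/2,0,\ldots,0)$-CECC, so \eqref{eq_subupperbound6} applies with $e_{l_1}=e_{l_2}=e/2$. The bound then follows by absorbing the factor $2^{-e}$ into the denominator via $2\bigl(\binom{q}{2}+1\bigr)=q^2-q+2$.
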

\begin{remark}
\begin{itemize}
    \item Bounds \eqref{eq_subupperbound3}, \eqref{eq_subuppergeneral}, \eqref{eq_subupperbound4}--\eqref{eq_subupperbound6} and \eqref{eq_evene} hold in the regime where $n$ is sufficiently large. When $n$ is small, the simpler sphere-packing bounds in \eqref{eq_subupperbound1} (or \eqref{eq_subupperbound2}) might be superior.
    \item The bound in \eqref{eq_subuppergeneral} is $\parenv{\binom{q}{m}/(q-1)}^{e_{l_m}}$ times the bound in \eqref{eq_subupperbound4}, implying that the latter bound is better when $m<q$.
    \item When $m=1$, the bound in \eqref{eq_subuppergeneral} is $q^e$ times the bound in \eqref{eq_subupperbound5}.
    \item The bound in \eqref{eq_subupperbound3} holds for all $e$ while the bound in \eqref{eq_evene} only holds for even $e$. When $e$ is even, it is possible that the latter bound outperforms the former one. For example, when $q=k=2$, the latter bound is $\frac{Q_{q,k}^{n+e}e^e}{4^en^e}(1+o(1))$ while the former bound is $\frac{Q_{q,k}^{n+e}e^e}{2^en^e}(1+o(1))$. However, when $q=2$ and $k\ge4$, the bound in \eqref{eq_subupperbound3} is better.
\end{itemize}
\end{remark}

\vspace{5pt}
\subsubsection{\textbf{The case $m>q$}}\label{subsec_mdyq}
%%%%%%%%%%%%%%%%%%%%%%%%%%%%%%%%%%%%%%%%%%%%%%%%%%%%
Although the previous bounds assume $m\le q$, they can be extended to the case $m>q$ without much effort.

We briefly outline how to extend the bound \eqref{eq_subuppergeneral} to the case $m>q$. For a given $2\le m_0\le q$, write $m=sm_0+r$ where $0\le r< m_0$. Partition the set of nonzero error counts $e_{l_1},e_{l_2},\ldots,e_{l_m}$ into $s+1$ subsets:
\begin{equation*}
    E_{p}=\mathset{e_{l_{(p-1)m_0+1}},\ldots,e_{l_{pm_0}}}(1\le p\le s),\; E_{s+1}=\mathset{e_{l_{sm_0+1}},\ldots,e_{l_{sm_0+r}}}.
\end{equation*}
 If $r=0$, then $E_{s+1}=\emptyset$. For each $E_p$ ($1\le p\le s+1$), construct a family of (or $\binom{q}{r}$) symbols $\sigma_1^{(p)},\ldots,\sigma_{\binom{q}{m_0}}^{(p)}$ as in the proof of \Cref{thm_boundCECC2}. For $p=1,\ldots,s$, we also have the auxiliary symbol $\sigma^{(p)}$ defined analogously to $\sigma$ in that proof. Now denote $\cA_p=\mathset{\sigma_1^{(p)},\ldots,\sigma_{\binom{q}{m_0}}^{(p)}}$ ($p=1,\ldots,s$) and $\cA_{s+1}=\mathset{\sigma_1^{(p)},\ldots,\sigma_{\binom{q}{r}}^{(p)}}$.

Let $e=\sum_{j=1}^me_{l_j}$ and $R_p=\mathset{1<j<m_0:l_{(p-1)m_0+j}-l_{(p-1)m_0+j-1}=1}$ for $1\le p\le s+1$. Set $n_1=\frac{n}{Q_{q,k}}\binom{q}{m_0}-\sqrt{en\ln{n}}$, $n_2=\frac{n}{Q_{q,k}}-\sqrt{en\ln{n}}$ and $n_3=\frac{n}{Q_{q,k}}\binom{q}{r}-\sqrt{en\ln{n}}$ (if $r=0$, then $n_3=0$). Let $\cC\subseteq\Phi_{q,k}^n$ be an $\parenv{e_1,\ldots,e_{k}}$-CECC and split it into
    \begin{equation*}
        \cC_1=\mathset{\bmc\in\cC:
        \begin{array}{c}
             \bmc\text{ contains at least }n_1\text{ symbols from }\cA_p,\forall 1\le p\le s,\\
             \text{and contains at least }n_2\text{ copies of }\sigma^{(p)},\forall 1\le p\le s,\\
             \text{and contains at least }n_3\text{ symbols from }\cA_{s+1}
        \end{array}}
    \end{equation*}
and $\cC_2=\cC\setminus\cC_1$. Then following arguments similar to those in the proof of \Cref{thm_boundCECC2}, one can obtain
\begin{equation*}
    \begin{array}{c}
         \abs{\cC_1}\le \frac{Q_{q,k}^{n+e}\prod_{j=1}^me_{l_j}^{e_{l_j}}}{2^{R}(q-1)^{e^\prime}\binom{q}{m_0}^{e^{\prime\prime}}\binom{q}{r}^{e^{\prime\prime\prime}}n^{e}}\parenv{1+o(1)},\\
         \abs{\cC_2}\le (2s+1)\frac{Q_{q,k}^n}{n^{2e}},
    \end{array}
\end{equation*}
where $R=\sum_{p=1}^{s}\abs{R_p}$, $e^\prime=\sum_{p=1}^{s}e_{l_{pm_0}}$, $e^{\prime\prime}=\sum_{p=1}^{s}\parenv{e_{l_{(p-1)m_0+1}}+\cdots+e_{l_{pm_0-1}}}$ and $e^{\prime\prime\prime}=e_{l_{sm_0+1}}+\cdots+e_{l_{sm_0+r}}$. Therefore, we have $A_{q,k}^S(n;(e_1,\ldots,e_{k}))\le\frac{Q_{q,k}^{n+e}\prod_{j=1}^me_{l_j}^{e_{l_j}}}{2^{R}(q-1)^{e^\prime}\binom{q}{m_0}^{e^{\prime\prime}}\binom{q}{r}^{e^{\prime\prime\prime}}n^{e}}\parenv{1+o(1)}$.

The bound in \eqref{eq_subupperbound4} can be extended to the case $m>q$ in a similar way. Details are omitted for brevity.

\vspace{5pt}
\subsubsection{\textbf{Comparison with previous results}}\label{subsec_comparison}
%%%%%%%%%%%%%%%%%%%%%%%%%%%%%%%%%%%%%%%%%%%%%%%%%%%%
In \cite[Section III]{BesartDollma202509}, Dollma \emph{et al} derived some upper bounds on the sizes of binary (i.e., $q=2$) $k$-resolution CECCs. We now compare their results with the bounds presented in \Cref{thm_boundCECC1,thm_boundCECC2,thm_boundCECC3} and \Cref{cor_evene}.
    \begin{itemize}
        \item The results in \cite[Theorem 1]{BesartDollma202509} and \cite[Theorem 2]{BesartDollma202509} address the case $q=k=2$ with positive $e_1,e_2$ and $e$. \cite[Theorem 3]{BesartDollma202509} and \cite[Theorem 4]{BesartDollma202509} consider general $k$, but assume the specific error patterns $(1,0,\ldots,0)$ and $e=1$, respectively. \cite[Theorem 5]{BesartDollma202509} and \cite[Theorem 6]{BesartDollma202509} treat the case $k=2$ with $e_1=e_2=1$ and $e=2$, respectively. In contrast, our bounds apply to general $q$, $k$, and arbitrary error patterns $(e_1, e_2, \ldots, e_{k})$ or single‑parameter $e$.
        \item When $q=k=2$, our bound in \eqref{eq_subupperbound1} becomes $\frac{3^n}{\binom{n}{\min\{e_1,e_2\}}2^{\min\{e_1,e_2\}}+1}$, whereas the first bound in \cite[Theorem 1]{BesartDollma202509} is $\frac{3^n}{\binom{n}{\min\{e_1,e_2\}}}$. Our bound is strictly better, owing to the sharper lower bound on the error‑ball size given in \eqref{eq_subsize4}.
        \item Setting $q=k=m=2$, the bound in \eqref{eq_subupperbound4} reduces to $\frac{3^{n+e_1+e_2}e_1^{e_1}e_2^{e_2}}{n^{e_1+e_2}}(1+o(1))$, matching the first bound in \cite[Theorem 2]{BesartDollma202509}.
        \item Setting $q=k=m=2$, the bound in \eqref{eq_subupperbound6} reduces to $\frac{3^{n+e_1+e_2}e_1^{e_1}e_2^{e_2}}{(2n)^{e_1+e_2}}(1+o(1))$, which coincides with the second bound in \cite[Theorem 2]{BesartDollma202509}. However, the latter was derived under the additional assumption $e_2\le e_1\le 2e_2$. Our result removes this restriction through a different partition of $\cC$ and a different method to estimate $\abs{\cC_2}$.
        \item Setting $q=k=2$, the bound in \eqref{eq_evene} gives $\frac{3^{n+e}e^e}{(4n)^e}(1+o(1))$, aligning with the third bound in \cite[Theorem 2]{BesartDollma202509}. Our bound in  \eqref{eq_subupperbound3} applies to all $e$, not only even ones.
        \item The non-asymptotic bounds in \cite[Theorem 3]{BesartDollma202509}--\cite[Theorem 6]{BesartDollma202509} are derived by employing the generalized sphere-packing framework. This method is difficult to apply to general $q$, $k$, $(e_1, e_2, \ldots, e_{k})$ and $e$, due to complicated calculations. For the same regime of parameters $q$, $k$, $(e_1, e_2, \ldots, e_{k})$, $e$, and sufficiently large $n$, our bound in \eqref{eq_subupperbound5} coincides with the bound in \cite[Theorem 3]{BesartDollma202509}; the bound in \eqref{eq_subupperbound3} is $\frac{k}{k-1}$ times the bound in \cite[Theorem 4]{BesartDollma202509}; the bound in \eqref{eq_subupperbound6} is $\frac{1}{8}$ times the bound in \cite[Theorem 5]{BesartDollma202509}; the bound in \eqref{eq_evene} (or \eqref{eq_subupperbound3}) is $2$ times the bound in \cite[Theorem 6]{BesartDollma202509}.
    \end{itemize}

\subsection{Bounds on Codes Correcting Deletions}\label{sec_bounddeltion}
%%%%%%%%%%%%%%%%%%%%%%%%%%%%%%%%%%%%%
Recall that $A_{q,k}^D\parenv{n;\parenv{e_1,e_2,\ldots,e_{k}}}$ denotes the maximum possible size of a $q$-ary $\parenv{e_1,e_2,\ldots,e_{k}}$-CDCC of length $n$.
In this subsection, we derive upper bounds on the sizes of $\parenv{e_1,e_2,\ldots,e_{k}}$-CDCCs. Notice that an upper bound on the sizes of $\parenv{e_1,e_2,\ldots,e_{k}}$-CDCCs is also an upper bound on the sizes of $e$-CDCCs, where $e=\sum_{i=1}^{k}e_i$.

For the sake of clear exposition, the main results in this subsection--\Cref{thm_GSPBdel,thm_asymptotic1del}--are stated for the binary case ($q=2$). \Cref{thm_GSPBdel} relies on \Cref{prop_GSPB1} and \Cref{lem_GSPB2}, while \Cref{thm_asymptotic1del} is built on \Cref{lem_lowerbounddelball,lem_errorspace,lem_dependenthoeffding,lem_runlemma}. Since \Cref{prop_GSPB1}, \Cref{lem_GSPB2,lem_errorspace} can be extended to the case $q>2$, both theorems also admit generalizations to arbitrary alphabet size $q$.

Recall that $\cD_t(\bmx)$ is the set of all subsequences of length $n-t$ of $\bmx$. For a sequence $\bms\in\Phi_{q,k}^n$ and a tuple $\parenv{e_1,e_2,\ldots,e_{k}}$, define
\begin{equation}\label{eq_delball}
    \cB_{\parenv{e_1,e_2,\ldots,e_{k}}}^D(\bms)=\mathset{\parenv{\bmy_1,\bmy_2,\ldots,\bmy_{k}}:\bmy_i\in\cD_{e_i}(\bms_i),~\text{for~any}~ 1\le i\le k}.
\end{equation}
Then $\abs{\cB_{\parenv{e_1,e_2,\ldots,e_{k}}}^D(\bms)}=\prod_{i=1}^{k}\abs{\cD_{e_i}\parenv{\bms_i}}$. It is clear that $\cC$ is an $\parenv{e_1,e_2,\ldots,e_{k}}$-CDCC if and only if $\cB_{\parenv{e_1,e_2,\ldots,e_{k}}}^D(\bms)\cap\cB_{\parenv{e_1,e_2,\ldots,e_{k}}}^D(\bms^\prime)$ for any two distinct sequences $\bms,\bms^\prime\in\cC$. Define $\cV_{q,k,n}=\bigcup_{\bms\in\Phi_{q,k}^n}\cB_{\parenv{1,0,\ldots,0}}^D(\bms)$.

First, we give a non-asymptotic upper bound on the sizes of $2$-ary $\parenv{1,0,\ldots,0}$-CDCCs. Our result generalizes \cite[Theorem 12]{BesartDollma202509}, which handles the case $k=2$. A run of a sequence $\bms\in\Sigma_q^n$ is a maximal substring consisting of identical symbols. For example, sequence $011233$ has four runs: $0$, $11$, $2$ and $33$. Denote by $r(\bms)$ the number of runs of $\bms$. Then, by similar arguments in \cite{BesartDollma202509}, we have
\begin{align}\label{eq_GSPB}
  A_{q,k}^D\parenv{n;\parenv{1,0,\ldots,0}}\le\sum_{(\bmy_1, \bms_2, \ldots, \bms_{k})\in \cV_{2,k,n}}\frac{1}{r(\bmy_1)}.   
\end{align}
Then the goal is to calculate the summation in \eqref{eq_GSPB}. For each $r$, we need to count the number of $(\bmy_1, \bms_2, \ldots, \bms_{k})\in \cV_{2,k,n}$ with $r(\bmy_1)=r$.

For any $\bmy\in \Sigma_q^{n-1}$, denote by $t(\bmy)$ the number of distinct tuples $(\bms_2,\ldots,\bms_{k})\in(\Sigma_q^n)^{k-1}$ such that there exists $\bms_1$ satisfying $\bmy\in\cD_{1}(\bms_1)$ and $\bms_1\le\bms_2\le \ldots\le\bms_{k}$.
\begin{proposition}\label{prop_GSPB1}
    Let $\bmy\in \{0, 1\}^{n-1}$ be a binary sequence of Hamming weight $w$. We have 
\begin{align*}
    t(\bmy)=t(n,k;w)\triangleq k^{n-w}+w(k-1)k^{n-w-1}.
\end{align*}
\end{proposition}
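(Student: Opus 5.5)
The plan is to reduce $t(\bmy)$ to a weighted count of up-sets in the Boolean lattice $\{0,1\}^n$ under the componentwise order, and then evaluate that count with a telescoping decomposition.

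\textbf{Step 1 (reduce to the second row).} Because $q=2$, a column $\sparenv{s_1[j],\ldots,s_k[j]}^\T$ is nondecreasing precisely when the rows satisfy $\bms_1\le\bms_2\le\cdots\le\bms_k$ componentwise. Fix $\bmz=\bms_2$ and count the chains $\bmz\le\bms_3\le\cdots\le\bms_k$.
\begin{itemize}
\item At a position where $\bmz$ has a $1$, all later entries are forced to be $1$.
\item At a position where $\bmz$ has a $0$, the entries $(s_3[j],\ldots,s_k[j])$ form an arbitrary nondecreasing binary vector of length $k-2$, which gives $k-1$ choices.
\end{itemize}
Hence
\[
t(\bmy)=\sum_{\bmz\in U}(k-1)^{\,n-\Wth{1}{\bmz}},\qquad U=\mathset{\bmz\in\{0,1\}^n:\exists\,\bms_1\text{ with }\bmy\in\cD_1(\bms_1),\ \bms_1\le\bmz}.
\]
Two simplifications follow.
\begin{itemize}
\item Inserting a $1$ into $\bmy$ at some position dominates inserting a $0$ at that same position. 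So $U$ is the up-set generated by the $0$-insertions of $\bmy$.
\item For a single $\bmx$, we have $\sum_{\bmz\ge\bmx}(k-1)^{n-\Wth{1}{\bmz}}=k^{n-\Wth{1}{\bmx}}$. Indeed, each position where $\bmx$ is $0$ contributes a factor $(k-1)+1$, and each position where $\bmx$ is $1$ contributes a factor $1$. Every $0$-insertion has weight $w$, so each generator alone contributes $k^{n-w}$.
\end{itemize}

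\textbf{Step 2 (list the generators).} Write $\bmy=0^{a_0}10^{a_1}1\cdots10^{a_w}$. The distinct $0$-insertions are $\bmx^{(0)},\ldots,\bmx^{(w)}$, where $\bmx^{(i)}$ adds one extra $0$ to the $i$-th run of zeros, counting from $0$. Equivalently, $\bmx^{(i)}$ has its $1$'s at the positions of the $1$'s of $\bmy$, except that the last $w-i$ of them are shifted right by one. Consequently, $\bmx^{(i)}$ and $\bmx^{(i-1)}$ differ in exactly two coordinates, at the two possible locations of the $i$-th $1$.

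\textbf{Step 3 (telescope).} Write $U=\bigcup_i\mathrm{Up}(\bmx^{(i)})$ and peel off the new contribution of each generator in turn. The claim to prove is
\[
\mathrm{Up}(\bmx^{(i)})\setminus\bigcup_{j<i}\mathrm{Up}(\bmx^{(j)})=\mathset{\bmz\ge\bmx^{(i)}:\ \bmz[p_i]=0},
\]
where $p_i$ is the position at which $\bmx^{(i-1)}$ has its $i$-th $1$ and $\bmx^{(i)}$ has a $0$.
\begin{itemize}
\item The inclusion ``$\supseteq$'' needs a short argument. Any $\bmx^{(j)}$ with $j<i$ has a $1$ at $p_i$, since its $i$-th $1$ is already shifted to that position. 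So such a $\bmz$ cannot dominate $\bmx^{(j)}$.
\item The inclusion ``$\subseteq$'' holds because if $\bmz\ge\bmx^{(i)}$ and $\bmz[p_i]=1$, then $\bmz\ge\bmx^{(i-1)}$. This is the step I expect to be the main obstacle: the indexing of the run boundaries, the empty runs, and the direction of the shift must all be handled consistently for this implication to hold.
\end{itemize}
Once the claim is established, fixing the single zero coordinate $p_i$ replaces a factor $k$ by $k-1$. Each $i\ge1$ therefore contributes $(k-1)k^{n-w-1}$, while $\bmx^{(0)}$ contributes $k^{n-w}$. Summing over $i=0,1,\ldots,w$ gives $t(\bmy)=k^{n-w}+w(k-1)k^{n-w-1}$.
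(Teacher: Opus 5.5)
Your proof is correct and follows the paper's own argument. Both proofs reduce to the $w+1$ zero-insertions of $\bmy$, assign each of them $k^{n-w}$ chains, and peel them off in order. The new contribution of the $i$-th insertion is exactly the set of $\bms_2\ge\bmx^{(i)}$ with a $0$ at the coordinate right after the $i$-th $1$ of $\bmy$, which gives $(k-1)k^{n-w-1}$. Your weighted up-set sum over $\bms_2$ is only a repackaging of the paper's direct count of tuples. The step you flag as the main obstacle is actually immediate from your Step 2, since $\bmx^{(i-1)}$ and $\bmx^{(i)}$ differ only in the two coordinates $p_i-1$ and $p_i$.
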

\begin{IEEEproof}
    For any $\bmy\in \{0, 1\}^{n-1}$ with Hamming weight $w$, let
    \begin{align*}
        \cI_1(\bmy)\triangleq \{ \bms_1\in \{0,1\}^n: \bmy\in\cD_1(\bms_1) \}.
    \end{align*}
    We aim to count the number of tuples $(\bms_2, \ldots, \bms_{k})\subseteq(\{0,1\}^n)^{k-1}$ with $\bms_1\le\bms_2\le \ldots\le\bms_{k}$ when $\bms_1$ runs over $\cI_1(\bmy_1)$. It is sufficient to consider the case when $\bms_1$ is obtained from $\bmy$ by inserting one bit $0$. Indeed, if $\bms_1$ is obtained from $\bmy$ by inserting one bit $1$ at position $i$ and we get the tuple $\parenv{\bms_2,\ldots,\bms_{k}}$, then we can also get this tuple by inserting one bit $0$ to $\bmy$ at position $i$.
    
    Suppose that the indices of bits $1$ in $\bmy$ are $i_1,\ldots,i_w$, where $1\le i_1<i_2<\cdots <i_{w}< n$.
    It is clear that there are $w+1$ distinct $\bms_1$ produced by inserting $0$ in $\bmy$ at $i_1-1,i_1+1,i_2+1,\ldots, i_{w}+1$, respectively. Let $\bmz_{i_0},\bmz_{i_1},\bmz_{i_2},\ldots, \bmz_{i_w}$ be the sequences obtained by inserting $0$ in $\bmy$ at $i_1-1,i_1+1,i_2+1,\ldots, i_{w}+1$, respectively. Moreover, for any $0\le j\le w$, we define 
    \begin{align*}
       \cT_{i_j}(n,k;w)\triangleq \{(\bms_2, \ldots, \bms_{k})\in(\{0,1\}^n)^{k-1}: \bmz_{i_j}\le\bms_2\le \ldots\le\bms_{k}\}. 
    \end{align*}
    It is obvious that $\abs{\cT_{i_j}(n,k;w)}=k^{n-w}$ and $t(n,k;w)=\left|\bigcup_{j=0}^w \mathcal{T}_{i_j}(n,k;w)\right|$.

    For $j\ge1$ and $(\bms_2, \ldots, \bms_{k})\in\cT_{i_j}(n,k;w)$, let $a$ be the $(i_j+1)$-th component of $\bms_2$. If $a=0$, then $(\bms_2, \ldots, \bms_{k})\notin\bigcup_{p=0}^{j-1}\cT_{i_p}(n,k;w)$, since the $(i_j+1)$-th component of $\bms_2^\prime$ must be $1$ for any $(\bms_2^\prime, \ldots, \bms_{k}^\prime)\in\bigcup_{p=0}^{j-1}\cT_{i_p}(n,k;w)$. If $a=1$, then $\bms_2\ge\bmz_{i_{j-1}}$ and thus, $(\bms_2, \ldots, \bms_{k})\in\cT_{i_{j-1}}(n,k;w)$. Now we conclude that $\abs{\cT_{i_j}(n,k;w)\setminus\bigcup_{p=0}^{j-1}\cT_{i_p}(n,k;w)}=(k-1)k^{n-w-1}$ for all $1\le j\le w$. Then it follows that
    \begin{align*}
        t(n,k;w)&=\abs{\bigcup_{l=1}^w \mathcal{T}_{i_l}(n,k;w)}\\
        &=\abs{\bigcup_{j=0}^{w}\parenv{\cT_{i_j}(n,k;w)\setminus\bigcup_{p=0}^{j-1}\cT_{i_p}(n,k;w)}}\\
        &=\abs{\cT_{i_0}(n,k;w)}+\sum_{j=1}^{w}\abs{\cT_{i_j}(n,k;w)\setminus\bigcup_{p=0}^{j-1}\cT_{i_p}(n,k;w)}\\
        &=k^{n-w}+w(k-1)\cdot k^{n-w-1}.
    \end{align*}
\end{IEEEproof}

\begin{lemma}\cite[Proposition 12]{BesartDollma202509}\label{lem_GSPB2}
    The number of binary sequences of length $n$ with $r$ runs and Hamming weight $w$ is given by 
    \begin{align*}
        c(n;r;w)\triangleq\left\{
        \begin{aligned}
            &1, &&\text{ if } r=1 \text{ and }(w = 0\text{ or }w = n),\\
            &0, &&\text{ if } r=1 \text{ and }0<w<n,\\
            &\begin{pmatrix}
                w - 1\\\lceil\frac{r}{2}\rceil -1
            \end{pmatrix}\begin{pmatrix}
                n-w-1\\\lfloor\frac{r}{2}\rfloor -1
            \end{pmatrix}+\begin{pmatrix}
                w - 1\\\lfloor\frac{r}{2}\rfloor -1
            \end{pmatrix}\begin{pmatrix}
                n-w-1\\\lceil\frac{r}{2}\rceil -1
            \end{pmatrix}, &&\text{ if } r\ge 2 \text{ and }0<w<n.
        \end{aligned}
        \right.
    \end{align*}
\end{lemma}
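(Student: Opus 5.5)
We prove \Cref{lem_GSPB2} by a direct run-decomposition count. The cases $r=1$ are immediate. A binary sequence with one run is either all zeros ($w=0$) or all ones ($w=n$). Hence there is exactly one such sequence when $w\in\{0,n\}$ and none when $0<w<n$.

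For $r\ge2$ with $0<w<n$, the runs alternate between runs of $1$s and runs of $0$s, so the sequence is determined by three pieces of data:
\begin{enumerate}[$(a)$]
\item the symbol of its first run;
\item the ordered lengths of the $1$-runs;
\item the ordered lengths of the $0$-runs.
\end{enumerate}
We therefore split into two cases according to whether the first run consists of $1$s or of $0$s.

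If the sequence starts with $1$, then among its $r$ alternating runs there are $\lceil r/2\rceil$ runs of $1$s and $\lfloor r/2\rfloor$ runs of $0$s. The lengths of the $1$-runs form a composition of $w$ into $\lceil r/2\rceil$ positive parts, and there are $\binom{w-1}{\lceil r/2\rceil-1}$ such compositions. Independently, the lengths of the $0$-runs form a composition of $n-w$ into $\lfloor r/2\rfloor$ positive parts, giving $\binom{n-w-1}{\lfloor r/2\rfloor-1}$ choices. Conversely, every pair of such compositions, interleaved starting with the $1$-run, yields a distinct sequence of length $n$, weight $w$ and exactly $r$ runs. Maximality of the runs holds automatically because consecutive runs carry different symbols. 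This gives the first product.

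If the sequence starts with $0$, the roles swap: there are $\lfloor r/2\rfloor$ runs of $1$s and $\lceil r/2\rceil$ runs of $0$s. The same argument gives the second product $\binom{w-1}{\lfloor r/2\rfloor-1}\binom{n-w-1}{\lceil r/2\rceil-1}$.

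The two cases are disjoint and exhaust all sequences with $r$ runs, so adding the two products gives $c(n;r;w)$. The only point requiring care is the bijection between sequences and pairs of compositions, specifically that alternation forces the run counts $\lceil r/2\rceil$ and $\lfloor r/2\rfloor$ according to the first symbol. No step is a real obstacle. For degenerate parameters, such as $r$ exceeding what $w$ or $n-w$ allows, the binomial coefficients vanish under the convention $\binom{a}{b}=0$ for $b>a$ or $b<0$, which matches the true count of zero.
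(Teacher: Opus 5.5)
Your proof is correct: splitting by the symbol of the first run and counting the run lengths as compositions of $w$ and $n-w$ into $\lceil r/2\rceil$ and $\lfloor r/2\rfloor$ positive parts (or the reverse assignment) gives exactly the two products, and the $r=1$ cases are immediate. The paper does not prove this lemma itself but quotes it from \cite[Proposition 12]{BesartDollma202509}; your stars-and-bars count is the standard proof of this formula. One small point: for $r\ge2$ and $0<w<n$ every binomial argument is nonnegative, so your convention for $b<0$ is never actually needed.
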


The next theorem follows immediately from \eqref{eq_GSPB}, \Cref{prop_GSPB1}, and \Cref{lem_GSPB2}.
\begin{theorem}\label{thm_GSPBdel}
    For any code length $n$, it holds that
    \begin{align*}
    A_{2,k}^D\parenv{n;\parenv{1,0,\ldots,0}}\le\sum_{w = 0}^{n-1}\sum_{r =1}^{2w+1}\frac{c(n-1;r;w)\cdot t(n,k;w)}{r}.
    \end{align*}
\end{theorem}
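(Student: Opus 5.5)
The plan is to start from the generalized sphere-packing bound \eqref{eq_GSPB} with $q=2$. It bounds $A_{2,k}^D\parenv{n;\parenv{1,0,\ldots,0}}$ by the sum of $1/r(\bmy_1)$ over all received words $(\bmy_1,\bms_2,\ldots,\bms_k)\in\cV_{2,k,n}$. The weight $1/r(\bmy_1)$ comes from the fact that a binary $\bms_1$ whose one-deletion image is $\bmy_1$ arises from the $r(\bmy_1)$ distinct run-insertions. The whole task is to reorganize this sum according to the first coordinate $\bmy_1\in\{0,1\}^{n-1}$.

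First, I will observe that $(\bmy_1,\bms_2,\ldots,\bms_k)\in\cV_{2,k,n}$ holds exactly when some $\bms_1$ satisfies $\bmy_1\in\cD_1(\bms_1)$ and $\bms_1\le\bms_2\le\cdots\le\bms_k$. Here the order is componentwise; it encodes that every column of $(\bms_1,\ldots,\bms_k)$ is nondecreasing, i.e.\ that the matrix lies in $\Phi_{2,k}^n$. Hence, for fixed $\bmy_1$, the number of admissible tails $(\bms_2,\ldots,\bms_k)$ is precisely $t(\bmy_1)$. The sum therefore becomes
\begin{equation*}
\sum_{\bmy_1\in\{0,1\}^{n-1}}\frac{t(\bmy_1)}{r(\bmy_1)}.
\end{equation*}

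Next, I group the words $\bmy_1$ by their Hamming weight $w\in[0,n-1]$ and their number of runs $r$. By \Cref{prop_GSPB1}, $t(\bmy_1)=t(n,k;w)$ depends only on $w$. By \Cref{lem_GSPB2}, the number of $\bmy_1$ of length $n-1$ with weight $w$ and $r$ runs is $c(n-1;r;w)$. This turns the sum into $\sum_{w}\sum_{r}c(n-1;r;w)\,t(n,k;w)/r$. For the range of $r$, note that a word of weight $w$ has at most $w$ runs of ones, so it has at most $2w+1$ runs in total. Thus $r$ runs from $1$ to $2w+1$, and any terms with $r$ beyond the actual maximum vanish because $c=0$ there (or they are merely extra nonnegative terms, which is harmless for an upper bound).

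The only delicate point is checking that \eqref{eq_GSPB} holds with equality of index sets, i.e.\ that $\cV_{2,k,n}$ is exactly parametrized by pairs $(\bmy_1,\text{tail})$ counted by $t$. This is just the definition of $t(\bmy)$, so the proof reduces to substitution. The one subtlety is the case $r=1$ with $0<w<n-1$, where $c=0$, consistent with the lemma.
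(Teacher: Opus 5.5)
Your proposal is correct and matches the paper's argument: the paper says the theorem follows immediately from \eqref{eq_GSPB}, \Cref{prop_GSPB1} and \Cref{lem_GSPB2}. That is exactly your regrouping of the sum over $\cV_{2,k,n}$ by the first row $\bmy_1$, with the admissible tails counted by $t(n,k;w)$ and the words $\bmy_1$ of length $n-1$, weight $w$ and $r$ runs counted by $c(n-1;r;w)$. Your heuristic explanation of the weight $1/r(\bmy_1)$ is not quite right: the actual reason is that $\abs{\cD_1(\bms_1)}=r(\bms_1)\ge r(\bmy_1)$ whenever $\bmy_1\in\cD_1(\bms_1)$. Since you, like the paper, take \eqref{eq_GSPB} as given, this does not affect the proof.
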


The bound in \Cref{thm_GSPBdel} is a non-asymptotic one. However, it is limited because it is not in closed-form. In \Cref{thm_asymptotic1del}, we will give an asymptotic but tidier upper bound. This bound is derived using the same framework applied in previous subsection. That is, partitioning a $(1,0,\ldots,0)$-CDCC $\cC\subseteq\Phi_{q,k}^n$ into two parts $\cC_1$ and $\cC_2$, such that error balls centered at codewords in $\cC_1$ are sufficiently large, and $\abs{\cC_2}=o\parenv{\abs{\cC_1}}$. 

The following lemma helps to estimate the sizes of error balls defined in \eqref{eq_delball}.
\begin{lemma}\cite[eq. (11)]{Levenshtein2001JCTA}\label{lem_lowerbounddelball}
    For any $\bmx\in\Sigma_q^n$ with $r(\bmx)\ge t$, it holds that $\binom{r(\bmx)-t+1}{t}\le\abs{\cD_t(\bmx)}\le\binom{r(\bmx)+t-1}{t}$. In particular, $\abs{\cD_1(\bmx)}=r(\bmx)$.
\end{lemma}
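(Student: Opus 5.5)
The statement is Levenshtein's bound: for $\bmx$ with $r=r(\bmx)\ge t$,
\[
\binom{r-t+1}{t}\le\abs{\cD_t(\bmx)}\le\binom{r+t-1}{t},
\]
and $\abs{\cD_1(\bmx)}=r$. I would prove it by viewing a subsequence of $\bmx$ entirely through its run structure. Write $\bmx$ as runs $R_1R_2\cdots R_r$ with lengths $\ell_1,\ldots,\ell_r$. Deleting a symbol inside a run gives the same result no matter which symbol of that run is deleted. Hence every element of $\cD_t(\bmx)$ is determined by a vector $(d_1,\ldots,d_r)$ with $0\le d_j\le\ell_j$ and $\sum_j d_j=t$, namely the number of symbols removed from each run.

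For the upper bound, the map from such vectors onto $\cD_t(\bmx)$ is surjective. Dropping the constraint $d_j\le\ell_j$, the number of nonnegative vectors summing to $t$ is $\binom{r+t-1}{t}$, which gives $\abs{\cD_t(\bmx)}\le\binom{r+t-1}{t}$.

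For the case $t=1$, the vectors are exactly the $r$ unit vectors, so it suffices to show that deleting from different runs gives different results. Suppose we delete one symbol from run $i$ and, separately, one from run $j$, with $i<j$. The two outputs agree on the prefix before run $i$. Compare the $(\ell_1+\cdots+\ell_i)$-th symbols: the first output has already left run $i$, so this symbol comes from run $i+1$, while the second output still shows the symbol of run $i$. Adjacent runs have different symbols, so the outputs differ, and $\abs{\cD_1(\bmx)}=r$.

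For the lower bound, I would restrict to injective deletion patterns. Choose $t$ runs with indices $j_1<\cdots<j_t$ such that no two are adjacent and neither $1$ nor $r$ is chosen, and delete one symbol from each. By a stars-and-bars count, the number of such choices is $\binom{r-t-1}{t}$. A slightly looser rule, which only forbids adjacency, gives $\binom{r-t+1}{t}$ patterns. The real target is therefore to show that deleting exactly one symbol from each of $t$ pairwise nonadjacent runs is injective.

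I would prove injectivity by induction on $t$, scanning from the left. Take two distinct patterns and let $i$ be the smallest index where they differ, with pattern $A$ deleting from run $i$ and pattern $B$ not. Everything before run $i$ is identical. If run $i$ has length $\ge 2$, or $B$ does not delete from run $i+1$, then at run $i$ the output of $A$ has a shorter maximal block of that symbol than the output of $B$. If run $i$ has length $1$, deleting it merges runs $i-1$ and $i+1$ in the output of $A$. Since $A$ also avoids deleting from runs $i\pm1$ (nonadjacency), the merged run is strictly longer in $A$ than the corresponding run in $B$. Either way the run-length sequences of the two outputs differ at the first point of divergence, so the outputs differ.

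The main obstacle is exactly this case analysis around runs of length $1$, including what happens at the boundaries $j=1$ and $j=r$. If it becomes messy, I would simply cite Levenshtein's argument, as the paper does, since the lemma is quoted from \cite{Levenshtein2001JCTA}, and only the $t=1$ identity is actually used later.
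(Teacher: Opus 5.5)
The paper does not prove this lemma. It quotes it from \cite[eq.~(11)]{Levenshtein2001JCTA}, and only the identity $\abs{\cD_1(\bmx)}=r(\bmx)$ is used later, in \Cref{thm_asymptotic1del}. Your proposal therefore replaces a citation with a self-contained argument, and the argument is correct.

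\begin{itemize}
\item \emph{Upper bound.} The surjection from run-deletion vectors $(d_1,\ldots,d_r)$ onto $\cD_t(\bmx)$ gives $\binom{r+t-1}{t}$.
\item \emph{The case $t=1$.} Your first-difference comparison settles it.
\item \emph{Lower bound.} Restricting to patterns that delete one symbol from each of $t$ pairwise nonadjacent runs gives exactly $\binom{r-t+1}{t}$ once injectivity is shown. The detour through $\binom{r-t-1}{t}$ is not needed.
\end{itemize}

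Your injectivity case analysis is heavier than necessary and not quite accurate as written. Write $a_j$ for the symbol of run $j$. Deleting a run of length one merges runs $i-1$ and $i+1$ only when $i>1$ and $a_{i-1}=a_{i+1}$, which can fail for $q>2$. Those cases are harmless, since the outputs then differ immediately after the common prefix. Also, no induction on $t$ is actually used.

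A uniform proof is just your $t=1$ argument again. Take two patterns $S\neq S'$ and let $i$ be the least index in their symmetric difference, say $i\in S\setminus S'$.
\begin{itemize}
\item Then $i<r$, because $S\cap[r-1]=S'\cap[r-1]$ together with $r\in S\setminus S'$ would force $\abs{S}>\abs{S'}$.
\item Both outputs share their first $L=\sum_{j<i}\ell_j-\abs{S\cap[i-1]}$ symbols.
\item At position $L+\ell_i$, the output of $S$ carries $a_{i+1}$, the first symbol of run $i+1$. That run is intact because nonadjacency gives $i+1\notin S$.
\item At the same position, the output of $S'$ carries $a_i$, the last symbol of the intact run $i$.
\end{itemize}
Since $a_i\neq a_{i+1}$, the outputs differ.

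Your route gives a complete, alphabet-independent proof inside the paper. The citation only buys brevity.
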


Inspired by this lemma, subset $\cC_1$ is defined to be the set of codewords whose first row has at least $r(n)$ runs. The number $r(n)$ will be specified in \Cref{lem_runlemma}. Then we can upper-bound $\abs{\cC_1}$ by the standard sphere packing argument. To this end, we need to calculate the size of $\cV_{q,k,n}=\bigcup_{\bms\in\Phi_{q,k}^n}\cB_{\parenv{1,0,\ldots,0}}^D(\bms)$. The following lemma extends \cite[Proposition 11]{BesartDollma202509} to general $k\ge2$.

\begin{lemma}\label{lem_errorspace}
    We have $\abs{\cV_{2,k,n}}=k(k+1)^{n-1}+(k-1)(k+1)^{n-2}(n-1)$.
\end{lemma}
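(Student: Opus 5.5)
The plan is to reduce the count of $\cV_{2,k,n}$ to the function $t(\bmy)$ of \Cref{prop_GSPB1} and then evaluate a binomial sum in closed form.

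\textbf{Step 1: the fibre over each $\bmy$.} Every element of $\cV_{2,k,n}$ has the form $(\bmy,\bms_2,\ldots,\bms_k)$ with $\bmy\in\{0,1\}^{n-1}$. For $q=2$, a sequence $\bms\in\Phi_{2,k}^n$ is exactly a tuple of rows with $\bms_1\le\bms_2\le\cdots\le\bms_k$ componentwise, because every column is nondecreasing. Hence, for a fixed first coordinate $\bmy$, the tuples $(\bms_2,\ldots,\bms_k)$ that can occur are precisely those for which some $\bms_1$ satisfies $\bmy\in\cD_1(\bms_1)$ and $\bms_1\le\bms_2\le\cdots\le\bms_k$. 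By definition there are $t(\bmy)$ such tuples.

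\textbf{Step 2: sum over $\bmy$.} Elements with different first coordinates are distinct, so the fibres are disjoint and
\[
\abs{\cV_{2,k,n}}=\sum_{\bmy\in\{0,1\}^{n-1}}t(\bmy).
\]
By \Cref{prop_GSPB1}, $t(\bmy)$ depends only on the weight $w$ of $\bmy$. Grouping by weight gives
\[
\abs{\cV_{2,k,n}}=\sum_{w=0}^{n-1}\binom{n-1}{w}\parenv{k^{n-w}+w(k-1)k^{n-w-1}}.
\]

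\textbf{Step 3: evaluate the two sums.}
\begin{itemize}
\item For the first sum, factor out $k$ and apply the binomial theorem:
\[
\sum_{w=0}^{n-1}\binom{n-1}{w}k^{n-w}=k\sum_{w=0}^{n-1}\binom{n-1}{w}k^{(n-1)-w}=k(k+1)^{n-1}.
\]
\item For the second sum, use $w\binom{n-1}{w}=(n-1)\binom{n-2}{w-1}$ and reindex with $w'=w-1$:
\[
(k-1)\sum_{w=1}^{n-1}(n-1)\binom{n-2}{w-1}k^{(n-2)-(w-1)}=(k-1)(n-1)(k+1)^{n-2}.
\]
\end{itemize}
Adding the two sums yields the claimed formula $\abs{\cV_{2,k,n}}=k(k+1)^{n-1}+(k-1)(k+1)^{n-2}(n-1)$.

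\textbf{Expected difficulty.} The combinatorial content is already contained in \Cref{prop_GSPB1}, so the remaining risk is Step 1. There one must check that the definition of $t(\bmy)$ exactly captures the fibre of $\cV_{2,k,n}$ over $\bmy$. In particular, the fibre is a union over all $\bms_1\in\cI_1(\bmy)$ and not one term per $\bms_1$, and this union is precisely what $t(\bmy)$ counts.
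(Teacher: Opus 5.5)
Your proof is correct and matches the paper's argument. The paper likewise identifies the fibre of $\cV_{2,k,n}$ over each first row $\bmy$ with the $t(\bmy)$ tuples counted in \Cref{prop_GSPB1}, groups by the weight $w$, and sums $\binom{n-1}{w}\bigl(k^{n-w}+w(k-1)k^{n-w-1}\bigr)$. The only difference is that you evaluate the two binomial sums explicitly (binomial theorem, and $w\binom{n-1}{w}=(n-1)\binom{n-2}{w-1}$), whereas the paper states the closed form without computation.
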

\begin{IEEEproof}
Recall that for any $\bmy\in \Sigma_2^{n-1}$ with Hamming weight $w$, $t(n,k;w)$ denotes the number of distinct tuples $(\bms_2, \ldots, \bms_{k})$ such that $\parenv{\bmy,\bms_2,\ldots,\bms_{k}}\in\cV_{2,k,n}$. Therefore, $\abs{\cV_{2,k,n}}=\sum_{w=0}^{n-1}\binom{n-1}{w}\sparenv{k^{n-w}+w(k-1)k^{n-w-1}}=k(k+1)^{n-1}+(k-1)(k+1)^{n-2}(n-1)$.
\end{IEEEproof}

The size of $\cC_2$ is bounded from above by the probabilistic method, as we did in previous section. Recall that to bound $\abs{\cC_2}$, we defined a random variable $X$ which is the sum of several mutually independent random variables. Then we bounded the probability that $X$ deviates from $\E[X]$ by the Hoeffding's inequality (\Cref{lem_hoeffding}). In this subsection, $\abs{\cC_2}$ is bounded by using a similar method. The difference is that $X$ is now not the sum of mutually independent variables. As a result, \Cref{lem_hoeffding} can not be applied. Fortunately, there is an auxiliary result analogous to \Cref{lem_hoeffding}, as shown in \Cref{lem_dependenthoeffding}. Before presenting it, we need to introduce some terminology.

\begin{definition}\cite[Section 2]{Janson2004RSA}
    Let $\mathset{X_{\alpha}:\alpha\in\mathscr{A}}$ be a set of random variables, where $\mathscr{A}$ is an index set.
    \begin{itemize}
        \item A subset $\mathscr{A}^\prime$ of $\mathscr{A}$ is \emph{independent} if the corresponding subsets of variables $X_{\alpha}$ ($\alpha\in\mathscr{A}^\prime$) are mutually independent.
        \item A family $\mathset{\parenv{\mathscr{A}_j,w_j}}_j$ of pairs $\parenv{\mathscr{A}_j,w_j}$, where $\mathscr{A}_j\subseteq\mathscr{A}$ and $0\le w_j\le 1$, is a \emph{fractional cover} of $\mathscr{A}$ if $\sum_{j:\alpha\in\mathscr{A}_j}w_j\ge1$ for every $\alpha\in\mathscr{A}$.
        \item A fractional cover $\mathset{\parenv{\mathscr{A}_j,w_j}}_j$ is \emph{proper} if each $\mathscr{A}_j$ is independent.
        \item The \emph{fractional chromatic number} of $\mathscr{A}$ is $\chi^*\parenv{\mathscr{A}}\triangleq\min\mathset{\sum_{j}w_j:\mathset{\parenv{\mathscr{A}_j,w_j}}_j\text{ is a proper fractional cover of }\mathscr{A}}$.
    \end{itemize}
\end{definition}

The next lemma is a generalization of \Cref{lem_hoeffding}.
\begin{lemma}\cite[Theorem 2.1]{Janson2004RSA}\label{lem_dependenthoeffding}
     Let $\mathset{X_{\alpha}:\alpha\in\mathscr{A}}$ be a set of random variables with $a_{\alpha}\le X_{\alpha}\le b_{\alpha}$ for all $\alpha\in\mathscr{A}$ and some real numbers $a_{\alpha}$ and $b_{\alpha}$. Let $X=\sum_{i=1}^nX_i$. Then for all $t>0$, we have
    \begin{equation*}
        \Pr[X-\E[X]\le -t]\le\e^{-\frac{2t^2}{\chi^*\parenv{\mathscr{A}}\sum_{\alpha\in\mathscr{A}}(b_{\alpha}-a_{\alpha})^2}}.
    \end{equation*}
\end{lemma}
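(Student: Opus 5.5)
The bound is the standard Chernoff argument applied to the exponential moment $\E\sparenv{\e^{-\lambda(X-\E[X])}}$, after splitting $X$ into independent pieces along an optimal proper fractional cover. The one nontrivial step is the convexity argument that recombines the pieces.

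\emph{Step 1: reduce to an exact cover.} Fix a proper fractional cover $\mathset{\parenv{\mathscr{A}_j,w_j}}_j$ attaining $W\triangleq\chi^*\parenv{\mathscr{A}}$. Any subset of an independent set is independent. So we may shrink the sets $\mathscr{A}_j$ (or split a pair $(\mathscr{A}_j,w_j)$ into pieces with smaller weights) until $\sum_{j:\alpha\in\mathscr{A}_j}w_j=1$ holds exactly for every $\alpha$. This does not increase $\sum_j w_j$. Then
\begin{equation*}
X-\E[X]=\sum_j w_jY_j,\qquad Y_j\triangleq\sum_{\alpha\in\mathscr{A}_j}\parenv{X_\alpha-\E[X_\alpha]},
\end{equation*}
and each $Y_j$ is a sum of mutually independent bounded variables. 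Put $C_j=\sum_{\alpha\in\mathscr{A}_j}(b_\alpha-a_\alpha)^2$. By exactness, $\sum_j w_jC_j=\sum_{\alpha\in\mathscr{A}}(b_\alpha-a_\alpha)^2$.

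\emph{Step 2: Jensen with tuned weights.} This is the main obstacle. Naive weights $w_j/W$ would leave a $\max_j C_j$ in the exponent, which is too weak. Instead set $T_j=\sqrt{C_j}$, $S=\sum_j w_jT_j$ and $p_j=w_jT_j/S$, so that $\sum_jp_j=1$. For $u>0$ write $-u(X-\E[X])=\sum_j p_j\cdot\parenv{-u\frac{S}{T_j}Y_j}$. Convexity of $\exp$ gives
\begin{equation*}
\E\sparenv{\e^{-u(X-\E[X])}}\le\sum_jp_j\,\E\sparenv{\e^{-uSY_j/T_j}}.
\end{equation*}
Hoeffding's lemma, applied termwise to the independent summands of $Y_j$, bounds each expectation by $\e^{u^2S^2C_j/(8T_j^2)}=\e^{u^2S^2/8}$. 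Hence the whole sum is at most $\e^{u^2S^2/8}$.

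\emph{Step 3: Cauchy--Schwarz and optimization.} By Cauchy--Schwarz,
\begin{equation*}
S^2=\parenv{\sum_j\sqrt{w_j}\sqrt{w_jC_j}}^2\le\parenv{\sum_jw_j}\parenv{\sum_jw_jC_j}=\chi^*\parenv{\mathscr{A}}\sum_{\alpha}(b_\alpha-a_\alpha)^2\triangleq V.
\end{equation*}
Markov's inequality then gives $\Pr[X-\E[X]\le-t]\le\e^{-ut+u^2V/8}$. Choosing $u=4t/V$ yields $\e^{-2t^2/V}$, which is the claimed bound. Terms with $C_j=0$ are degenerate, with $Y_j$ constant, and can be discarded at the outset.
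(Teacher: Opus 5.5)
Your proof is correct. Passing to an exact proper fractional cover, choosing $p_j=w_j\sqrt{C_j}/S$ in Jensen's inequality so that every block's Hoeffding exponent equals $u^2S^2/8$, and then applying Cauchy--Schwarz and the Chernoff bound with $u=4t/V$ gives exactly the stated tail. The paper does not prove this lemma and only cites \cite[Theorem 2.1]{Janson2004RSA}; your argument is essentially Janson's original proof, with the same exact-cover reduction, weights proportional to $w_j\sqrt{C_j}$, and Cauchy--Schwarz step, and you correctly read the statement's ``$X=\sum_{i=1}^nX_i$'' as $X=\sum_{\alpha\in\mathscr{A}}X_\alpha$.
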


For example, if $X_1,\ldots,X_n$ are mutually independent, then $\mathset{\parenv{[n],1}}$ is a proper fractional cover of $[n]$. Therefore,  $\chi^*\parenv{[n]}=1$ since $\chi^*\parenv{\mathscr{A}}\ge1$ as long as $\mathscr{A}\ne\emptyset$. Then \Cref{lem_dependenthoeffding} implies $\Pr[X-\E[X]\le -t]\le\e^{-\frac{2t^2}{\sum_{i=1}^{n}(b_i-a_i)^2}}$, which is the Hoeffding's bound in \Cref{lem_hoeffding}.

The following lemma helps to bound $\abs{\cC_2}$.
\begin{lemma}\label{lem_runlemma}
For $q,k\ge2$, define
\begin{equation}\label{eq_mqk}
        m_{q,k}=1-\frac{1}{Q_{q,k}^2}\sum_{a\in\Sigma_q}\binom{k+q-a-2}{k-1}^2.
    \end{equation}
 For any $t>0$, the number of sequences $\bmc\in\Phi_{q,k}^n$ with less than $(n-1)m_{q,k}+1-\sqrt{t(n-1)\ln{n}}$ runs in $\bmc_1$ is at most $\frac{Q_{q,k}^n}{n^{t}}$.
\end{lemma}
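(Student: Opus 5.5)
We pick $\bmc$ uniformly at random from $\Phi_{q,k}^n$; the columns $\bmc[1],\ldots,\bmc[n]$ are then i.i.d. uniform on $\Phi_{q,k}$. For $i\in[n-1]$ let $X_i$ be the indicator that $\bmc_1[i]\ne\bmc_1[i+1]$. Then $r(\bmc_1)=1+\sum_{i=1}^{n-1}X_i$. It therefore suffices to show that $X=\sum_{i=1}^{n-1}X_i$ satisfies
\begin{equation*}
\Pr\sparenv{X<(n-1)m_{q,k}-\sqrt{t(n-1)\ln n}}\le n^{-t},
\end{equation*}
and then multiply by $Q_{q,k}^n$.

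\textbf{Step 1 (the mean).} The first entry $a_1$ of a uniform letter in $\Phi_{q,k}$ equals $a$ exactly when $a_2,\ldots,a_k$ form a nondecreasing vector over $\{a,\ldots,q-1\}$. The number of such letters is $Q_{q-a,k-1}=\binom{k-1+q-a-1}{q-a-1}=\binom{k+q-a-2}{k-1}$. Since the two columns are independent,
\begin{equation*}
\Pr[X_i=0]=\sum_{a\in\Sigma_q}\Pr[\bmc_1[i]=a]^2=\frac{1}{Q_{q,k}^2}\sum_{a}\binom{k+q-a-2}{k-1}^2 .
\end{equation*}
Hence $\E[X_i]=m_{q,k}$ and $\E[X]=(n-1)m_{q,k}$.

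\textbf{Step 2 (dependency structure).} The variables $X_i$ are not independent, since $X_i$ and $X_{i+1}$ share column $i+1$. However, $X_i$ depends only on columns $i$ and $i+1$. So any family of indices with pairwise distance at least $2$ is independent. Take $\mathscr{A}_1$ to be the odd indices and $\mathscr{A}_2$ the even indices in $[n-1]$, each with weight $1$. This is a proper fractional cover, so $\chi^*([n-1])\le 2$.

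\textbf{Step 3 (concentration).} Apply \Cref{lem_dependenthoeffding} with $a_\alpha=0$, $b_\alpha=1$ and deviation $s=\sqrt{t(n-1)\ln n}$. This gives the bound $\exp\parenv{-2s^2/(2(n-1))}=n^{-t}$, which is exactly what is needed. The strict inequality ``fewer than'' is covered by the non-strict event in the lemma.

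The only delicate point is Step 2: choosing a cover whose constant $\chi^*\le2$ exactly cancels the factor $2$ in the exponent, so that the threshold $\sqrt{t(n-1)\ln n}$ yields $n^{-t}$ rather than $n^{-t/2}$. The counting in Step 1 is routine.
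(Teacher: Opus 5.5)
Your proposal is correct and takes essentially the same route as the paper. You compute $\E[X_i]=m_{q,k}$ from the marginal distribution of the first row, bound $\chi^*([n-1])\le 2$ with the odd/even proper fractional cover, and apply \Cref{lem_dependenthoeffding} with deviation $\sqrt{t(n-1)\ln n}$ to get $n^{-t}$; the only cosmetic difference is that you compute $\Pr[X_i=0]$ directly as a sum of squares, whereas the paper expands $\Pr[X_i=1]$ as a double sum.
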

\begin{IEEEproof}
Select a sequence $\bmc=\bmc[1]\cdots\bmc[n]$ uniformly at random from $\Phi_{q,k}^n$. Then it holds that $\Pr[\bmc[i]=\sigma]=\frac{1}{Q_{q,k}}$ for each $i$ and each $\sigma\in\Phi_{q,k}$. Furthermore, random variables $\bmc[1],\ldots,\bmc[n]$ are mutually independent. Recall that $\bmc_1$ is the first row of $\bmc$. For each $1\le i\le n$ and $a\in\Sigma_q$, we have $\Pr[\bmc_1[i]=a]=\binom{k+q-a-2}{k-1}/Q_{q,k}$, since there are exactly $\binom{k+q-a-2}{k-1}$ tuples $\parenv{a_2,\ldots,a_{k}}$ such that $\sparenv{a,a_2,\ldots,a_{k}}^\T\in\Phi_{q,k}$.

Define $n-1$ random variables $X_1,\ldots,X_{n-1}$ as 
    \begin{equation*}
        X_i=
        \begin{cases}
            1,\mbox{ if }\bmc_1[i]\ne \bmc_1[i+1],\\
            0,\mbox{ otherwise}.
        \end{cases}
    \end{equation*}
    For every $1\le i<n$, it is straightforward to verify that
    \begin{align*}
        \Pr\sparenv{X_i=1}&=\frac{1}{Q_{q,k}^2}\sum_{a\in\Sigma_q}\sum_{b\in\Sigma_q\setminus\{a\}}\binom{k+q-a-2}{k-1}\binom{k+q-b-2}{k-1}\\
        &=\frac{1}{Q_{q,k}^2}\sum_{a\in\Sigma_q}\binom{k+q-a-2}{k-1}\sparenv{\sum_{b\in\Sigma_q}\binom{k+q-b-2}{k-1}-\binom{k+q-a-2}{k-1}}\\
        &=\frac{1}{Q_{q,k}^2}\sparenv{\sum_{a\in\Sigma_q}\binom{k+q-a-2}{k-1}}^2-\frac{1}{Q_{q,k}^2}\sum_{a\in\Sigma_q}\binom{k+q-a-2}{k-1}^2\\
        &=m_{q,k}.
    \end{align*}
    Let $X=\sum_{i=1}^{n-1}X_i$. Then $\E\sparenv{X}=\sum_{i=1}^{n-1}\E[X_i]=\sum_{i=1}^{n-1}\Pr[X_i=1]=(n-1)m_{q,k}$. It is clear that $r(\bmc_1)=X+1$. Therefore,
    \begin{equation*}
        \Pr\sparenv{r(\bmc_1)<(n-1)m_{q,k}+1-\sqrt{t(n-1)\ln{n}}}=\Pr\sparenv{X-\E\sparenv{X}<-\sqrt{t(n-1)\ln{n}}}.
    \end{equation*}
    We need to bound from above the probability on the right hand side. The estimation in \Cref{lem_hoeffding} is not applicable here, since $X_1,\ldots,X_{n-1}$ are not mutually independent. Indeed, when $q=2$, we have
    \begin{equation*}
        \begin{aligned}
            &\Pr[X_i=X_{i+1}=1]\\
            =&\Pr[\bmc_1[i]=0,\bmc_1[i+1]=1,\bmc_1[i+2]=0]+\Pr[\bmc_1[i]=1,\bmc_1[i+1]=0,\bmc_1[i+2]=1]\\
            =&\frac{k^2+1}{(k+1)^3}\ne\frac{1}{(k+1)^2}=m_{2,k}^2=\Pr[X_i=1]\cdot\Pr[X_{i+1}=1],
        \end{aligned}
    \end{equation*}
    implying that $X_i$ and $X_{i+1}$ are not independent.

    Fortunately, we can use \Cref{lem_dependenthoeffding}, which does not rely on independence of random variables. Let $I_{odd}=\mathset{i\in[n-1]:i\text{ is odd}}$ and $I_{even}=\mathset{i\in[n-1]:i\text{ is even}}$. Since $\bmc_1[1],\ldots,\bmc_1[n]$ are mutually independent and $X_i$ only relies on $\bmc_1[i]$ and $\bmc_1[i+1]$, we conclude that $X_i$ and $X_j$ are independent as long as $\abs{i-j}>1$. As a result, both $I_{odd}$ and $I_{even}$ are independent. It is easy to verify that the family $\mathset{\parenv{I_{odd},1},\parenv{I_{even},1}}$ is a proper fractional cover of $[n-1]$. Consequently, we have $\chi^*\parenv{[n-1]}\le 2$. Then it follows from \Cref{lem_dependenthoeffding} that $\Pr\sparenv{X-\E\sparenv{X}<-\sqrt{t(n-1)\ln{n}}}\le\frac{1}{n^t}$.

    Since $\bmc$ is selected uniformly from $\Phi_{q,k}^n$, we can conclude that there are at most $\frac{Q_{q,k}^n}{n^{t}}$ sequences in $\Phi_{q,k}^n$ with less than $(n-1)m_{q,k}+1-\sqrt{t(n-1)\ln{n}}$ runs in the first row.
\end{IEEEproof}

Now we can present the asymptotic bound.
\begin{theorem}\label{thm_asymptotic1del}
   Let $k\ge 2$. It holds that
   \begin{equation*}
      A_{2,k}^D\parenv{n;\parenv{1,0,\ldots,0}}\le\frac{(k+1)^2}{2k}\cdot\frac{\abs{\cV_{2,k,n}}}{n}(1+o(1)).
   \end{equation*}
\end{theorem}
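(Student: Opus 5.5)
We follow the two-part partition strategy used throughout this section. Let $\cC\subseteq\Phi_{2,k}^n$ be a $(1,0,\ldots,0)$-CDCC. We set
\begin{equation*}
r_0=(n-1)m_{2,k}+1-\sqrt{2(n-1)\ln n}
\end{equation*}
and split $\cC$ into $\cC_1=\mathset{\bmc\in\cC:r(\bmc_1)\ge r_0}$ and $\cC_2=\cC\setminus\cC_1$.

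The bound on $\cC_2$ comes directly from \Cref{lem_runlemma} with $t=2$, which gives $\abs{\cC_2}\le (k+1)^n/n^2$. This must be compared with the main term. Since $\abs{\cV_{2,k,n}}\ge k(k+1)^{n-1}$, the main term $\frac{(k+1)^2}{2k}\cdot\frac{\abs{\cV_{2,k,n}}}{n}$ is of order $(k+1)^n/n$. Hence $\abs{\cC_2}$ is a lower-order term and is absorbed into the $o(1)$. (Taking $t=2$ is enough; any $t>1$ would do.)

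For $\cC_1$ we use a sphere-packing argument inside $\cV_{2,k,n}$. For distinct codewords the balls $\cB^D_{(1,0,\ldots,0)}(\bmc)$ are pairwise disjoint, because the code corrects the error pattern, and all of them lie in $\cV_{2,k,n}$. By \Cref{lem_lowerbounddelball}, the ball around $\bmc$ has size $\abs{\cD_1(\bmc_1)}=r(\bmc_1)\ge r_0$. Therefore
\begin{equation*}
\abs{\cC_1}\le\frac{\abs{\cV_{2,k,n}}}{r_0}.
\end{equation*}

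It remains to evaluate $m_{2,k}$ from \eqref{eq_mqk}. With $q=2$ we have $Q_{2,k}=k+1$. The summand for $a=0$ is $\binom{k}{k-1}^2=k^2$, and the summand for $a=1$ is $\binom{k-1}{k-1}^2=1$. So
\begin{equation*}
m_{2,k}=1-\frac{k^2+1}{(k+1)^2}=\frac{2k}{(k+1)^2}.
\end{equation*}
Consequently $r_0=\frac{2k}{(k+1)^2}\,n\,(1-o(1))$, because the $\sqrt{n\ln n}$ correction is $o(n)$. This gives
\begin{equation*}
\abs{\cC_1}\le\frac{(k+1)^2}{2k}\cdot\frac{\abs{\cV_{2,k,n}}}{n}(1+o(1)).
\end{equation*}
Adding the bound on $\abs{\cC_2}$ completes the argument.

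The main point needing care is the sphere-packing step: the disjoint balls must be counted inside $\cV_{2,k,n}$ rather than in all of $\Sigma_2^{n-1}\times(\Sigma_2^n)^{k-1}$. This holds by the definition of $\cV_{2,k,n}$. Beyond that, the only check is that the error term $(k+1)^n/n^2$ is indeed $o\parenv{(k+1)^n/n}$, which is immediate.
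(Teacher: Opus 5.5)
Your proof is correct and follows the paper's argument essentially verbatim: the same threshold $\frac{2k}{(k+1)^2}(n-1)+1-\sqrt{2(n-1)\ln n}$, the same partition by the number of runs in the first row, sphere packing inside $\cV_{2,k,n}$ via \Cref{lem_lowerbounddelball}, and \Cref{lem_runlemma} with $t=2$ for $\cC_2$. You also explicitly check that $\abs{\cC_2}\le (k+1)^n/n^2$ is negligible, a step the paper leaves implicit. The main term is in fact $\Theta((k+1)^n)$, so your lower bound of order $(k+1)^n/n$ is more than enough for that check.
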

\begin{IEEEproof}
   Denote $n_0=\frac{2k}{(k+1)^2}(n-1)+1-\sqrt{2(n-1)\ln{n}}$. Let $\cC$ be a $\parenv{1,0,\ldots,0}$-CDCC. Partition $\cC$ into two subsets:
    \begin{equation*}
        \cC_1=\mathset{\bmc\in\cC:\bmc_1\text{ has at least }n_0\text{ runs}},\;\cC_2=\cC\setminus\cC_1.
    \end{equation*}
    It follows from \Cref{lem_lowerbounddelball} that
    \begin{equation*}
        \abs{\cB_{\parenv{1,0,\ldots,0}}^D(\bmc)}\ge n_0,
    \end{equation*}
    for every $\bmc\in\cC_1$. Since $\cC_1$ is a $\parenv{1,0,\ldots,0}$-CDCC, we obtain
    \begin{equation*}
        \abs{\cC_1}\le\frac{\abs{\cV_{2,k,n}}}{n_0}=\frac{(k+1)^2}{2k}\cdot\frac{\abs{\cV_{2,k,n}}}{n}(1+o(1)).
    \end{equation*}
    
    On the other hand, by \eqref{eq_mqk}, we have $m_{2,k}=\frac{2k}{(k+1)^2}$. Then it follows from \Cref{lem_runlemma} that $\abs{\cC_2}\le\frac{Q_{2,k}^n}{n^{2}}$. Now the proof is completed.
\end{IEEEproof}

\begin{remark}
    It can be verified that $\sum_{w=0}^{n-1} \sum_{r=1}^{2w+1} c(n-1;r;w)\cdot t(n,k;w)= \abs{\mathcal{V}_{2,k,n}}$. Consequently,
    $$
    \sum_{w = 0}^{n-1}\sum_{r =1}^{2w+1}\frac{c(n-1;r;w)\cdot t(n,k;w)}{r}\ge\frac{\abs{\cV_{2,k,n}}}{n-1}=(k-1)(k+1)^{n-2}(1+o(1)).
    $$
    Moreover, we have $\frac{(k+1)^2}{2k}\cdot\frac{\abs{\cV_{2,k,n}}}{n}(1+o(1))=\frac{(k-1)(k+1)^n}{2k}(1+o(1))$. Thus, both bounds in \Cref{thm_GSPBdel,thm_asymptotic1del} are smaller than the size of the whole space $\Phi_{2,k}^n$ only by a constant factor. This phenomenon arises from the fact that the set of erroneous words (i.e., $\cV_{2,k,n}$) has size $\Theta(n(k+1)^{n})$, which is significantly larger than $\abs{\Phi_{2,k}^n}=(k+1)^n$.
\end{remark}

\section{Constructions of Codes Correcting Deletions}\label{sec_consdel}
%%%%%%%%%%%%%%%%%%%%%%%%%%%%%%%%%%
This section presents constructions of composite-deletion correcting codes (CDCCs). We first focus on the binary case ($q=2$), after which we extend the ideas to general alphabets.

For any $q\ge 2$ and $\bmx\in\Sigma_q^{n}$, define $\vt\parenv{\bmx}\triangleq\sum_{i=1}^{n}ix_i$. Recall that $\cD_1(\bmx)$ is the set of all sequences obtained from $\bmx$ by one deletion. Our constructions are based on the following fundamental lemma.
\begin{lemma}(\cfcite{VL1966}{proof of Theorem 1})\label{lem_vtcode}
    Let $n\ge2$ and $N>n$ be two integers. For any $\bmx\in\Sigma_2^n$, given $\vt\parenv{\bmx}\pmod{N}$, one can decode $\bmx$ from any $\bmx^\prime\in\cD_1(\bmx)$.
\end{lemma}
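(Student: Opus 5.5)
The plan is to reproduce the classical Varshamov–Tenengolts argument, keeping track of the fact that we only know the residue $\vt(\bmx)\bmod N$ with $N>n$. Let $\bmx^\prime\in\cD_1(\bmx)$ be received, and let $w=\Wth{1}{\bmx^\prime}$; since the length $n$ is known, we also know $\bmx$ has length $n$. Suppose the deleted bit was $b\in\{0,1\}$ at position $p$. Let $L_0$ (resp. $R_1$) denote the number of $0$'s to the left (resp. $1$'s to the right) of the deletion point in $\bmx^\prime$. A direct computation gives
\begin{equation*}
\vt(\bmx)-\vt(\bmx^\prime)=\begin{cases} R_1, & b=0,\\ p+R_1 = w+L_0+1, & b=1,\end{cases}
\end{equation*}
where for $b=1$ we used $p-1=L_0+(w-R_1)$, so $p+R_1=w+1+L_0$. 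Hence the deficiency $\Delta=\vt(\bmx)-\vt(\bmx^\prime)$ satisfies $0\le\Delta\le w$ when $b=0$ and $w+1\le\Delta\le n$ when $b=1$.

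Since $0\le\Delta\le n<N$, the residue $\Delta\bmod N=(\vt(\bmx)\bmod N)-(\vt(\bmx^\prime)\bmod N)\bmod N$, computable from the given data, determines $\Delta$ exactly. Then the decoder distinguishes the two cases: if $\Delta\le w$, a $0$ was deleted and it is reinserted so that exactly $\Delta$ ones lie to its right; if $\Delta>w$, a $1$ was deleted and it is reinserted so that exactly $\Delta-w-1$ zeros lie to its left. In either case the insertion position may be ambiguous only within a run of equal symbols, which yields the same sequence, so $\bmx$ is uniquely recovered.

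The only delicate step is the computation of $\Delta$ in the case $b=1$ and checking the ranges are disjoint and contained in $[0,n]$, which is what requires $N>n$ (rather than $N\ge n+1$ for the standard VT code); this is routine bookkeeping of how indices shift after deletion.
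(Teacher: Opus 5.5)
Your proof is correct and is the classical Varshamov--Tenengolts/Levenshtein argument that the paper cites from Levenshtein's proof of Theorem~1; the paper gives no proof of its own. In that argument the deficiency $\Delta=\vt(\bmx)-\vt(\bmx^\prime)$ lies in $[0,n]$, so $N>n$ lets you recover it exactly, and the weight $w$ of $\bmx^\prime$ separates the deleted-$0$ and deleted-$1$ cases. One small quibble: your closing parenthetical contrasts $N>n$ with $N\ge n+1$, but for integers these are the same condition, which is exactly the standard VT modulus.
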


\subsection{$1$-CDCCs and $(1,0,\ldots,0)$-CDCCs}\label{sec_10CDCC}
%%%%%%%%%%%%%%%%%%%%%%%%%%%%%%%%%%%%%%%%%%%
This subsection focuses on constructing $1$-CDCCs. By definition, any code that is a $1$-CDCC is also a $(1,0,\ldots,0)$-CDCC.

In \cite[Section V-B]{BesartDollma202509}, Dollma \emph{et al} constructed a binary $2$-resolution $1$-CDCC with redundancy $\ceilenv{\log_{3}(2n)}+5$. 
Here, we present a construction of binary $k$-resolution $1$-CDCCs with redundancy $\ceilenv{\log_{Q_{2,k}}(n+1)}$ for all $k\ge 2$, where $Q_{2,k}=\abs{\Phi_{2,k}}=\binom{k+2-1}{2-1}=k+1$.
\begin{theorem}\label{thm_1cdcc}
    Let $n\ge3$ and $0\le a\le n$. Define
\begin{equation*}
        \cC_{1}^D=\mathset{\bmc\in\Phi_{2,k}^n:\sum_{i=1}^{k}\vt\parenv{\bmc_i}\equiv a\pmod{n+1}}.
    \end{equation*}
    Then $\cC_{1}^D$ is a $1$-CDCC. Furthermore, there is a choice of $a$ such that the redundancy of $\cC_{1}^D$ is at most $\log_{k+1}(n+1)$.
\end{theorem}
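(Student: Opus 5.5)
The argument has two parts: the counting bound follows from pigeonhole, and the correction property comes from reducing to \Cref{lem_vtcode} with modulus $N=n+1>n$.

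\textbf{Redundancy.} The $n+1$ sets
\[
\mathset{\bmc\in\Phi_{2,k}^n:\textstyle\sum_i\vt(\bmc_i)\equiv a\pmod{n+1}},\qquad 0\le a\le n,
\]
partition $\Phi_{2,k}^n$. Hence some $a$ gives a set of size at least $(k+1)^n/(n+1)$, and its redundancy is at most $\log_{k+1}(n+1)$.

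\textbf{Correction.} Let $\bmc$ be transmitted, and suppose a single deletion occurs in some unknown row $j$, while every other row arrives intact with length $n$. The decoder first identifies $j$ as the unique received row of length $n-1$. If every received row has length $n$, there is no error and decoding is trivial.

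Since the rows $\bmc_i$ with $i\ne j$ are known exactly, the decoder computes
\[
\vt(\bmc_j)\equiv a-\sum_{i\ne j}\vt(\bmc_i)\pmod{n+1}.
\]
Because $n+1>n$, \Cref{lem_vtcode} applied to the binary row $\bmc_j\in\Sigma_2^n$ recovers $\bmc_j$ from its single-deletion output. All rows are then known, so $\bmc$ is recovered.

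The same reasoning shows the code is a $(1,0,\ldots,0)$-CDCC. The nondecreasing column constraint is never used in decoding, so it causes no difficulty.

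\textbf{Main obstacle.} There is no real obstacle. The only point needing care is locating the erroneous row, and this is immediate from the row lengths in the deletion model.
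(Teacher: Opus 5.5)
Your proposal is correct and follows essentially the same route as the paper. You locate the deleted row by its length, recover its $\vt$ value modulo $n+1$ from the code constraint and the intact rows, and apply \Cref{lem_vtcode} with $N=n+1>n$; the redundancy bound then comes from pigeonhole over the $n+1$ residue classes.
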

\begin{IEEEproof}
    Suppose that $\bmc$ is the transmitted codeword and $\bmy$ is the received sequence. The length of each $\bmy_i$ reveals whether a deletion occurred in that row. Suppose without loss of generality that $\bmy_1\in\cD_1(\bmc_1)$. Consequently, $\bmc_i=\bmy_i$ for all $1< i\le k$. Therefore, we can compute $\vt\parenv{\bmc_i}\pmod{n+1}$ for every $1< i\le k$. By the definition of $\cC_{1}^D$, it holds that $\vt\parenv{\bmc_1}\pmod{n+1}=\parenv{a-\sum_{i=2}^{k}\vt\parenv{\bmc_i}}\pmod{n+1}$. Now \Cref{lem_vtcode} allows us to recover $\bmc_1$ from $\bmy_1$. The statement regarding redundancy follows directly from the pigeonhole principle.
\end{IEEEproof}

By identifying $\Phi_{2,k}$ with $\Sigma_{k+1}$, it is easy to verify that $\sum_{i=1}^{k}\vt\parenv{\bmc_i}=\sum_{j=1}^nj\bmc[j]$. This observation leads to an efficient and systematic encoder for the code in \Cref{thm_1cdcc}, as described in \Cref{alg_1cdcc}.
\begin{proposition}
    \Cref{alg_1cdcc} is a systematic encoder for $\cC_{1}^D$, achieving redundancy $\ceilenv{\log_{k+1}(n+1)}$.
\end{proposition}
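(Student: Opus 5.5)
We have not seen \Cref{alg_1cdcc}, so we assume the natural systematic encoder. It writes the $N=\ceilenv{\log_{k+1}(n+1)}$ redundancy symbols into a fixed set of columns whose weights are powers of $k+1$, and places the message symbols (elements of $\Sigma_{k+1}\cong\Phi_{2,k}$) in the remaining columns unchanged. The proof has three steps.

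\textbf{Reduce the syndrome to one row.} By the identification $\Phi_{2,k}\cong\Sigma_{k+1}$ via $\sigma\mapsto\Wth{1}{\sigma}$, we have $\sum_{i=1}^k\vt(\bmc_i)=\sum_{j=1}^n j\,\bmc[j]$. So membership in $\cC_1^D$ is the single congruence
\[
\sum_{j=1}^n j\,\bmc[j]\equiv a \pmod{n+1}
\]
on a sequence over $\Sigma_{k+1}$.

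\textbf{Solve for the redundancy symbols.} Let $n^\prime=n-N$ be the number of message symbols. Reserve the redundancy positions $p_t=(k+1)^t$ for $t=0,\ldots,N-1$. These lie in $[n]$ as long as $(k+1)^{N-1}\le n$, which holds since $(k+1)^{N-1}<n+1$ by the choice of $N$. Place the message in the other $n^\prime$ positions. Compute the deficiency
\[
d=\parenv{a-\sum_{j\notin\{p_t\}} j\,\bmc[j]}\bmod (n+1).
\]
Write $d=\sum_{t}d_t(k+1)^t$ in base $k+1$, with digits $d_t\in\Sigma_{k+1}$; this is possible because $d\le n<(k+1)^N$. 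Set $\bmc[p_t]=d_t$. Then $\sum_t p_t\bmc[p_t]=d$ exactly, so the congruence holds. Each $d_t$ is a valid letter of $\Phi_{2,k}$, namely the column with $d_t$ ones at the bottom, so $\bmc\in\Phi_{2,k}^n$ and therefore $\bmc\in\cC_1^D$.

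\textbf{Systematic property, efficiency and redundancy.} The message appears verbatim in the fixed non-reserved columns, so the encoder is systematic. Encoding costs $O(n)$ arithmetic operations. Since $n-N$ information symbols over an alphabet of size $Q_{2,k}=k+1$ are encoded into length $n$, the redundancy is exactly $N=\ceilenv{\log_{k+1}(n+1)}$.

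\textbf{Possible obstacle: reserved positions colliding.} Distinct powers of $k+1$ are distinct integers, so the positions $p_t$ never collide. The only thing to check is that they fit within $[n]$, which holds for $n\ge3$ as argued above. Decoding itself is given by \Cref{thm_1cdcc}.
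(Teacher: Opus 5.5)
Your proposal is correct and is essentially the paper's argument: the message goes in the columns outside $S=\{(k+1)^j:0\le j\le m-1\}$, the base-$(k+1)$ digits of the deficiency $d$ go on $S$, and the identity $\sum_{i=1}^k\vt(\bmc_i)=\sum_{j=1}^n j\,\bmc[j]$ verifies the congruence modulo $n+1$. You also explicitly check that $(k+1)^{m-1}\le n$, so that $S\subseteq[n]$; the paper leaves this small detail implicit.
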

\begin{IEEEproof}
    Since $\bmc\mid_{[n]\setminus S}=\bmx$, the encoder is systematic. It is immediate to check that $\sum_{i=1}^{k}\vt\parenv{\bmc_i}\equiv\sum_{j=1}^nj\bmc[j]\equiv\sum_{j\in[n]\setminus S}j\bmc[j]+\sum_{j\in S}j\bmc[j]\equiv (a-d)+d\equiv a\pmod{n+1}$. Therefore, we have $\bmc\in\cC_{1}^D$. In other words, \Cref{alg_1cdcc} encodes an arbitrary $\bmx\in\Phi_{2,k}^m$ into a codeword in $\cC_{1}^D$. Clearly, the redundancy is $n-m=\ceilenv{\log_{k+1}(n+1)}$.
\end{IEEEproof}

\begin{algorithm*}[t]
\small{
\DontPrintSemicolon
\SetAlgoLined
\KwIn {$\bmx\in\Sigma_{k+1}^{n-m}$, where $m=\ceilenv{\log_{k+1}(n+1)}$}
\KwOut {$\bmc\in\cC_{1}^D$}
\textbf{Initialization:}\;
$S\gets\mathset{(k+1)^j:0\le j\le m-1}$\;
$\bmc\mid_{[n]\setminus S}\gets\bmx$\;
$\bmc\mid_S\gets 0^m$\;
$d\gets\parenv{a-\sum_{j=1}^{n}j\bmc[j]}\pmod{n+1}$\;
\If{$1\le d\le n$}
{
write $d$ as $d=\sum_{j=0}^{m-1}d_j(k+1)^{j}$, where each $0\le d_j\le k$\tcp*{This is possible because $d\le n<(k+1)^m$.}
$\bmc\mid_S\gets\parenv{d_0,\ldots,d_{m-1}}$\;
}
\Return{$\bmc$}
\caption{Systematic Encoder for $\cC_{1}^D$}
\label{alg_1cdcc}}
\end{algorithm*}

\subsection{$t$-$(1,\ldots,1)$-CDCCs}
%%%%%%%%%%%%%%%%%%%%%%%%%%%%%%%%%%%%%%%%%%%%%%%%%%%%
The construction of \Cref{thm_1cdcc} can be generalized to correct deletions in up to $t$ arbitrary rows, as shown in the following theorem.
\begin{theorem}\label{thm_t1cdcc}
    Let $n\ge3$ and $2\le t\le k$ be integers. Let $p>\max\mathset{k-1,n}$ be a prime. For any $0\le a_0,a_1,\ldots,a_{t-1}<p$, the set
    \begin{equation*}
        \mathset{\bmc\in\Phi_{2,k}^n:\sum_{i=1}^{k}i^j\vt\parenv{\bmc_{i}}\equiv a_{j}\pmod{p},j=0,\ldots,t-1}
    \end{equation*}
    is a $t$-$(1,\ldots,1)$-CDCC. Moreover, there exists a choice of $a_0,a_1,\ldots,a_{t-1}$ such that its redundancy is at most $t\log_{k+1}(p)$.
\end{theorem}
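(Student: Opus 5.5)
The plan follows the proof of \Cref{thm_1cdcc}, with the single VT checksum replaced by $t$ weighted checksums. Suppose $\bmc$ is transmitted and $\bmy$ is received. The first step is to locate the erroneous rows. Because each affected row suffers exactly one deletion, comparing the lengths of the $\bmy_i$ with $n$ identifies the set $T=\mathset{i:\abs{\bmy_i}=n-1}$ of affected rows, with $\abs{T}=u\le t$. For every $i\notin T$ we have $\bmc_i=\bmy_i$, so $\vt\parenv{\bmc_i}\bmod p$ is known for these rows.

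The second step recovers the individual VT values of the erroneous rows. Moving the known terms to the right-hand side, the defining congruences become
\begin{equation*}
\sum_{i\in T}i^j\,\vt\parenv{\bmc_i}\equiv a_j-\sum_{i\notin T}i^j\,\vt\parenv{\bmc_i}\pmod{p},\quad j=0,\ldots,t-1.
\end{equation*}
These are $t$ linear equations over $\mathbb{F}_p$ in the $u$ unknowns $x_i=\vt\parenv{\bmc_i}\bmod p$, $i\in T$. I will use only the first $u$ equations ($j=0,\ldots,u-1$). Their coefficient matrix is the $u\times u$ Vandermonde matrix $\parenv{i^j}$, with $i\in T$ and $0\le j<u$. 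Its determinant $\prod_{i<i'\in T}(i'-i)$ is nonzero modulo $p$, because all differences lie in $[1,k-1]$ and $p>k-1$ is prime. So each $x_i$ is uniquely determined modulo $p$.

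The third step corrects each erroneous row on its own. Since $p>n$, \Cref{lem_vtcode} with $N=p$ decodes $\bmc_i$ from $\bmy_i\in\cD_1(\bmc_i)$ for every $i\in T$. This completes the decoding of $\bmc$. The case $u=0$ is trivial.

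For the redundancy claim, the map $\bmc\mapsto\parenv{\sum_i i^j\vt(\bmc_i)\bmod p}_{j=0}^{t-1}$ partitions $\Phi_{2,k}^n$ into at most $p^t$ classes. By pigeonhole, some class has size at least $(k+1)^n/p^t$, so its redundancy is at most $\log_{k+1}(p^t)=t\log_{k+1}p$.

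I expect the only subtle point to be the invertibility in the second step, which needs $p>k-1$ so that distinct row indices remain distinct, with nonzero differences, modulo $p$. This is exactly why $p>\max\{k-1,n\}$ is required. Everything else is a direct reduction to \Cref{lem_vtcode}.
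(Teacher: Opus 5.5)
Your proposal is correct and follows essentially the same route as the paper: you identify the deleted rows by their lengths, solve the Vandermonde system over $\mathbb{F}_p$ (invertible since $p>k-1$) for the missing VT values, and then apply \Cref{lem_vtcode} with $N=p>n$ to each affected row, with the redundancy bound following from pigeonhole. Your explicit treatment of the case of $u<t$ affected rows, using the leading $u\times u$ Vandermonde block, is a small but welcome refinement, since the paper's proof tacitly assumes that exactly $t$ rows suffer a deletion.
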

\begin{IEEEproof}
 Suppose that $\bmc$ is transmitted and $\bmy$ is received. Assume that each of $\bmc_{i_1},\ldots,\bmc_{i_t}$ suffered one deletion. Denote $I^\prime=[k]\setminus\mathset{i_1,\ldots,i_t}$. Then we have $\bmc_{i}=\bmy_{i}$ for $i\in I^\prime$. Thus, for each $i\in I^\prime$, we can compute $\vt\parenv{\bmc_{i}}\pmod{p}$. Then it follows from the definition of $\cC_2^D$ that
 \begin{equation}\label{eq_t1cdcc}
     \begin{pmatrix}
         1&1&\cdots&1\\
         i_1&i_2&\cdots&i_t\\
         \vdots&\vdots&\cdots&\vdots\\
         i_1^{t-1}&i_2^{t-1}&\cdots&i_t^{t-1}
     \end{pmatrix}
     \begin{pmatrix}
         \vt\parenv{\bmc_{i_1}}\\
         \vt\parenv{\bmc_{i_2}}\\
         \vdots\\
         \vt\parenv{\bmc_{i_t}}
     \end{pmatrix}
     \equiv
     \begin{pmatrix}
         a_0-\sum_{i\in I^\prime}\vt\parenv{\bmc_{i}}\\
         a_1-\sum_{i\in I^\prime}i\vt\parenv{\bmc_{i}}\\
         \vdots\\
         a_{t-1}-\sum_{i\in I^\prime}i^{t-1}\vt\parenv{\bmc_{i}}
     \end{pmatrix}
     \pmod{p}.
 \end{equation}
 This is a system of linear equations over the finite field $\mathbb{F}_p$. The coefficient matrix of system \eqref{eq_t1cdcc} is a Vandermonde matrix over $\mathbb{F}_p$. Since $p\ge k$, $i_1,\ldots,i_t$ are $t$ distinct numbers in $\mathbb{F}_p$. As a result, the coefficient matrix is invertible over $\mathbb{F}_p$. Therefore, this equation system has unique solution in $\mathbb{F}_p^t$:
 $$
 \parenv{\vt\parenv{\bmc_{i_1}}\pmod{p},\ldots,\vt\parenv{\bmc_{i_t}}\pmod{p}}.
 $$
 Since $p>n$, it follows from \Cref{lem_vtcode} that we can decode $\bmc_{i_1},\ldots,\bmc_{i_t}$ from $\bmy_{i_1},\ldots,\bmy_{i_t}$.
\end{IEEEproof}

\Cref{thm_t1cdcc} provides an existential result. To obtain a code with an efficient encoder, albeit with slightly larger redundancy, we present an explicit construction in \Cref{construction_t1cdcc} below. This construction relies on the invertibility of submatrices of a Vandermonde-type matrix, detailed in \Cref{lem_subvanmatrix}. Before establishing this lemma, some preparations are necessary.
\begin{definition}(\cfcite{GTM238}{Page 358})\label{def_sst}
Let $\bm{\lambda}=\parenv{\lambda_1,\lambda_2,\ldots,\lambda_s}\in\mathbb{Z}^s$, where $\lambda_1\ge\lambda_2\ge\cdots\ge\lambda_s\ge0$. If $\lambda_1>0$, let $r$ be the largest index such that $\lambda_r>0$. In this case, a \emph{semistandard Young tableau (SST)} $T$ of shape $\bm{\lambda}$ over the set $\mathset{1,2,\ldots,s}$ is a scheme
\begin{equation*}
    T=
    \begin{array}{l}
         T_{11}T_{12}\cdots\cdots\cdots T_{1\lambda_1}\\
         T_{21}T_{22}\cdots\cdots T_{2\lambda_2}\\
         \vdots\\
         T_{r1}\cdots\cdot T_{r\lambda_r}
    \end{array}
\end{equation*}
where $T_{ij}\in\mathset{1,\ldots,s}$, and the entries satisfy the conditions that each row is nondecreasing (from left to right) and each column is strictly increasing (from top to bottom). If $\lambda_1=0$, $T$ is unique and empty. We shall let $\cT(s,\bm{\lambda})$ denote the set of all SSTs of shape $\bm{\lambda}$ over the set $\mathset{1,2,\ldots,s}$. Given a $T\in\cT(s,\bm{\lambda})$ and $1\le m\le s$, let $\mu_{T}(m)=\abs{\mathset{(i,j):T_{ij}=m}}$, i.e., the number of occurrences of integer $m$ in $T$.
\end{definition}

\begin{example}\label{example_sst}(\cfcite{GTM238}{Page 358})
    Let $s=3$ and $\bm{\lambda}=\parenv{2,1,0}$. Then
    \begin{equation*}
        \begin{array}{l}
             11\\
             2
        \end{array},
        \begin{array}{l}
             11\\
             3
        \end{array},
        \begin{array}{l}
             12\\
             2
        \end{array},
        \begin{array}{l}
             12\\
             3
        \end{array},
        \begin{array}{l}
             13\\
             2
        \end{array},
        \begin{array}{l}
             13\\
             3
        \end{array},
        \begin{array}{l}
             22\\
             3
        \end{array},
        \begin{array}{l}
             23\\
             3
        \end{array}
    \end{equation*}
    are all SSTs of shape $\bm{\lambda}$ over the set $\mathset{1,2,3}$. Let $T=\begin{array}{l}
             11\\
             2
        \end{array}$. Then $\mu_T(1)=2$, $\mu_T(2)=1$ and $\mu_T(3)=0$.
\end{example}

\begin{definition}(\cfcite{GTM238}{Theorem 8.8})\label{def_schur}
    With notations in \Cref{def_sst}, the function
    \begin{equation*}
        s_{\bm{\lambda}}\parenv{x_1,\ldots,x_s}=\sum_{T\in\cT(s,\bm{\lambda})}x_{1}^{\mu_T(1)}x_{2}^{\mu_T(2)}\cdots x_{s}^{\mu_T(s)}
    \end{equation*}
    is called a \emph{Schur polynomial} in variables $x_1,\ldots,x_s$. In particular, $s_{(0,\ldots,0)}\parenv{x_1,\ldots,x_s}=1$.
\end{definition}

\begin{example}
    Let $s$, $\bm{\lambda}$ be given in \Cref{example_sst}. Then
    \begin{equation*}
        s_{\bm{\lambda}}\parenv{x_1,x_2,x_3}=x_1^2x_2+x_1^2x_3+x_1x_2^2+2x_1x_2x_3+x_1x_3^2+x_2^2x_3+x_2x_3^2.
    \end{equation*}
\end{example}

For given numbers $x_1,\ldots,x_s$ and vector $\bm{\lambda}=\parenv{\lambda_1,\lambda_2,\ldots,\lambda_s}\in\mathbb{Z}^s$, where $\lambda_1\ge\lambda_2\ge\cdots\ge\lambda_s\ge0$, let
\begin{equation*}
    V_{\bm{\lambda}}(x_1,\ldots,x_s)\triangleq
    \begin{pmatrix}
        x_{1}^{\lambda_{s}}&x_{2}^{\lambda_{s}}&\cdots&x_{s}^{\lambda_{s}}\\
        x_{1}^{\lambda_{s-1}+1}&x_{2}^{\lambda_{s-1}+1}&\cdots&x_{s}^{\lambda_{s-1}+1}\\
        \vdots&\vdots&\cdots&\vdots\\
        x_{1}^{\lambda_{1}+s-1}&x_{2}^{\lambda_{1}+s-1}&\cdots&x_{s}^{\lambda_{1}+s-1}
    \end{pmatrix}.
\end{equation*}
Clearly, $V_{(0,\ldots,0)}(x_1,\ldots,x_s)$ is an ordinary Vandermonde matrix. In this case, denote $V_{0}(x_1,\ldots,x_s)=V_{(0,\ldots,0)}(x_1,\ldots,x_s)$.

\begin{lemma}(\cfcite{GTM238}{Page 356})\label{lem_schurdivision}
    For any $s\ge1$ and $\bm{\lambda}=\parenv{\lambda_1,\lambda_2,\ldots,\lambda_s}\in\mathbb{Z}^s$, where $\lambda_1\ge\lambda_2\ge\cdots\ge\lambda_s\ge0$, it holds that $\det V_{\bm{\lambda}}(x_1,\ldots,x_s)=s_{\bm{\lambda}}\parenv{x_1,\ldots,x_s}\cdot\det V_{0}(x_1,\ldots,x_s)$.
\end{lemma}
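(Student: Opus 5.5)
The plan is to prove the identity $\det V_{\bm{\lambda}}=s_{\bm{\lambda}}\cdot\det V_0$ via the classical bialternant formula. Set $\bm{\delta}=(s-1,s-2,\ldots,0)$. Let $a_{\bm{\alpha}}(x_1,\ldots,x_s)=\det\parenv{x_j^{\alpha_i}}_{i,j}$ denote the alternant. Then, up to a reordering of rows that is the same on both sides, $\det V_{\bm{\lambda}}=\pm a_{\bm{\lambda}+\bm{\delta}}$ and $\det V_0=\pm a_{\bm{\delta}}$, where the two signs agree because both matrices list exponents in increasing order. So the statement reduces to Jacobi's bialternant formula $a_{\bm{\lambda}+\bm{\delta}}/a_{\bm{\delta}}=s_{\bm{\lambda}}$. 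Here $s_{\bm{\lambda}}$ is defined combinatorially by SSTs, as in \Cref{def_schur}.

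\textbf{Step 1: divisibility.} The polynomial $a_{\bm{\lambda}+\bm{\delta}}$ is alternating in $x_1,\ldots,x_s$, so it vanishes whenever $x_i=x_j$. Hence it is divisible by $x_i-x_j$ for all $i<j$, and therefore by $a_{\bm{\delta}}=\pm\prod_{i<j}(x_i-x_j)$ (the usual Vandermonde evaluation). The quotient is a symmetric polynomial, homogeneous of degree $|\bm{\lambda}|$. It therefore remains to identify this quotient with the SST generating function.

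\textbf{Step 2: identification.} I would use the Lindström–Gessel–Viennot lemma, or equivalently first establish the Jacobi–Trudi identity $s_{\bm{\lambda}}=\det\parenv{h_{\lambda_i-i+j}}$. For Jacobi–Trudi, SSTs of shape $\bm{\lambda}$ biject with families of nonintersecting lattice paths, with row $i$ becoming a path from $(s+1-i,\cdot)$ encoding a weakly increasing word. LGV then gives $\sum_T x^{\mu_T}=\det(h_{\lambda_i-i+j})$. Next, write $a_{\bm{\lambda}+\bm{\delta}}=a_{\bm{\delta}}\det(h_{\lambda_i-i+j})$ using the matrix factorization $\parenv{x_j^{\lambda_i+s-i}}=H\cdot\parenv{e\text{-type unitriangular}}\cdots$. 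Concretely, use the generating-function identity $\sum_k h_k(x)t^k\prod_{j}(1-x_jt)=1$ to express $x_j^{m}$ via $h$'s times elementary symmetric functions of the other variables. This is the standard route in Macdonald / GTM 238.

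\textbf{Main obstacle.} The hard part is the sign and involution bookkeeping in the path/tableau bijection. If that gets heavy, I would take an alternative route: prove $a_{\bm{\delta}}s_{\bm{\lambda}}=a_{\bm{\lambda}+\bm{\delta}}$ directly with the Bender–Knuth involutions. These show that $s_{\bm{\lambda}}$ as defined is symmetric. Expanding $a_{\bm{\delta}}s_{\bm{\lambda}}=\sum_{w}\mathrm{sgn}(w)\sum_T x^{w\bm{\delta}+\mu_T}$, a sign-reversing involution (Proctor/Stembridge "first violation" argument) cancels every term except the one with $T$ the superstandard tableau, which yields exactly $a_{\bm{\lambda}+\bm{\delta}}$.

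Since the statement is cited from \cite{GTM238}, in the paper itself one may simply invoke it. Should a self-contained argument be desired, the divisibility step together with the Jacobi–Trudi/LGV identification is the route I would follow.
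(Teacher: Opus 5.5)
Your proposal is correct and is essentially the standard argument behind the cited result. The paper gives no proof of its own and simply invokes Aigner's textbook, where this is the bialternant formula relating the alternant-quotient and tableau descriptions of $s_{\bm{\lambda}}$; your main route (the row-reversal sign check, then Jacobi--Trudi via nonintersecting lattice paths combined with the $h$/$e$ matrix factorization) is the classical proof of it. One small correction to your sketch: in the factorization $\parenv{x_j^{\alpha_i}}_{i,j}=\parenv{h_{\alpha_i-s+r}}_{i,r}\parenv{(-1)^{s-r}e^{(j)}_{s-r}}_{r,j}$, where $e^{(j)}_m$ denotes the elementary symmetric polynomials in the variables other than $x_j$, it is the $h$-factor at $\bm{\alpha}=\bm{\delta}$, namely $\parenv{h_{r-i}}_{i,r}$, that is unitriangular, not the $e$-factor; this is exactly what shows that the $e$-factor has determinant $a_{\bm{\delta}}$, which yields $a_{\bm{\lambda}+\bm{\delta}}=a_{\bm{\delta}}\det\parenv{h_{\lambda_i-i+j}}$.
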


\begin{lemma}\label{lem_subvanmatrix}
    Let $k\ge t\ge2$ be two integers and $s_0=\min\mathset{\floorenv{\frac{(t-1)\ln(k)}{2\ln(k)-\ln(t)}}+1,t-1}$. Define
        $$
        f(k,t)=
        \begin{cases}
            k,\mbox{ if }t=2,\\
            2k,\mbox{ if }t=3,\\
            t^{\binom{s_0}{2}}k^{s_0(t-s_0)},\mbox{ if }t\ge 4.
        \end{cases}
        $$
        Let $p>f(k,t)$ be a prime.
        Then, for every $s$ with $1\le s\le t$, \emph{every} $s\times s$ submatrix of the following matrix
    \begin{equation*}%\label{eq_vandermonde}
        \begin{pmatrix}
            1&1&\cdots&1\\
            1&2&\cdots&k\\
            \vdots&\vdots&\cdots&\vdots\\
            1&2^{t-1}&\cdots&k^{t-1}
        \end{pmatrix}
    \end{equation*}
    is invertible over the field $\mathbb{F}_p$.
\end{lemma}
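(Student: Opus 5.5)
An $s\times s$ submatrix is fixed by choosing rows $0\le r_1<\cdots<r_s\le t-1$ (exponents) and columns $1\le x_1<\cdots<x_s\le k$. Writing $r_j=\lambda_{s+1-j}+j-1$ gives a partition $\bm{\lambda}$ with $\lambda_1\ge\cdots\ge\lambda_s\ge0$ and $\lambda_1\le t-s$. The submatrix is then exactly $V_{\bm{\lambda}}(x_1,\ldots,x_s)$. By \Cref{lem_schurdivision}, its determinant over $\mathbb{Z}$ is $s_{\bm{\lambda}}(x_1,\ldots,x_s)\prod_{i<j}(x_j-x_i)$.

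Invertibility over $\mathbb{F}_p$ therefore reduces to two facts: $p$ divides no difference $x_j-x_i$, and $p$ does not divide the integer $s_{\bm{\lambda}}(x_1,\ldots,x_s)$. The first holds because $0<x_j-x_i<k\le f(k,t)<p$. For the second, all monomials of the Schur polynomial have nonnegative coefficients, and the empty-tableau case gives at least one monomial. Since every $x_i\ge1$, we get $s_{\bm{\lambda}}(x_1,\ldots,x_s)\ge1$. It therefore suffices to show $s_{\bm{\lambda}}(x_1,\ldots,x_s)\le f(k,t)<p$, so that a positive integer below $p$ cannot vanish mod $p$. An alternative is to bound $|\det|$ prime by prime, but the Schur route isolates the nontrivial factor.

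For the upper bound I use $x_i\le k$, so each monomial of degree $|\bm{\lambda}|$ is at most $k^{|\bm{\lambda}|}$. This gives $s_{\bm{\lambda}}\le|\cT(s,\bm{\lambda})|\,k^{|\bm{\lambda}|}$. Since $\bm{\lambda}$ fits in an $s\times(t-s)$ box, $|\bm{\lambda}|\le s(t-s)$. The number of SSTs over $[s]$ equals $s_{\bm{\lambda}}(1,\ldots,1)=\prod_{i<j}\frac{\lambda_i-\lambda_j+j-i}{j-i}$. Each factor has numerator at most $t-s+j-i$, so a crude bound is $t^{\binom{s}{2}}$, valid because $\lambda_i-\lambda_j+j-i\le t-1$. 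Monotonicity of Schur polynomials in $\bm{\lambda}$ (under containment) shows the maximum is at the full box, so $s_{\bm{\lambda}}\le t^{\binom{s}{2}}k^{s(t-s)}$.

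The small cases are direct:
\begin{itemize}
\item For $t=2$: $s=1$ gives the entry $x^{r}$ with $r\le1$, which is at most $k$; $s=2$ gives the ordinary Vandermonde matrix, with factor $1$.
\item For $t=3$: $s=1$ gives powers up to $x^2$, which are only nonzero mod $p$, so they need $p\nmid x$, true since $x\le k<p$. $s=2$ gives $\bm{\lambda}=(1,0)$ with $s_{\bm{\lambda}}=x_1+x_2\le 2k$, or $(1,1)$ with $x_1x_2$, a product of units. $s=3$ gives the plain Vandermonde matrix.
\end{itemize}
The general-case bound for $s=1$ is likewise too crude when $\bm{\lambda}$ is a single power; such a determinant is a product of units mod $p$ and is handled directly.

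The main obstacle is maximising $g(s)=t^{\binom{s}{2}}k^{s(t-s)}$ over $1\le s\le t$ and showing the maximum is at most $f(k,t)$ with the stated $s_0$. Columns-as-units cases ($s=1$, and $s=t$ where $\bm{\lambda}=0$) should be excluded. Then $\ln g(s)=\binom{s}{2}\ln t+s(t-s)\ln k$ is a concave quadratic in $s$. Its derivative has the sign of $(s-\tfrac12)\ln t+(t-2s)\ln k$. Comparing $g(s+1)/g(s)=t^{s}k^{t-2s-1}$ with $1$ gives the critical point $s>\frac{(t-1)\ln k}{2\ln k-\ln t}$, so the integer maximiser is $\floorenv{\frac{(t-1)\ln k}{2\ln k-\ln t}}+1$, capped at $t-1$. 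This is exactly $s_0$, and $f(k,t)=g(s_0)$ follows. I expect the care needed to be in the off-by-one in this discrete maximisation and in verifying $p>f$ also dominates the $s=1$ entries, via $k^{t-1}$-type bounds being unnecessary since single entries are units.
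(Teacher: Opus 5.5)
Your proposal is correct and is essentially the paper's proof. It uses the same Schur-polynomial factorization via \Cref{lem_schurdivision}, the same bounds $\abs{\cT(s,\bm{\lambda})}\le t^{\binom{s}{2}}$ and $k^{s(t-s)}$ per monomial, the same ratio $t^{s}k^{t-2s-1}$ to locate $s_0$, and the same direct handling of $s=1$, $s=t$, $t=2$ and $t=3$.

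Delete the sentence invoking ``monotonicity of Schur polynomials under containment'': it is false even in this setting, since $s_{(1)}(1,2)=3>2=s_{(1,1)}(1,2)$. It is also unnecessary, because $\abs{\bm{\lambda}}\le s(t-s)$ together with your tableau count already gives $s_{\bm{\lambda}}\le t^{\binom{s}{2}}k^{s(t-s)}$. Similarly, positivity of $s_{\bm{\lambda}}$ should be justified by $\cT(s,\bm{\lambda})\neq\emptyset$ (for instance, fill row $i$ with $i$'s), not by the empty tableau.
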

\begin{IEEEproof}
The case $s=1$ is trivial since $p>k$. Now assume that $s\ge2$. Choose arbitrary subsets $I=\mathset{i_1,\ldots,i_s}\subseteq[k]$ and $J=\mathset{j_1,\ldots,j_s}\subseteq\mathset{0,1,\ldots,t-1}$, with $i_1<i_2<\cdots<i_s$ and $j_1<j_2<\cdots<j_s$. Let $V_{J,I}$ be the $s\times s$ submatrix with rows indexed by $J$ and columns indexed by $I$, i.e.,
    \begin{equation*}
        V_{J,I}=
        \begin{pmatrix}
            i_{1}^{j_{1}}&i_{2}^{j_{1}}&\cdots&i_{s}^{j_{1}}\\
            i_{1}^{j_{2}}&i_{2}^{j_{2}}&\cdots&i_{s}^{j_{2}}\\
            \vdots&\vdots&\cdots&\vdots\\
            i_{1}^{j_{s}}&i_{2}^{j_{s}}&\cdots&i_{s}^{j_{s}}
        \end{pmatrix}.
    \end{equation*}
    It suffices to show that $\det V_{J,I}\not\equiv 0\pmod{p}$.
    
    For $1\le l\le s$, define $\lambda_{l}=j_{s-l+1}-s+l$. Then $\lambda_1\ge\lambda_2\ge\cdots\ge\lambda_s\ge0$. Set $\bm{\lambda}=\parenv{\lambda_1,\lambda_2,\ldots,\lambda_s}$. We have $V_{J,I}=V_{\bm{\lambda}}(i_1,\ldots,i_s)$. It follows from \Cref{lem_schurdivision} that $\det V_{J,I}=s_{\bm{\lambda}}\parenv{i_1,\ldots,i_s}\cdot\det V_{0}\parenv{i_1,\ldots,i_s}=s_{\bm{\lambda}}\parenv{i_1,\ldots,i_s}\cdot\prod_{1\le r<l\le s}(i_{l}-i_{r})$. Since $1\le i_r,i_l\le k<p$, we have $i_l-i_r\not\equiv 0\pmod{p}$. Therefore, $\prod_{1\le r<l\le s}(i_{l}-i_{r})\not\equiv 0\pmod{p}$. It remains to show that $s_{\bm{\lambda}}\parenv{i_1,\ldots,i_s}\not\equiv0\pmod{p}$.

    If $s=t$, we have $\bm{\lambda}=(0,0,\ldots,0)$ and $s_{\bm{\lambda}}\parenv{i_1,\ldots,i_s}=1$. Then the conclusion holds. This covers the case $t=2$ as well.
    
    Now suppose that $t\ge 3$ and $2\le s<t$. First, by \cite[Equation (7)]{GTM238}, the number of SSTs of shape $\bm{\lambda}$ over the set $\mathset{1,\ldots,s}$ satisfies $\abs{\cT(s,\bm{\lambda})}\le t^{\binom{s}{2}}$. Furthermore, we have $0<i_{1}^{\mu_T(1)}i_{2}^{\mu_T(2)}\cdots i_{s}^{\mu_T(s)}\le k^{\sum_{l=1}^s\lambda_l}\le k^{s(t-s)}$ for each $T\in\cT(s,\bm{\lambda})$. Then by \Cref{def_schur}, it holds that $0<s_{\bm{\lambda}}\parenv{i_1,\ldots,i_s}\le t^{\binom{s}{2}}k^{s(t-s)}$. Let $F(s)=t^{\binom{s}{2}}k^{s(t-s)}$. Then $F(s+1)/F(s)=t^sk^{t-2s-1}$. It is easy to verify that $F(s+1)/F(s)\ge1$ when $s\le\frac{(t-1)\ln(k)}{2\ln(k)-\ln(t)}$, and $F(s+1)/F(s)<1$ when $s>\frac{(t-1)\ln(k)}{2\ln(k)-\ln(t)}$. Since $2\le s<t$ is an integer, we conclude that $F(s)$ achieves the maximum value at $s=s_0$. Therefore, when $t\ge3$ and $p>F(s_0)$, we have $s_{\bm{\lambda}}\parenv{i_1,\ldots,i_s}\not\equiv0\pmod{p}$. This proves the case when $t\ge4$.

    When $t=3$, we have $s_0=2$ and $F(s_0)=3k^2$. This lower bound on $p$ can be further decreased. Indeed, we only need to check submatrices of forms
    \begin{equation*}
        \begin{pmatrix}
            1&1\\
            i_1&i_2
        \end{pmatrix},
        \begin{pmatrix}
            1&1\\
            i_1^2&i_2^2
        \end{pmatrix},
        \begin{pmatrix}
            i_1&i_2\\
            i_1^2&i_2^2
        \end{pmatrix},
        \begin{pmatrix}
            1&1&1\\
            i_1&i_2&i_3\\
            i_1^2&i_2^2&i_3^2
        \end{pmatrix},
    \end{equation*}
    where $1\le i_1<i_2<i_3\le k$. It is easy to see that the condition $p>k$ is sufficient to make sure that the first, third and fourth submatrices are invertible. The determinant of the second matrix is $(i_2+i_1)(i_2-i_1)$, implying that the condition $p>2k$ is sufficient (notice that $p$ is a prime).
\end{IEEEproof}

The general lower bound on $p$, which is $t^{\binom{s_0}{2}}k^{s_0(t-s_0)}$, arises from a very coarse upper bound on $s_{\bm{\lambda}}\parenv{i_1,\ldots,i_s}$. The proof for the case $t=3$ implies that this lower bound might not be optimal. There seems to exist an upper bound on $s_{\bm{\lambda}}\parenv{i_1,\ldots,i_s}$ tighter than $t^{\binom{s_0}{2}}k^{s_0(t-s_0)}$, for general $t$. However, we do not know how to establish such a bound.

For integers $Q\ge 2$, $M\ge2$ and $0\le m<M$, we can always express $m$ as $\sum_{i=1}^{\ceilenv{\log_{Q}(M)}}m_iQ^{i-1}$ where $0\le m_i<Q$. Define $\expan{Q}{m}$ to be the $Q$-ary sequence $m_1m_2\cdots m_{\ceilenv{\log_{Q}(M)}}$. For example, $\expan{2}{3}=11$, $\expan{2}{4}=001$ and $\expan{3}{2}=2$.

Now we are ready to present our constructions of $t$-$\parenv{1,\ldots,1}$-CDCCs.
\begin{construction}\label{construction_t1cdcc}
    Let $k,t$ and $f(k,t)$ be as in \Cref{lem_subvanmatrix}. For a message length parameter $m\ge f(k,t)$, let $p>m$ be a prime. Let $\Delta=\ceilenv{\log_{k+1}(p)}$ and $n=m+t(\Delta+2)$. Identifying $\Phi_{2,k}$ with $\Sigma_{k+1}$, we define the mapping
    $$
    \varphi_2^D:\;\Phi_{2,k}^m\rightarrow\Phi_{2,k}^n,\quad
    \bmx\mapsto\bmc
    $$ as follows:
    \begin{itemize}
        \item $\bmc\mid_{[m]}=\bmx$;
        \item for each $j=0,1,\ldots,t-1$, set
        $$
        \begin{array}{l}
        \bmc[m+j(\Delta+2)+1]=\sparenv{0,\ldots,0}^\T,\\
        \bmc[m+j(\Delta+2)+2]=\sparenv{1,\ldots,1}^\T,\\
        \bmc\mid_{\sparenv{m+j(\Delta+2)+3,m+(j+1)(\Delta+2)}}=\expan{k+1}{\sum_{i=1}^{k}i^j\vt\parenv{\bmx_{i}}\pmod{p}}.
        \end{array}
        $$
    \end{itemize}
    Define $\cC_2^D=\mathset{\varphi_2^D(\bmx):\bmx\in\Phi_{2,k}^m}$.
\end{construction}

\begin{theorem}\label{thm_t1cdcc2}
    The code $\cC_2^D$ is a $k$-resolution $t$-$(1,\ldots,1)$-CDCC with redundancy $2t+t\ceilenv{\log_{k+1}(p)}$.
\end{theorem}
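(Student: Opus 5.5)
Redundancy is immediate: $n-m=t(\Delta+2)=2t+t\ceilenv{\log_{k+1}(p)}$. For correctness, suppose $\bmc=\varphi_2^D(\bmx)$ is sent and rows $i_1<\cdots<i_s$ ($s\le t$) each suffer exactly one deletion. The lengths of the received rows reveal the index set $I=\mathset{i_1,\ldots,i_s}$, and every other row is received intact. The plan has three steps. First, recover the $t$ redundancy blocks, giving the syndromes $S_j=\sum_{i=1}^{k}i^j\vt\parenv{\bmx_i}\pmod p$ for $j=0,\ldots,t-1$. Second, solve a linear system for the unknown VT values $\vt\parenv{\bmx_{i_r}}\pmod p$. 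Third, apply \Cref{lem_vtcode} row by row.

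The first step is the delicate one. For a row $i\in I$, the received $\bmy_i$ has length $n-1$, and a priori we do not know whether the deletion hit the message part $[m]$ or the tail. I would locate it using the markers: in row $i$, position $m+j(\Delta+2)+1$ is $0$ and position $m+j(\Delta+2)+2$ is $1$, for every $j$. Call the $j$-th tail block $B_j$; it consists of the marker pair $01$ followed by the $\Delta$ symbols of $\expan{k+1}{S_j}$ restricted to row $i$.

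\begin{itemize}
\item If $\bmy_i[m+1]\,\bmy_i[m+2]=01$ and the pattern persists so that every block $B_j$ in $\bmy_i$ starting at offset $m$ begins with $01$ and fits, I would argue that the whole tail of $\bmc_i$ is determined by the tail of $\bmy_i$.
\item Otherwise the deletion is in the tail. Since the syndromes are common across rows, the intact rows $[k]\setminus I$ already give all of $\bmc$'s tail when $s<k$.
\end{itemize}

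A cleaner route, which I would actually take, is to read the tail from the intact rows. Each tail column $\bmc[\ell]$ for $\ell>m$ is an element of $\Sigma_{k+1}\cong\Phi_{2,k}$, i.e.\ a nondecreasing binary column of weight $d$, so its $i$-th row entry is $1$ iff $i>k-d$. Knowing the entries in the rows outside $I$ pins down $d$ only up to an ambiguity, and the markers and the nondecreasing constraint will be needed to resolve it. I expect this to be the main obstacle. If direct recovery of the tail fails, I would instead decode the tail of each row $i\in I$ separately with the markers as synchronization: the $01$ marker pairs bound any deletion to within one block, and within that block the remaining symbols of the other rows fix the column values.

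Once all $S_j$ are known, the second step goes as follows. For $i\notin I$ we know $\bmx_i=\bmc_i\mid_{[m]}$ and hence $\vt\parenv{\bmx_i}$. This yields an $s\times s$ system whose coefficient matrix is the submatrix of the $t\times k$ Vandermonde-type matrix of \Cref{lem_subvanmatrix} with rows $j=0,\ldots,s-1$ and columns $I$. Since $p>m\ge f(k,t)$, \Cref{lem_subvanmatrix} makes this submatrix invertible over $\mathbb F_p$. Here only the first $s$ rows are needed, which is the ordinary Vandermonde case; the full lemma also covers using any $s$ of the $t$ equations.

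For the third step, $p>m$ lets \Cref{lem_vtcode} recover each $\bmx_{i_r}$ from its deleted version, the message part of $\bmy_{i_r}$ once the tail deletion location is excluded. The tail is then re-encoded from the recovered $\bmx$, which completes the decoding of $\bmc$.
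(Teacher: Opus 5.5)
\textbf{There is a genuine gap in your Step 1.} Your redundancy count and your final VT step are fine. The problem is Step 1: reading all $t$ syndromes $S_0,\ldots,S_{t-1}$ off the tail before solving cannot be done in general. Your Step 2, with equations $j=0,\ldots,s-1$, depends on it.

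The markers only locate the damaged segment: $\bmy_i[m+j(\Delta+2)+1]=1$ exactly when the deletion in row $i$ lies in $[1,m+j(\Delta+2)+1]$. They do not repair it. If row $i$ loses a bit among the digits of block $j$, those bits are genuinely lost. Your claim that the other rows fix the column values is false: in a column $0^{k-d}1^d$ with row $i$ unknown, that entry is forced only where row $i-1$ reads $1$ or row $i+1$ reads $0$.

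Here is a concrete failure. Take $k=3$ and $\Delta\ge 2$, with row $1$ hit in block $0$, and compare leading digits $(3,2,\ldots)$ with $(2,3,\ldots)$.
\begin{itemize}
\item Rows $2$ and $3$ read $1,1$ at those positions in both cases.
\item Row $1$ reads $1\,1\,0\cdots$ versus $1\,0\,1\cdots$, where the leading $1$ is the marker.
\item Deleting the $0$ gives the same string in both cases.
\end{itemize}
So block $0$ of $\bmy$ is consistent with two different values of $S_0$. Moreover, when $t=k$ every row may be hit, so there are no intact rows to consult at all.

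\textbf{The missing idea: you never need the spoiled syndromes.} Split the hit rows into two groups.
\begin{itemize}
\item Let $I_0$, with $|I_0|=r$, be the rows whose deletion lies in $[1,m+1]$.
\item The remaining hit rows have their deletion in the tail.
\end{itemize}
A row of the second kind has an intact message part, so its value $\vt(\bmx_i)$ is known and is not an unknown. It also spoils at most one block. Hence at most $t-r$ blocks are spoiled, and at least $r$ blocks $j_1<\cdots<j_r$ can be read (shifting the rows in $I_0$ by one). This gives $r$ equations in the $r$ unknowns $\vt(\bmx_i)$, $i\in I_0$.

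The coefficient matrix has rows $j_1,\ldots,j_r$, which form an \emph{arbitrary} $r$-subset of $\{0,\ldots,t-1\}$, not $\{0,\ldots,r-1\}$. This is exactly where \Cref{lem_subvanmatrix} for arbitrary row subsets, and hence the hypothesis $m\ge f(k,t)$, is needed. Your parenthetical remark that the lemma ``also covers using any $s$ of the $t$ equations'' is in fact the heart of the proof. If all syndromes were always readable, an ordinary Vandermonde matrix would suffice and $m\ge f(k,t)$ would be superfluous. That should have signalled that the plan was missing something.
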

\begin{IEEEproof}
Suppose that the transmitted codeword is $\bmc=\varphi_2^D(\bmx)$ and the received sequence is $\bmy$. Our goal is to decode $\bmx$ from $\bmy$. By comparing $n$ and the lengths of $\bmy_i$'s, we can determine which rows suffered deletions. Without loss of generality, assume that $\bmy_i$ is obtained from $\bmc_i$ by one deletion, for $1\le i\le t$. Then by the definition of $\varphi_2^D$, we conclude that $\bmx_i=\bmy_i$ for every $t< i\le k$. It remains to decode $\bmx_{1},\bmx_{2},\ldots,\bmx_{t}$ from $\bmy$.

The first step is to determine the location of the deletion in $\bmc_i$, for $1\le i\le t$. Each row $\bmc_i$ of $\bmc$ is the concatenation of $t+1$ segments: 
\begin{equation*}
    \begin{array}{cl}
       (\text{first segment})&\bmc_i\mid_{[m+1]},\\
       (\text{$(j+1)$-th segment})&\bmc_i\mid_{\sparenv{m+(j-1)(\Delta+2)+2,m+j(\Delta+2)+1}},\, 1\le j<t,\\
       (\text{last segment})&\bmc_i\mid_{\sparenv{m+(t-1)(\Delta+2)+2,m+t(\Delta+2)}}.
    \end{array}
\end{equation*} 
Observe that $\bmc_i\mid_{[m+j(\Delta+2)+1,m+j(\Delta+2)+2]}=01$ for each $0\le j\le t-1$. If the single deletion in $\bmc_i$ occurred in the prefix $\bmc_i\mid_{\sparenv{1,m+j(\Delta+2)+1}}$, then $\bmy_i\sparenv{m+j(\Delta+2)+1}=1$.

For each $\bmc_i$, we can identify the segment containing the deletion as follows. If $\bmy_i\sparenv{m+1}=1$, the deletion occurred in the first segment. If
$\bmy_i\sparenv{m+(t-1)(\Delta+2)+1}=0$,
the deletion occurred in the last segment. Otherwise, find the smallest $k_i\in\{1,\ldots,t-1\}$ such that $\bmy_i\sparenv{m+k_i(\Delta+2)+1}=1$. Then the deletion occurred in the $(k_i+1)$-th segment $\bmc_i\mid_{[m+(k_i-1)(\Delta+2)+2,m+k_i(\Delta+2)+1]}$.

By the definition of $\varphi_2^D$, for $0\le j\le t-1$, the value of $\sum_{i=1}^{k}i^j\vt\parenv{\bmx_{i}}\pmod{p}$ is stored in the $(j+2)$-th segment of $\bmc$. Therefore, if there is some $i_0$ (with $1\le i_0\le t$) such that the deletion in the $i_0$-th row occurred in the $(j+2)$-th segment, the value of $\sum_{i=1}^{k}i^j\vt\parenv{\bmx_{i}}\pmod{p}$ is not known to us. But in that case, we have $\bmx_{i_0}=\bmy_{i_0}\mid_{[m]}$.

Partition the set $\mathset{1,\ldots,t}$ into two subsets $I=\mathset{i_1,\ldots,i_s}$ (with $i_1<i_2<\cdots<i_s$) and $I^\prime=\mathset{r_1,\ldots,r_{t-s}}$, such that the deletion in $\bmc_{i}$ occurred in the first segment $\bmc_{i}\mid_{[m+1]}$ if and only if $i\in I$. By previous discussion, it holds that $\bmx_{i}=\bmy_{i}\mid_{[m]}$ for every $i\in I^\prime$. It remains to decode $\bmx_{i}$ for $i\in I$. If $s=0$, we are done. Now suppose that $s\ge 1$. It is clear that $\bmy_{i}\mid_{[m-1]}\in\cD_1\parenv{\bmx_{i}}$ for each $i\in I$. Then by \Cref{lem_vtcode}, it suffices to recover the values $\vt\parenv{\bmx_{i}}\pmod{p}$ for every $i\in I$.

Assume that the deletion in $\bmc_{r_{l}}$ occurred in the $(j_l^\prime+2)$-th segment, where $0\le j_l^\prime\le t-1$ and $1\le l\le t-s$. Note that $j_1^\prime,\ldots,j_{t-s}$ are not necessarily distinct. Since $\abs{\mathset{0,1,\ldots,t-1}\setminus\mathset{j_1^\prime,\ldots,j_{t-s}^\prime}}\ge s$, we can choose $j_1,\ldots,j_s\in \mathset{0,1,\ldots,t-1}\setminus\mathset{j_1^\prime,\ldots,j_{t-s}^\prime}$. Here, we assume that $j_1<j_2<\cdots<j_s$. Then for each $1\le l\le s$, the $(j_l+2)$-th segment of $\bmc$ did not suffer deletions and can be recovered directly from $\bmy$. Consequently, the values of $\sum_{i=1}^{k}i^{j_l}\vt\parenv{\bmx_{i}}\pmod{p}$ (where $1\le l\le s$) are known. For $1\le l\le s$, let $a_l=\parenv{\sum_{i=1}^{k}i^{j_l}\vt\parenv{\bmx_{i}}-\sum_{i\in[k]\setminus I}i^{j_l}\vt\parenv{\bmx_{i}}}\pmod{p}$. Notice that all $a_l$'s are known. Then we have that
\begin{equation*}
    \begin{pmatrix}
        i_{1}^{j_{1}}&i_{2}^{j_{1}}&\cdots&i_{s}^{j_{1}}\\
        i_{1}^{j_{2}}&i_{2}^{j_{2}}&\cdots&i_{s}^{j_{2}}\\
        \vdots&\vdots&\cdots&\vdots\\
        i_{1}^{j_{s}}&i_{2}^{j_{s}}&\cdots&i_{s}^{j_{s}}
    \end{pmatrix}
    \begin{pmatrix}
        \vt\parenv{\bmx_{i_1}}\\
        \vt\parenv{\bmx_{i_2}}\\
        \vdots\\
        \vt\parenv{\bmx_{i_s}}
    \end{pmatrix}
    =
    \begin{pmatrix}
        a_1\\
        a_2\\
        \vdots\\
        a_s
    \end{pmatrix}
    \pmod{p}.
\end{equation*}
This is a system of linear equations, with $s$ unknowns $\vt\parenv{\bmx_{i_1}},\ldots,\vt\parenv{\bmx_{i_s}}$, over the field $\mathbb{F}_p$.
According to \Cref{lem_subvanmatrix}, the coefficient matrix is invertible over the field $\mathbb{F}_p$. We can uniquely obtain the vector
$$
\parenv{\vt\parenv{\bmx_{i_1}}\pmod{p},\cdots,\vt\parenv{\bmx_{i_s}}\pmod{p}}
$$
by solving the equation system. This, together with \Cref{lem_vtcode}, completes the proof.
\end{IEEEproof}

\subsection{Extension to Nonbinary Codes}\label{sec_qaryCDCC}
%%%%%%%%%%%%%%%%%%%%%%%%%%%%%%%%%%%%%%%%%%%%%%%%%%
The constructions thus far rely on \Cref{lem_vtcode} for binary sequences. To extend them to the alphabet $\Phi_{q,k}$ where $q>2$, we need a code that corrects one deletion in nonbinary sequences.

For a sequence $\bmx\in\Sigma_q^n$, define a sequence $\psi\parenv{\bmx}\in\Sigma_q^n$ as
\begin{equation*}
        \psi\parenv{\bmx}[i]=
    \begin{cases}
      \bmx[i]-\bmx[i+1]\pmod{q},\mbox{ if }i<n,\\
      \bmx[n],\mbox{ if }i=n.
    \end{cases}
\end{equation*}

Analogous to \Cref{lem_vtcode}, we have the following lemma.
\begin{lemma}(\cfcite{Tuan2024IT}{Theorem 3})
    Let $q,n\ge2$ be two integers. For any $\bmx\in\Sigma_q^n$, given $\vt\parenv{\psi(\bmx)}\pmod{qn}$, one can decode $\bmx$ from any $\bmx^\prime\in\cD_1(\bmx)$.
\end{lemma}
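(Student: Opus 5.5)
We plan to mirror Levenshtein's argument for \Cref{lem_vtcode}: enumerate every way of inserting one symbol into $\bmx^\prime$, and show that distinct resulting sequences give distinct values of $\vt\parenv{\psi(\cdot)}\pmod{qn}$. Since $\bmx$ is one of the candidates, it is then the unique candidate matching the stored residue.

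First we record two structural facts about $\psi$. It is a bijection on $\Sigma_q^n$, because $\bmx[i]\equiv\sum_{j=i}^{n}\psi(\bmx)[j]\pmod q$ telescopes. Deleting $\bmx[i]$ with $1<i<n$ changes $\psi$ locally: the two entries $\psi(\bmx)[i-1]$ and $\psi(\bmx)[i]$ are replaced by the single entry $\psi(\bmx)[i-1]+\psi(\bmx)[i]\pmod q$. The boundary cases $i=1$ and $i=n$ need separate bookkeeping: the last entry of $\psi$ is $\bmx[n]$ itself rather than a difference, so these cases must be checked by hand.

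Next, for a candidate $\bmz$ obtained from $\bmx^\prime$ by inserting symbol $a$ at position $i$, we compute $D(\bmz)\triangleq\vt\parenv{\psi(\bmz)}-\vt\parenv{\psi(\bmx^\prime)}$ over the integers, with entries of $\psi$ taken as representatives in $[0,q-1]$. Using the local description above, $D(\bmz)$ has two parts: a tail sum $\sum_{j\ge i}\psi(\bmx^\prime)[j]$ coming from the index shift, and a bounded local correction determined by $a$ and the neighbouring values of $\bmx^\prime$. Each entry lies in $[0,q-1]$ and there are at most $n$ positions, so $D$ takes values in a range of length less than $qn$. Hence the residue modulo $qn$ determines the integer $D$, and it remains to show that $D$ determines $\bmz$.

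Since inserting the same symbol anywhere inside a run of $\bmx^\prime$ yields the same $\bmz$, we would normalize each insertion to the right end of its run. We would then order the candidates lexicographically by position and then by symbol, and show that $D$ is strictly monotone along this order. This monotonicity is the step we expect to be the main obstacle. Because of the mod-$q$ reduction, the merged entry $\psi[i-1]+\psi[i]\pmod q$ may wrap around, so an insertion can change the local contribution by a negative amount. We would first try to absorb each wraparound by exhibiting a compensating change of exactly $q$ in the integer sum, using that $\psi[i-1]+\psi[i]$ and its reduction differ by $0$ or $q$. We would then verify monotonicity case by case according to whether $a$ lies above or below the neighbouring symbols. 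If a clean monotone order cannot be found, the fallback is to compare two candidates $\bmz\neq\bmz^\prime$ directly: compute $D(\bmz)-D(\bmz^\prime)$ and show it is nonzero with absolute value less than $qn$. This is exactly the content of \cite[Theorem 3]{Tuan2024IT}, which we may invoke if the direct bookkeeping becomes unwieldy.
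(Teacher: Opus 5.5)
The paper does not prove this lemma; it imports it from \cite{Tuan2024IT}. Your final fallback is therefore exactly the paper's treatment. The self-contained argument you sketch, however, has a genuine gap at the step you flagged: $D$ is not monotone along the order you propose.

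Take $q=n=3$ and $\bmx'=02$, so $\vt\parenv{\psi(\bmx')}=5$. Inserting $a=1$ at positions $1,2,3$ gives $102,012,021$ with $D=4,7,1$. The remaining candidates $002,202,020,022$ have $D=3,5,0,2$. All residues are distinct, as the lemma asserts. But already for the fixed symbol $a=1$ the values are not monotone in position, so no case analysis on where $a$ sits relative to its neighbours can rescue the plan.

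Relatedly, a wraparound does not cost ``exactly $q$''. The merged entry sits at index $i-1$, so $D=\psi(\bmz)[i]+\sum_{j\ge i}\psi(\bmx')[j]+q(i-1)\epsilon$ with $\epsilon\in\{0,1\}$. Your ``local correction'' is therefore not bounded independently of $i$. The range claim $0\le D\le qn-1$ is still true, but only via the trade-off $q(i-1)+(n-i+1)(q-1)\le qn-1$, not because entries lie in $[0,q-1]$ at at most $n$ positions.

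The missing idea is to separate the inserted symbol from its position. For $j<n$ we have $\psi(\bmx)[j]=\bmx[j]-\bmx[j+1]+q\,[\bmx[j]<\bmx[j+1]]$, where $[P]\in\{0,1\}$ is the indicator of $P$. Telescoping then gives
\begin{equation*}
\vt\parenv{\psi(\bmx)}=\Sum{\bmx}+q\sum_{j=1}^{n-1}j\,[\bmx[j]<\bmx[j+1]].
\end{equation*}
Hence $D=a+q\delta$, where $a$ is the inserted symbol and $\delta$ is the change in the binary VT sum of the ascent-indicator sequence. The argument then runs as follows.
\begin{enumerate}[$(i)$]
\item Inserting $a$ between $\bmx'[i-1]$ and $\bmx'[i]$ replaces one ascent bit by two. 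In each of the possible cases this amounts to a single bit insertion; at either end it simply adds one bit. Consequently $0\le\delta\le n-1$, which is the correct range argument.
\item $D\bmod q$ recovers $a$.
\item $\delta$ recovers the ascent sequence of $\bmz$ by Levenshtein's argument behind \Cref{lem_vtcode}.
\item Suppose inserting the same $a$ at gaps $i<i'$ yields the same ascent sequence. Then $w=(a,\bmx'[i],\ldots,\bmx'[i'-1])$ and its cyclic shift have the same ascent bits, which forces all cyclic ascent bits of $w$ to be equal. They cannot all be $1$, so $w$ is cyclically non-increasing, hence constant, and the two candidates coincide.
\end{enumerate}
In this picture, $D$ for fixed $a$ is monotone only within the ``inserted bit $0$'' and ``inserted bit $1$'' families separately, as in binary VT decoding. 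That is why a single position-based order cannot work.
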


With this lemma in hand, it is easy to establish \Cref{thm_q1cdcc}, \Cref{thm_qt1cdcc}, \Cref{construction_qt1cdcc} and \Cref{thm_qt1cdcc2} below, which parallel \Cref{thm_1cdcc}, \Cref{thm_t1cdcc}, \Cref{construction_t1cdcc} and \Cref{thm_t1cdcc2}, respectively.
\begin{theorem}\label{thm_q1cdcc}
    Let $q, n\ge3$ and $0\le a<qn$. The code
\begin{equation*}
        \mathset{\bmc\in\Phi_{q,k}^n:\sum_{i=1}^{k}\vt\parenv{\psi(\bmc_i)}\equiv a\pmod{qn}}
    \end{equation*}
    is a $1$-CDCC. There exists a choice of $a$ such that its redundancy is at most $\log_{Q_{q,k}}(qn)$.
\end{theorem}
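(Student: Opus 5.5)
The plan follows the proof of \Cref{thm_1cdcc}, with \Cref{lem_vtcode} replaced by the nonbinary single-deletion lemma of \cite{Tuan2024IT} stated just above.

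First, decoding. Let $\bmc$ be transmitted and $\bmy$ received. Under a $1$-composite-deletion exactly one row is shortened, and comparing the lengths of $\bmy_1,\ldots,\bmy_k$ with $n$ identifies that row. Call it row $i_0$. For every $i\ne i_0$ we have $\bmc_i=\bmy_i$, so we can compute $\psi(\bmc_i)$ and hence $\vt\parenv{\psi(\bmc_i)}\pmod{qn}$. The defining congruence then gives
\[
\vt\parenv{\psi(\bmc_{i_0})}\equiv a-\sum_{i\ne i_0}\vt\parenv{\psi(\bmc_i)}\pmod{qn}.
\]
Since $\bmy_{i_0}\in\cD_1(\bmc_{i_0})$ and $\bmc_{i_0}\in\Sigma_q^n$, the lemma from \cite{Tuan2024IT} recovers $\bmc_{i_0}$, and therefore $\bmc$.

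We should check that nothing breaks because the rows must form valid columns. This causes no difficulty: the lemma is applied to the single row $\bmc_{i_0}$ viewed as an arbitrary $q$-ary sequence of length $n$, and the known rows are used only to compute their syndromes. The column constraint plays no role in decoding. If $k=1$-type degeneracies or $n\ge3$ matter, note that they only enter through the hypothesis $n\ge2$ of the lemma, which holds.

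Second, redundancy. The $qn$ sets obtained by letting $a$ range over $0\le a<qn$ partition $\Phi_{q,k}^n$. By pigeonhole, some $a$ gives a code of size at least $Q_{q,k}^n/(qn)$. Its redundancy is then $\log_{Q_{q,k}}\parenv{Q_{q,k}^n/\abs{\cC}}\le\log_{Q_{q,k}}(qn)$.

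No step is a real obstacle. The only point needing care is that the syndrome of the erroneous row is determined modulo exactly $qn$, which is the modulus the cited lemma requires.
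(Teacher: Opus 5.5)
Your proposal is correct and follows the paper's argument: the paper says the proof is identical to that of \Cref{thm_1cdcc}. As you do, it locates the shortened row by its length, recovers that row's syndrome modulo $qn$ from the defining congruence and the intact rows, and applies the nonbinary single-deletion lemma of \cite{Tuan2024IT}, with the redundancy claim following by pigeonhole over the $qn$ residue classes.
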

The proof of \Cref{thm_q1cdcc} is the same as that of \Cref{thm_1cdcc}. Unlike the case $q=2$, we do not know how to encode a message into the code given in this theorem. As an alternative, we give a construction with efficient encoding using the marker-based approach.
\begin{theorem}
Let $q, m\ge3$. Denote $\Delta=\ceilenv{\log_{Q_{q,k}}(qm)}$. Let $n=m+2+\Delta$. Identifying $\Phi_{q,k}$ with $\Sigma_{Q_{q,k}}$, we define the mapping
$$
    \varphi_3^D:\;\Phi_{q,k}^m\rightarrow\Phi_{q,k}^n,\quad
    \bmx\mapsto\bmc
    $$ as follows:
    \begin{itemize}
        \item $\bmc\mid_{[m]}=\bmx$;
        \item $\bmc[m+1]=\sparenv{0,\ldots,0}^\T$;
        \item $\bmc[m+2]=\sparenv{1,\ldots,1}^\T$;
        \item $\bmc\mid_{\sparenv{m+3,m+2+\Delta}}=\expan{Q_{q,k}}{\sum_{i=1}^{k}\vt\parenv{\psi(\bmx_i)}\pmod{qm}}$.
    \end{itemize}
    Define $\cC_3^D=\mathset{\varphi_3^D(\bmx):\bmx\in\Phi_{q,k}^m}$.
    Then the code $\cC_3^D$ is a $k$-resolution $1$-CDCC with redundancy $2+\ceilenv{\log_{Q_{q,k}}(qm)}$.
\end{theorem}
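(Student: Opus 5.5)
The plan is to run the marker-based argument of \Cref{thm_t1cdcc2} in the special case $t=1$, replacing \Cref{lem_vtcode} by the nonbinary lemma from \cite{Tuan2024IT}. The redundancy claim is immediate: $n-m=2+\Delta$. Also, $\bmc$ is a valid sequence in $\Phi_{q,k}^n$, because columns $m+1$ and $m+2$ are the valid letters $[0,\ldots,0]^\T$ and $[1,\ldots,1]^\T$, and the last $\Delta$ columns are letters in $\Sigma_{Q_{q,k}}\cong\Phi_{q,k}$. Since $qm\le Q_{q,k}^{\Delta}$, the residue $\sum_{i}\vt(\psi(\bmx_i))\bmod qm$ fits into $\Delta$ digits. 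So only decodability needs an argument.

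Suppose $\bmc=\varphi_3^D(\bmx)$ is sent and $\bmy$ is received under a $1$-composite-deletion. The row lengths reveal which row $i_0$ suffered the deletion, and every other row satisfies $\bmy_i=\bmc_i$, hence $\bmx_i=\bmy_i\mid_{[m]}$ for $i\ne i_0$. In row $i_0$ we have $\bmc_{i_0}[m+1]\bmc_{i_0}[m+2]=01$. I would use the marker test $\bmy_{i_0}[m+1]$, as in the proof of \Cref{thm_t1cdcc2}.

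If the deletion lies in positions $[1,m+1]$, then $\bmy_{i_0}[m+1]=\bmc_{i_0}[m+2]=1$. Otherwise the first $m+1$ symbols are intact and $\bmy_{i_0}[m+1]=0$. In the second case $\bmx_{i_0}=\bmy_{i_0}\mid_{[m]}$ and we are done.

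In the first case, two things hold.
\begin{itemize}
\item The redundancy part $\bmc_{i_0}\mid_{[m+3,n]}$ is intact and equals $\bmy_{i_0}\mid_{[m+2,n-1]}$. Hence the whole column block $\bmc\mid_{[m+3,n]}$ can be reassembled, and the value $a=\sum_{i=1}^k\vt(\psi(\bmx_i))\bmod qm$ is recovered.
\item The prefix $\bmy_{i_0}\mid_{[m-1]}$ lies in $\cD_1(\bmx_{i_0})$. The deletion is either in $\bmx_{i_0}$, or it is the $0$ at position $m+1$; in the latter case $\bmy_{i_0}\mid_{[m]}=\bmx_{i_0}$ and truncating to $m-1$ symbols still gives a one-deletion subsequence.
\end{itemize}
Then $\vt(\psi(\bmx_{i_0}))\equiv a-\sum_{i\ne i_0}\vt(\psi(\bmx_i))\pmod{qm}$, and the lemma from \cite{Tuan2024IT} recovers $\bmx_{i_0}$. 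It requires $m\ge2$, which holds since $m\ge3$.

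The only delicate point is the marker case analysis: deletions at position $m+1$ or $m+2$ can shift the boundary. The same test still works. Deleting position $m+2$ leaves $\bmy_{i_0}[m+1]=0$, and the $m$-prefix is intact. Deleting position $m+1$ gives $\bmy_{i_0}[m+1]=1$, which is handled by the first case above.

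Note that if $q>2$, the row entries at the redundancy positions may take values other than $0,1$. This is harmless, because the test only inspects position $m+1$, which is preceded by either intact data or a shifted marker.
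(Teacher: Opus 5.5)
Your proposal is correct and follows the paper's argument. The paper's proof only observes that columns $m+1$ and $m+2$ show whether the deletion fell in $\bmc\mid_{[m+1]}$, and that $\sum_{i=1}^{k}\vt\parenv{\psi(\bmx_i)}\pmod{qm}$ is stored in $\bmc\mid_{[m+3,m+2+\Delta]}$; you fill in the details it leaves implicit (the shifted redundancy block, deletion of the marker $0$ itself, and applying the differential VT lemma with modulus $qm$ to the $(m-1)$-prefix).
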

The proof of this theorem is immediate, since the $(m+1)$-th and $(m+2)$-th columns help to determine whether the deletion occurred in $\bmc\mid_{[m+1]}$, and the value of $\sum_{i=1}^{k}\vt\parenv{\psi(\bmx_i)}\pmod{qm}$ is stored in $\bmc\mid_{[m+3,m+2+\Delta]}$.

For $\bmx\in\Sigma_q^n$, define $\overline{\vt\parenv{\psi(\bmx)}}=\vt\parenv{\psi(\bmx)}\pmod{qn}$.
\begin{theorem}\label{thm_qt1cdcc}
    Let $q,n\ge3$ and $2\le t\le k$ be three integers. Let $p>\max\mathset{k-1,qn}$ be a prime. For $0\le a_0,a_1,\ldots,a_{t-1}<p$, the set
    \begin{equation*}
        \mathset{\bmc\in\Phi_{q,k}^n:\sum_{i=1}^{k}i^j\overline{\vt\parenv{\psi(\bmc_i)}}\equiv a_{j}\pmod{p},j=0,\ldots,t-1}
    \end{equation*}
    is a $t$-$(1,\ldots,1)$-CDCC. In addition, there exists a choice of $a_0,a_1,\ldots,a_{t-1}$ such that its redundancy is at most $t\log_{Q_{q,k}}(p)$.
\end{theorem}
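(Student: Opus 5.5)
The argument follows the proof of \Cref{thm_t1cdcc} step by step. The only change is that \Cref{lem_vtcode} is replaced by the nonbinary single-deletion lemma from \cite{Tuan2024IT}, applied with modulus $qn$.

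Let $\bmc$ be the transmitted codeword and $\bmy$ the received sequence. Comparing the lengths of $\bmy_1,\ldots,\bmy_k$ with $n$ identifies the set $\mathset{i_1,\ldots,i_{t'}}$ of rows that suffered a deletion, where $t'\le t$. If $t'<t$, pad this set with arbitrary error-free row indices to obtain exactly $t$ distinct indices $i_1,\ldots,i_t$. For every $i\in I'=[k]\setminus\mathset{i_1,\ldots,i_t}$ we have $\bmc_i=\bmy_i$, so $\overline{\vt\parenv{\psi(\bmc_i)}}$ can be computed for these rows.

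Subtracting the contributions of the rows in $I'$ from the $t$ defining congruences gives a linear system over $\mathbb{F}_p$, exactly as in \eqref{eq_t1cdcc}. Its coefficient matrix is the $t\times t$ Vandermonde matrix with nodes $i_1,\ldots,i_t$, and the unknowns are $\overline{\vt\parenv{\psi(\bmc_{i_l})}}$ for $l\in[t]$. Since $p>k-1$, i.e.\ $p\ge k$, the nodes $i_1,\ldots,i_t\in[k]$ are distinct modulo $p$, so the matrix is invertible over $\mathbb{F}_p$. Hence each unknown is determined as an element of $\mathbb{F}_p$.

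The one point requiring care is recovering the actual residue $\overline{\vt\parenv{\psi(\bmc_{i_l})}}\in[0,qn-1]$ from its value modulo $p$. Because $p>qn$, the reduction map from $[0,qn-1]$ to $\mathbb{F}_p$ is injective, so the residue modulo $qn$ is uniquely determined.

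The nonbinary lemma then lets us decode each $\bmc_{i_l}$ from $\bmy_{i_l}\in\cD_1(\bmc_{i_l})$. For the padded error-free rows this step is unnecessary, since those rows are already known.

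For the redundancy claim, note that the $p^t$ choices of $(a_0,\ldots,a_{t-1})$ partition $\Phi_{q,k}^n$. By pigeonhole, some class has size at least $Q_{q,k}^n/p^t$, so its redundancy is at most $t\log_{Q_{q,k}}(p)$.
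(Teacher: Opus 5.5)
Your proof is correct and follows the paper's argument. You rerun the proof of Theorem~\ref{thm_t1cdcc} with the nonbinary single-deletion lemma in place of the VT lemma, and the one extra step the paper singles out---lifting $\overline{\vt\parenv{\psi(\bmc_{i_l})}}$ from $\mathbb{F}_p$ back to $[0,qn-1]$ using $p>qn$---is exactly the step you highlight. Your padding of the affected-row set to size $t$ also cleanly covers the case of fewer than $t$ corrupted rows, which the paper leaves implicit.
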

The proof of this theorem is almost the same as that of \Cref{thm_t1cdcc}. The only difference is that after obtaining the vector $\parenv{\overline{\vt\parenv{\psi(\bmc_{i_1})}}\pmod{p},\ldots,\overline{\vt\parenv{\psi(\bmc_{i_t})}}\pmod{p}}$, we still need to recover the values of $\overline{\vt\parenv{\psi(\bmc_{i_1})}},\ldots,\overline{\vt\parenv{\psi(\bmc_{i_t})}}$. This is straightforward, since $0\le \overline{\vt\parenv{\psi(\bmc_{i_1})}},\ldots,\overline{\vt\parenv{\psi(\bmc_{i_t})}}<nq<p$.

\begin{construction}\label{construction_qt1cdcc}
    Let $k,t$ and $f(k,t)$ be as in \Cref{lem_subvanmatrix}. For $m\ge f(k,t)$ and $q\ge3$, let $p>qm$ be a prime. Denote $\Delta=\ceilenv{\log_{Q_{q,k}}(p)}$. Let $n=m+t(\Delta+2)$. Identifying $\Phi_{q,k}$ with $\Sigma_{Q_{q,k}}$, we define the mapping
    $$
    \varphi_4^D:\;\Phi_{q,k}^m\rightarrow\Phi_{q,k}^n,\quad
    \bmx\mapsto\bmc
    $$ as follows:
    \begin{itemize}
        \item $\bmc\mid_{[m]}=\bmx$;
        \item for each $j=0,1,\ldots,t-1$, set
        $$
        \begin{array}{l}
        \bmc[m+j(\Delta+2)+1]=\sparenv{0,\ldots,0}^\T,\\
        \bmc[m+j(\Delta+2)+2]=\sparenv{1,\ldots,1}^\T,\\
        \bmc\mid_{\sparenv{m+j(\Delta+2)+3,m+(j+1)(\Delta+2)}}=\expan{Q_{q,k}}{\sum_{i=1}^{k}i^j\overline{\vt\parenv{\psi(\bmx_{i})}}\pmod{p}}.
        \end{array}
        $$
    \end{itemize}
    Define $\cC_4^D=\mathset{\varphi_4^D(\bmx):\bmx\in\Phi_{q,k}^m}$.
\end{construction}

The proof of the following theorem is the same as that of \Cref{thm_t1cdcc2}.
\begin{theorem}\label{thm_qt1cdcc2}
    The code $\cC_4^D$ is a $k$-resolution $t$-$(1,\ldots,1)$-CDCC with redundancy $2t+t\ceilenv{\log_{Q_{q,k}}(p)}$.
\end{theorem}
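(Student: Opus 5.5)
The plan is to follow the proof of \Cref{thm_t1cdcc2} step by step, replacing the binary VT syndrome with $\overline{\vt\parenv{\psi(\cdot)}}$ and the binary one-deletion decoder of \Cref{lem_vtcode} with the nonbinary lemma cited from \cite{Tuan2024IT}. The redundancy claim is immediate from the construction, since $n-m=t(\Delta+2)=2t+t\ceilenv{\log_{Q_{q,k}}(p)}$. So the real content is decodability.

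Let $\bmc=\varphi_4^D(\bmx)$ be transmitted and $\bmy$ received. Comparing the length of each $\bmy_i$ with $n$ identifies the (at most $t$) rows that suffered a deletion; every other row is read off directly, giving $\bmx_i=\bmy_i\mid_{[m]}$ there.

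For a corrupted row $i$, I locate the segment containing the deletion using the marker pairs. In row $i$, columns $m+j(\Delta+2)+1$ and $m+j(\Delta+2)+2$ carry the symbols $0$ and $1$, because the marker columns are the all-zero and all-one vectors, which are valid letters. A deletion at or before position $m+j(\Delta+2)+1$ shifts the $1$ left, so $\bmy_i[m+j(\Delta+2)+1]=1$; a deletion after it leaves this entry $0$. Hence the first index $j$ at which $\bmy_i[m+j(\Delta+2)+1]=1$ pins down the segment exactly as in the binary proof. Then I split the corrupted rows into two groups:
\begin{itemize}
\item the set $I$ of rows whose deletion lies in the first segment $\bmc_i\mid_{[m+1]}$; for these, $\bmy_i\mid_{[m-1]}\in\cD_1(\bmx_i)$;
\item the remaining corrupted rows, for which $\bmx_i=\bmy_i\mid_{[m]}$ is intact.
\end{itemize}

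Let $s=\abs{I}$. At most $t-s$ of the $t$ syndrome segments are damaged, so I can pick $s$ undamaged indices $j_1<\cdots<j_s$. Reading the $\expan{Q_{q,k}}{\cdot}$ blocks at those indices gives $\sum_{i}i^{j_l}\overline{\vt\parenv{\psi(\bmx_i)}}\pmod p$. Subtracting the known contributions of rows outside $I$ yields an $s\times s$ linear system whose coefficient matrix is the submatrix of \Cref{lem_subvanmatrix} with rows $J$ and columns $I$. The hypothesis $m\ge f(k,t)$ together with $p>qm\ge m$ gives $p>f(k,t)$, so \Cref{lem_subvanmatrix} makes this matrix invertible over $\mathbb{F}_p$. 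Solving it gives each $\overline{\vt\parenv{\psi(\bmx_i)}}\bmod p$. Since $0\le\overline{\vt\parenv{\psi(\bmx_i)}}<qm<p$, the residue equals the actual value. The nonbinary lemma then decodes $\bmx_i$ from $\bmy_i\mid_{[m-1]}$.

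The only step needing genuine care beyond the binary argument is the marker detection: it must still work even though $\Phi_{q,k}$ is identified with $\Sigma_{Q_{q,k}}$ and rows are $q$-ary. It does, because only the row symbols $0$ and $1$ of the two marker columns are used, and those are unaffected by that identification. A secondary check is that $p>qm$ really dominates $f(k,t)$, which holds by the assumption $m\ge f(k,t)$.
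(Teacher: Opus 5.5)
Your proposal is correct and takes essentially the same approach as the paper, which proves this theorem by saying the argument is the same as for \Cref{thm_t1cdcc2}. That argument is exactly yours: localize each row's deletion with the $01$ marker columns, solve the \Cref{lem_subvanmatrix} subsystem over $\mathbb{F}_p$, lift residues via $\overline{\vt\parenv{\psi(\bmx_i)}}<qm<p$, and apply the nonbinary single-deletion lemma. Your two extra checks, that the marker columns are valid letters and that $p>qm\ge m\ge f(k,t)$, are precisely what makes the transfer from the binary case legitimate.
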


\section{Constructions of Codes Correcting Substitutions}\label{sec_conssub}
%%%%%%%%%%%%%%%%%%%%%%%%%%%%%%%%%%%%%%%%%%%%%%%%%%%%%%%%%
From this point onward, we concentrate on constructing composite-error correcting codes (CECCs). In \cite{BesartDollma202509}, Dollma \emph{et al} constructed a binary $k$-resolution $(1,0,\ldots,0)$-CECC. However, they did not establish encoding or decoding algorithms for it. In \Cref{sec_10CECC}, we will design such algorithms for their code, and then extend the idea to the case where $q>2$. In \Cref{sec_1CECC,sec_t1CECC}, we present constructions of $1$-CECCs and $t$-$(1,\ldots,1)$-CECCs, respectively, for any $q\ge2$.

\subsection{$(1,0,\ldots,0)$-CECCs}\label{sec_10CECC}
%%%%%%%%%%%%%%%%%%%%%%%%%%%%%%%%%%%%%%%%%%%%%%%%%%%%%%
This subsection focuses on codes tailored to the $(1,0,\ldots,0)$-composite-error model. For clarity of exposition, we focus on the case $q=2$. At the end of this subsection, we briefly explain how to extend the results to general $q$.

Recall that mapping every letter $\sigma\in\Phi_{2,k}$ to $\Wth{1}{\sigma}$ allows us to treat each sequence $\bms\in\Phi_{2,k}^n$ as a sequence in $\Sigma_{k+1}^n$. As discussed in \cite[Section IV]{BesartDollma202509}, a substitution in $\bms_1$ (the first row of $\bms$) induces only one of the following three types of errors in $\bms$:
\begin{itemize}
    \item A symbol $k-1$ is replaced by symbol $k$;
    \item A symbol $k$ is replaced by symbol $k-1$;
    \item A symbol $\sigma\in\mathset{0,1,\ldots,k-2}$ is replaced by an invalid symbol.
\end{itemize}
Because each symbol in $\bms$ is a nondecreasing column vector and the substitution occurs only in the first row, an invalid symbol can be detected and corrected instantly. As a result, it suffices to design codes that correct the first two types of errors.
Dollma \emph{et al} \cite{BesartDollma202509} constructed a code capable of correcting these errors but did not provide encoding or decoding algorithms. We fill this gap by presenting such algorithms for their code.

First, we briefly review its construction. For a sequence $\bms\in\Sigma_{k+1}^n$, delete all symbols $\sigma\in\Sigma_{k+1}\setminus\{k-1,k\}$ from $\bms$. Then replace $k-1$ with $0$, and replace $k$ with $1$. The resulting binary sequence, denoted by $\cF(\bms)$, has length $l=\Wth{k-1}{\bms}+\Wth{k}{\bms}$. Let $\cC(0)=\mathset{\epsilon}$ (here, $\epsilon$ denotes the empty sequence), $\cC(1)=\{1\}$, $\cC(2)=\{11\}$. For every $l\ge 3$, let $\cC(l)$ be a binary single-substitution correcting code of length $l$. It is shown in \cite{BesartDollma202509} that
\begin{equation*}
    \cC_{\mathrm{Doll}}\triangleq\cup_{l=0}^{n}\mathset{\bmc\in\Sigma_{k+1}^n:\Wth{k-1}{\bmc}+\Wth{k}{\bmc}=l,\cF(\bmc)\in\cC(l)}
\end{equation*}
is a $(1,0,\ldots,0)$-CECC. From the definition, it follows directly that (see \cite[Corollary 1]{BesartDollma202509})
\begin{equation}\label{eq_sizeDoll}
    \abs{\cC_{\mathrm{Doll}}}=\sum_{l=0}^{n}\binom{n}{l}(k-1)^{n-l}\abs{\cC(l)}.
\end{equation}

For $l\ge3$, let $H_l$ be the matrix whose columns are all nonzero vectors of length $\ceilenv{\log_2(l+1)}$, ordered lexicographically. Clearly, $H_l$ has $2^{\ceilenv{\log_2(l+1)}}-1$ columns. Let $\hat{H_l}$ be the matrix obtained from $H_l$ by deleting $2^{\ceilenv{\log_2(l+1)}}-1-l$ columns. Then $\hat{H_l}$ is a $\ceilenv{\log_2(l+1)}\times l$ matrix. Here, we do not delete columns of Hamming weight $1$. This is feasible because $l\ge\ceilenv{\log_2(l+1)}$. In this subsection, for $l\ge3$, we take $\cC(l)$ to be the code with $\hat{H}_l$ as its parity-check matrix. By the construction of $\hat{H}_l$, $\cC(l)$ is an $\mathbb{F}_2$-linear single-substitution correcting code with redundancy $\ceilenv{\log_2(l+1)}$. Therefore, we have $\abs{\cC(l)}=2^{l-\ceilenv{\log_2(l+1)}}$. Using these $\cC(l)$, Dollma \emph{et al} showed that
\begin{equation}\label{eq_lowerboundDoll}
    \abs{\cC_{\mathrm{Doll}}}\ge\frac{(k+1)^{n+1}-(k-1)^{n+1}}{4(n+1)}=\frac{k+1}{4}\cdot\frac{(k+1)^n}{n+1}(1+o(1)).
\end{equation}

Let $m=\floorenv{\log_{k+1}\parenv{\frac{(k+1)^{n+1}-(k-1)^{n+1}}{4(n+1)}}}$. The first inequality in \eqref{eq_lowerboundDoll} implies that the code $\cC_{\mathrm{Doll}}$ can encode messages in $\Sigma_{k+1}^m$. However, no such an encoding algorithm was provided in \cite{BesartDollma202509}. We now design an algorithm that encodes $\bmx\in\Sigma_{k+1}^m$ into a codeword $\bmc\in\cC_{\mathrm{Doll}}$.

To illustrate the underlying idea, we first revisit the derivation of \eqref{eq_sizeDoll}. Codewords in $\cC_{\mathrm{Doll}}$ are of length $n$ and can be generated as follows. For each $0\le l\le n$ and each $\bms\in\cC(l)$, first convert $\bms$ to the corresponding sequence $\bms^\prime$ consisting of $k-1$ and $k$. Next, select $l$ positions in $[n]$ to place symbols of $\bms^\prime$. For each of the remaining $n-l$ positions, freely assign any symbol from $\Sigma_{k+1}\setminus\{k-1,k\}$. \Cref{eq_sizeDoll} follows directly from this generation process. Inspired by this proof, the algorithm encodes $\bms$ into $\bmc$ via the following six steps:
\begin{enumerate}[\textbf{Step} 1]
    \item Convert the $(k+1)$-ary sequence $\bmx$ into integer $N_0=\sum_{i=1}^{m}\bmx[i](k+1)^{i-1}+1$. Clearly, $1\le N_0\le(k+1)^m\le\abs{\cC_{\mathrm{Doll}}}$.
    \item (\textbf{determining} $l$) Find the smallest $0\le l\le n$, such that $\sum_{i=0}^{l}\binom{n}{i}(k-1)^{n-i}\abs{\cC(i)}\ge N_0$. Such an $l$ does exist because $N_0\le\abs{\cC_{\mathrm{Doll}}}$.
    \item (\textbf{identifying a codeword in} $\cC(l)$)\footnote{Here we assume $l\ge3$, since each of $\cC(0)$, $\cC(1)$ and $\cC(2)$ contains only one codeword.} Let $N_1=N_0-\sum_{i=0}^{l-1}\binom{n}{i}(k-1)^{n-i}\abs{\cC(i)}$. Clearly, $N_1\ge1$. Let $\temp_1=\ceilenv{\frac{N_1}{\binom{n}{l}(k-1)^{n-l}}}$. Then $\temp_1$ is the smallest integer between $1$ and $\abs{\cC(l)}$, such that $\temp_1\cdot\binom{n}{l}(k-1)^{n-l}\ge N_1$. Express $\temp_1-1$ as $\temp_1-1=\sum_{i=0}^{l-\ceilenv{\log_2(l+1)}-1}y_i2^{i}$, where $y_i\in\{0,1\}$. This is feasible since $0\le\temp_1-1<\abs{\cC(l)}=2^{l-\ceilenv{\log_2(l+1)}}$. Let $\bms=\parenv{y_0,\ldots,y_{l-\ceilenv{\log_2(l+1)}-1}}\cdot G(l)$, where $G(l)$ is a systematic generator matrix of $\cC(l)$. Then $\bms$ is a codeword in $\cC(l)$.
    \item Let $\bms^\prime=\bms+k-1$ (i.e., $\bms^\prime[i]=\bms[i]+k-1$ for all $i$).
    \item (\textbf{determining locations to place} $\bms^\prime$) Let $N_2=N_1-(\temp_1-1)\binom{n}{l}(k-1)^{n-l}$. Clearly, $N_2\ge1$. Let $\temp_2=\ceilenv{\frac{N_2}{(k-1)^{n-l}}}$. Then $\temp_2$ is the smallest integer between $1$ and $\binom{n}{l}$ such that $\temp_2(k-1)^{n-l}\ge N_2$. Index binary sequences of length $n$ and Hamming weight $l$ with $\mathset{1,\ldots,\binom{n}{l}}$ in lexicographical order. Let $\bmc$ be the binary sequence with index $\temp_2$. For $1\le i\le l$, replace the $i$-th $1$ of $\bmc$ with the $i$-th symbol of $\bms$.
    \item (\textbf{determining symbols in the remaining} $n-l$ \textbf{positions of} $\bmc$) Let $N_3=N_2-(\temp_2-1)(k-1)^{n-l}$. Then $1\le N_3\le (k-1)^{n-l}$. Write $N_3-1=\sum_{i=0}^{n-l-1}z_i(k-1)^i$, where $z_i\in\{0,1,\ldots,k-2\}$. Place $\parenv{z_0,\ldots,z_{n-l-1}}$ in the remaining positions of $\bmc$.
\end{enumerate}

Notice that each codeword in $\cC_{\mathrm{Doll}}$ is determined by a tuple $\parenv{l,\temp_1,\temp_2,N_3}$. Define the index of a codeword in $\cC_{\mathrm{Doll}}$ with tuple $\parenv{l,\temp_1,\temp_2,N_3}$ as
\begin{equation*}
    \sum_{i=0}^{l-1}\binom{n}{i}(k-1)^{n-i}\abs{\cC(i)}+(\temp_1-1)\binom{n}{l}(k-1)^{n-l}+(\temp_2-1)(k-1)^{n-l}+N_3.
\end{equation*}
It is straightforward to verify that $N_0$ corresponds to the lexicographical index of $\bmx$ in the set $\Sigma_{k+1}^m$. Therefore, the encoding algorithm above maps $\bmx$ to the $N_0$-th codeword in $\cC_{\mathrm{Doll}}$. 

Construct a size $(n+1)\times 5$ table $T$ where: the first column stores values of $l\in\{0,1,\ldots,n\}$ in increasing order; the second column stores $\binom{n}{l}$; the third column stores $(k-1)^{n-l}$; the fourth column stores $\abs{\cC(l)}$; the last column stores generator matrices $G(l)$. This table is used in Steps 2, 3, 5 and 6, and \Cref{alg_decoder10CECC}.

We analyze the time complexity of this encoding algorithm. \textbf{Step} 1 runs in $O(m)=O(n)$ time. \textbf{Step} 2 can be implemented by looking up table $T$. Denote $S_j=\sum_{i=0}^j\binom{n}{i}(k-1)^{n-i}\abs{\cC(i)}$. Calculating each product $\binom{n}{j}(k-1)^{n-j}\abs{\cC(j)}$ takes $O(n^3)$ time. Once obtaining $\binom{n}{j}(k-1)^{n-j}\abs{\cC(j)}$, we add it to $S_{j-1}$. This takes $O(n)$ time. For each $j$, we need to compare $S_j$ with $N_0$. This takes $O(n)$ time. Since $0\le j\le n$, \textbf{Step} 2 runs in at most $O(n)\times\parenv{O(n^3)+O(n)+O(n)}=O(n^4)$ time. In \textbf{Step} 3, calculating $N_1$, $\temp_1$, $\parenv{y_0,\ldots,y_{l-\ceilenv{\log_2(l+1)1}-1}}$, and $\bms$ takes $O(n)$, $O(n^2)$, $O(n)$, and $O(n^2)$ time, respectively, leading to an overall time complexity of $O(n^2)$. Both of \textbf{Step} 4 and \textbf{Step} 6 run in $O(n)$ time. For \textbf{Step} 5, calculating $N_2$ and $\temp_2$ takes $O(n^2)$ time. After obtaining $\bmc$, the process of replacing $1$'s in $\bmc$ with symbols of $\bms$ takes $O(n)$ time. To accomplish \textbf{Step} 5, we need algorithms to encode/decode $\temp_2$ into/from a binary sequence of length $n$ and Hamming weight $l$. 
It is sufficient to assume that $1\le l<n$. 

For integers $n\ge1$ and $1\le w< n$, let $\Sigma_2^n(w)$ denote the set of binary sequences of length $n$ and Hamming weight $w$. For $\bma\in\Sigma_2^n(w)$, define $I_{\bma}=\parenv{i_1,\ldots,i_w}$, where $i_j$ is the position of the $j$-th $1$ in $\bma$. Define the mapping
\begin{equation*}
\begin{array}{rl}
        f_{n,w}: &\Sigma_2^n(w)\rightarrow \sparenv{\binom{n}{w}}\\
        &\quad\quad\bma\mapsto 1+\sum_{j=1}^{w}\binom{i_j-1}{j}
\end{array}.
\end{equation*}
It is shown in \cite{Schalkwijk1972IT,kabal2018} that $f_{n,w}(\bma)$ gives the index of $\bma$ in $\Sigma_2^n(w)$ under the lexicographical order.
\begin{example}
\begin{itemize}
    \item Ordered lexicographically, the three sequences in $\Sigma_2^3(1)$ are: $100$, $010$, and $001$. By definition, $f_{3,1}(100)=1+\binom{1-1}{1}=1$, $f_{3,1}(010)=1+\binom{2-1}{1}=2$, and $f_{3,1}(001)=1+\binom{3-1}{1}=3$.
    \item Ordered lexicographically, the three sequences in $\Sigma_2^3(2)$ are: $110$, $101$, and $011$. By definition, $f_{3,2}(110)=1+\binom{1-1}{1}+\binom{2-1}{2}=1$, $f_{3,2}(101)=1+\binom{1-1}{1}+\binom{3-1}{2}=2$, and $f_{3,2}(011)=1+\binom{2-1}{1}+\binom{3-1}{2}=3$.
\end{itemize}    
\end{example}

For any $i\in\sparenv{\binom{n}{w}}$, there is a unique $\bma\in\Sigma_2^n(w)$ such that $f_{n,w}(\bma)=i$. To find this $\bma$, notice that for any $1\le j\le w$, we have $\sum_{s=1}^{j}\binom{i_s-1}{s}<\binom{i_j}{j}$. This proves the following lemma.
\begin{lemma}\cite{kabal2018}
    For $\bma\in\Sigma_2^n(w)$ and $i\in\sparenv{\binom{n}{w}}$, let $g_{n,w}(i)$ be the output of \Cref{alg_inversef}. Then $f_{n,w}\parenv{g_{n,w}(i)}=i$ and $g_{n,w}(f_{n,w}(\bma))=\bma$.
\end{lemma}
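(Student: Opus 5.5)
The plan is to prove this in three parts: a key inequality for $f_{n,w}$, then surjectivity of $f_{n,w}$ onto $\sparenv{\binom{n}{w}}$ via the greedy rule, then a counting argument. I assume that \Cref{alg_inversef} is the natural greedy inverse. It sets $r_w=i-1$ and, for $j=w,w-1,\ldots,1$, picks $i_j$ as the largest integer with $\binom{i_j-1}{j}\le r_j$. It then sets $r_{j-1}=r_j-\binom{i_j-1}{j}$ and outputs the sequence with ones at $i_1,\ldots,i_w$. If the algorithm differs, the same skeleton should still apply.

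\textbf{Key inequality.} For any $\bma\in\Sigma_2^n(w)$ with $I_{\bma}=(i_1,\ldots,i_w)$ and any $1\le j\le w$,
\begin{equation*}
\sum_{s=1}^{j}\binom{i_s-1}{s}\le\binom{i_j}{j}-1 .
\end{equation*}
Since the positions strictly increase, $i_s\le i_j-(j-s)$. By monotonicity of binomial coefficients the sum is therefore at most $\sum_{s=1}^{j}\binom{i_j-1-j+s}{s}$. The hockey-stick identity $\sum_{s=0}^{j}\binom{N-j+s}{s}=\binom{N+1}{j}$, with $N=i_j-1$ and the $s=0$ term removed, evaluates this to $\binom{i_j}{j}-1$. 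Taking $j=w$ and $i_w\le n$ gives $1\le f_{n,w}(\bma)\le\binom{n}{w}$, so $f_{n,w}$ is well defined. The same inequality shows that for each $j$,
\begin{equation*}
\binom{i_j-1}{j}\le\sum_{s=1}^{j}\binom{i_s-1}{s}<\binom{i_j}{j}.
\end{equation*}
Because $\binom{x}{j}$ is nondecreasing in $x$, this means $i_j$ is exactly the largest integer with $\binom{i_j-1}{j}\le r_j$, where $r_j$ is the partial sum. Inducting downward from $j=w$ then gives $g_{n,w}(f_{n,w}(\bma))=\bma$, and hence $f_{n,w}$ is injective.

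\textbf{Surjectivity (main obstacle).} For an arbitrary $i\in\sparenv{\binom{n}{w}}$, I must show the greedy output is a genuine element of $\Sigma_2^n(w)$. That requires three things.
\begin{enumerate}
\item $i_w\le n$. This holds because $r_w=i-1<\binom{n}{w}$.
\item The positions strictly decrease as $j$ decreases. The maximality of $i_j$ gives $r_j<\binom{i_j}{j}$, so
\begin{equation*}
r_{j-1}<\binom{i_j}{j}-\binom{i_j-1}{j}=\binom{i_j-1}{j-1}.
\end{equation*}
Hence the choice at step $j-1$ satisfies $i_{j-1}-1<i_j-1$.
\item $i_1\ge1$. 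Since $\binom{x}{j}=0$ for $0\le x<j$, the maximal choice always has $i_j\ge j$.
\end{enumerate}
Finally, $r_0=0$ once $j=1$ is processed: $i_1$ is maximal with $i_1-1\le r_1$, which forces $i_1-1=r_1$. Summing the subtracted terms then gives $f_{n,w}(g_{n,w}(i))=1+(i-1)=i$.

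These steps give both identities. The injectivity-plus-counting remark, $\abs{\Sigma_2^n(w)}=\binom{n}{w}$, serves only as a consistency check, and also shows that $f_{n,w}$ is the lexicographic index map quoted from \cite{Schalkwijk1972IT,kabal2018}. I expect the only delicate point to be the strict-decrease step (item 2 above), which rests on the Pascal identity $\binom{i_j}{j}-\binom{i_j-1}{j}=\binom{i_j-1}{j-1}$.
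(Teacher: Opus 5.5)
Your proof is correct and shares the paper's core idea. The inequality $\sum_{s=1}^{j}\binom{i_s-1}{s}<\binom{i_j}{j}$ identifies $i_j$ as the position the greedy must take, so the algorithm recovers $\bma$ from $f_{n,w}(\bma)$.

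Your assumed greedy is in fact exactly \Cref{alg_inversef}, so you can drop the hedge. That algorithm scans $n_0$ downward from $n$, sets $\bma[n_0]=1$ at the first $n_0$ with $\temp\ge\binom{n_0-1}{w_0}$, and resumes the scan at $n_0-1$. By monotonicity this first hit is your ``largest integer.'' Your item 2 shows the unconstrained maximum at step $j-1$ already lies below $i_j$, so the restarted scan loses nothing.

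Where you differ from the paper is in self-containment. The paper only asserts the key inequality; you prove it with the hockey-stick identity. The paper gets $f_{n,w}(g_{n,w}(i))=i$ by importing from \cite{Schalkwijk1972IT,kabal2018} that $f_{n,w}$ is the lexicographic index, hence a bijection onto $\sparenv{\binom{n}{w}}$. Then every $i$ equals some $f_{n,w}(\bma)$, and the first identity finishes the argument.

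Your own ``consistency check'' already gives that bijectivity without any citation. Since $g_{n,w}\circ f_{n,w}$ is the identity, $f_{n,w}$ is injective. The case $j=w$ of the key inequality places its image in $\sparenv{\binom{n}{w}}$, and $\abs{\Sigma_2^n(w)}=\binom{n}{w}$ then forces a bijection. So the surjectivity analysis you call the main obstacle is correct but optional. Its only extra payoff is that it checks the algorithm's behaviour on every input directly: it stays inside $\Sigma_2^n(w)$ and ends with remainder $0$, rather than inferring this by counting.
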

\begin{algorithm*}[t]
\small{
\DontPrintSemicolon
\SetAlgoLined
\KwIn {$i\in\sparenv{\binom{n}{w}}$}
\KwOut {$\bma\in\Sigma_2^n(w)$}
\textbf{Initialization:}\;
$\temp\gets i-1$\;
$n_0\gets n$\;
$w_0\gets w$\;
$\bma\gets0^n$\;
\While{$w_0\ge1$}
{
\If{$\temp\ge\binom{n_0-1}{w_0}$}
{
$\bma[n_0]\gets1$\;
$\temp\gets\temp-\binom{n_0-1}{w_0}$\;
$w_0\gets w_0-1$
}
$n_0\gets n_0-1$
}
\Return{$\bma$}
\caption{Function $g_{n,w}$}
\label{alg_inversef}}
\end{algorithm*}

Therefore, $g_{n,w}$ is the inverse of $f_{n,w}$. Since for each $j\in[n]$, calculating $\binom{n}{j}$ takes at most $O\parenv{n^2\log_2^2(n)}$ time, the time complexities of calculating $f_{n,w}(\bma)$ and $g_{n,w}(i)$ are both $O\parenv{n^3\log_2^2(n)}$. Combining these results, the encoding algorithm runs in $O\parenv{n^4}$ time. We formally present this algorithm in \Cref{alg_encoder10CECC}.

Denote by $\Enc_{\mathrm{Doll}}$ the encoder. Let $\bmc=\Enc_{\mathrm{Doll}}(\bmx)$ for some $\bmx\in\Sigma_{k+1}^{m}$, and $\bmc^\prime\in\cB_{(1,0,\ldots,0)}^S(\bmc)$. Notice that a $(1,0,\ldots,0)$-composite-error does not change the total number of symbols $k-1$ and $k$ in $\bmc$. Therefore, $l:=\Wth{k-1}{\bmc}+\Wth{k}{\bmc}=\Wth{k-1}{\bmc^\prime}+\Wth{k}{\bmc^\prime}$. Let $I=\mathset{i\in[n]:\bmc^\prime[i]\in\{k-1,k\}}$. If $\bmc^\prime$ contains an invalid symbol, this invalid symbol must lie in $\bmc^\prime\mid_{[n]\setminus I}$, and the error can be detected and corrected directly. Otherwise, the error (if exists) must occur in $\bmc^\prime\mid_{I}$. We first transform $\bmc^\prime\mid_{I}$ to its corresponding binary sequence $\bmc^\prime\mid_{I}-(k-1)$, and then correct the error in $\bmc^\prime\mid_{I}-(k-1)$ using the decoder of $\cC(l)$. In either case, we can obtain the codeword $\bmc$. Decoding $\bmx$ from $\bmc$ can be achieved by reversing \Cref{alg_encoder10CECC}, as detailed in \Cref{alg_decoder10CECC}.

\begin{algorithm*}[t]
\small{
\DontPrintSemicolon
\SetAlgoLined
\KwIn {$\bmx\in\Sigma_{k+1}^{m}$, where $m=\floorenv{\log_{k+1}\parenv{\frac{(k+1)^{n+1}-(k-1)^{n+1}}{4(n+1)}}}$}
\KwOut {$\bmc\in\cC_{\mathrm{Doll}}$}
\textbf{Initialization:}\;
$N_0\gets\sum_{i=1}^{m}\bmx[i](k+1)^{i-1}+1$\tcp*{Step 1}
$S\gets0$\;
$l\gets0$\;
\While{$S<N_0$ \tcp*{Step 2}}
{
\eIf{$S+\binom{n}{l}(k-1)^{n-l}\abs{\cC(l)}\ge N_0$}
{
$N_1\gets N_0-S$\tcp*{Step 3}
$\temp_1\gets\ceilenv{\frac{N_1}{\binom{n}{l}(k-1)^{n-l}}}$\;
write $\temp_1-1$ as $\temp_1-1=\sum_{i=0}^{l-\ceilenv{\log_2(l+1)}-1}y_i2^{i}$, where $y_i\in\{0,1\}$\;
$\bms\gets\parenv{y_0,\ldots,y_{l-\ceilenv{\log_2(l+1)}-1}}\cdot G(l)$\;
$\bms^\prime\gets\bms+k-1$\tcp*{Step 4}
$N_2\gets N_1-(\temp_1-1)\binom{n}{l}(k-1)^{n-l}$\tcp*{Step 5}
$\temp_2\gets\ceilenv{\frac{N_2}{(k-1)^{n-l}}}$\;
$\bmc\gets g_{n,l}(\temp_2)$\;
$I\gets\mathset{i\in[n]:\bmc[i]=1}$\;
$\bmc\mid_{I}\gets\bms^\prime$\;
$N_3\gets N_2-(\temp_2-1)(k-1)^{n-l}$\tcp*{Step 6}
write $N_3-1=\sum_{i=0}^{n-l-1}z_i(k-1)^i$, where $z_i\in\{0,1,\ldots,k-2\}$\;
$\bmc\mid_{[n]\setminus I}\gets\parenv{z_0,\ldots,z_{n-l-1}}$
}
{
$S\gets S+\binom{n}{l}(k-1)^{n-l}\abs{\cC(l)}$\tcp*{Step 2}
$l\gets l+1$
}
}
\Return{$\bmc$}
\caption{$\Enc_{\mathrm{Doll}}$: Encoder for $\cC_{\mathrm{Doll}}$}
\label{alg_encoder10CECC}}
\end{algorithm*}

\begin{algorithm*}[htbp]
\small{
\DontPrintSemicolon
\SetAlgoLined
\KwIn {$\bmc^\prime\in\cB_{(1,0,\ldots,0)}^S(\bmc)$, where $\bmc=\Enc_{\mathrm{Doll}}(\bmx)$ for some $\bmx\in\Sigma_{k+1}^{m}$}
\KwOut {$\bmx\in\Sigma_{k+1}^{m}$}
\textbf{Initialization:}\;
$I\gets\mathset{i\in[n]:\bmc^\prime[i]\in\{k-1,k\}}$\;
$l\gets\abs{I}$\;
\eIf{there is an invalid symbol in $\bmc^\prime$}
{
$\bmc\gets$ correcting the error in $\bmc^\prime$ directly\;
$\bms\gets\bmc\mid_{I}-(k-1)$\;
$\parenv{z_0,\ldots,z_{n-l-1}}\gets\bmc\mid_{[n]\setminus I}$
}
{
$\bmd\gets\bmc^\prime\mid_I$\;
$\bms\gets$ correcting the error in $\bmd-(k-1)$ using decoder for $\cC(l)$\;
$\parenv{z_0,\ldots,z_{n-l-1}}\gets\bmc^\prime\mid_{[n]\setminus I}$
}
$N_3=1+\sum_{j=0}^{n-l-1}z_j(k-1)^j$\;
$\temp_2\gets f_{n,l}(I)$\;
$N_2\gets N_3+(\temp_2-1)(k-1)^{n-l}$\;
$\parenv{y_0,\ldots,y_{l-\ceilenv{\log_2(l+1)}-1}}\gets$ decoding $\bms$ using $G(l)$\tcp*{$\parenv{y_0,\ldots,y_{l-\ceilenv{\log_2(l+1)}-1}}$ is the message corresponding to $\bms\in\cC(l)$; recall that $G(l)$ is systematic}
$\temp_1\gets1+\sum_{i=0}^{l-\ceilenv{\log_2(l+1)}-1}y_i2^{i}$\;
$N_1\gets N_2+(\temp_1-1)\binom{n}{l}(k-1)^{n-l}$\;
$N_0\gets N_1+\sum_{i=0}^{l-1}\binom{n}{i}(k-1)^{n-i}\abs{\cC(i)}$\;
write $N_0-1$ as $N_0-1=\sum_{i=1}^{m}\bmx[i](k+1)^{i-1}$\;
$\bmx\gets \bmx[1]\cdots\bmx[m]$\;
\Return{$\bmx$}
\caption{$\Dec_{\mathrm{Doll}}$: Decoder for $\cC_{\mathrm{Doll}}$}
\label{alg_decoder10CECC}}
\end{algorithm*}

\subsubsection{\textbf{Extension to Larger Alphabets}}
    Let $q\ge2$. Define $\cA_1=\mathset{\sparenv{0,0,a_3,\ldots,a_{k}}^\T\in\Phi_{q,k}}$ and $\cA_2=\Phi_{q,k}\setminus\cA_1$. Recall that we identify alphabets $\Phi_{q,k}$ and $\Sigma_{Q_{q,k}}$. Therefore, $\cA_1$ and $\cA_2$ are also viewed as subsets of $\Sigma_{Q_{q,k}}$. Let $\bms\in\Phi_{q,k}^n$. Sequence $\bms$ can be viewed as a sequence over the alphabet $\Sigma_{Q_{q,k}}$. A substitution in $\bms_1$ induces one of the following three types of errors in $\bms$:
    \begin{itemize}
        \item A symbol in $\cA_2$ is replaced by another symbol in $\cA_2$;
        \item A symbol in $\cA_2$ is replaced by an invalid symbol;
        \item A symbol in $\cA_1$ is replaced by an invalid symbol.
    \end{itemize}

The last type of errors can be detected and corrected immediately. Regarding the second type of errors, suppose that $\sparenv{a_1,a_2\ldots,a_{k}}^\T$ is replaced by $\sparenv{a_1^\prime,a_2,\ldots,a_{k}}^\T$, where $a_1^\prime>a_2$. This error can be easily detected. Once this invalid symbol is detected, we replace $a_1^\prime$ with $a_2$. Thus, we obtain a symbol in $\cA_2$. Therefore, we only need to consider the first type of errors. Let $Q=\abs{\cA_2}$ and $k_0=\min\parenv{\cA_2}$.

For a sequence $\bms\in\Sigma_{Q_{q,k}}^n$, delete symbols $\sigma\in\cA_1$ from $\bms$ and subtract $k_0$ from each of the remaining symbols. The resulting sequence, denoted by $\cF(\bms)$, is a $Q$-ary sequence of length $l=\sum_{\sigma\in\cA_2}\Wth{\sigma}{\bms}$.
Let $\cC(0)$, $\cC(1)$, and $\cC(2)$ be defined as before. For $l\ge3$, let $\cC(l)$ be a $Q$-ary single-substitution correcting code (e.g., a punctured Hamming code if $Q$ is a prime power, as in the case of $q=2$). Now similar to the case where $q=2$, this yields a $q$-ary $k$-resolution $(1,0,\ldots,0)$-CECC, with encoding and decoding algorithms designed analogously.

\subsection{$1$-CECCs}\label{sec_1CECC}
%%%%%%%%%%%%%%%%%%%%%%%%%%%%%%%%%%%%%%%%%%%%%%%%%%%%
We now turn to the $k$-resolution $1$-composite-error model, starting with the case $q=2$. In this case, we identify $\Phi_{2,k}$ with $\Sigma_{k+1}$.

A limited-magnitude-error of magnitude $1$ ($1$-LME) without wrap-around is a substitution error where a symbol $\sigma\in\Sigma_Q$ may be altered to $\sigma+ 1$ or $\sigma-1$, with the constraint that symbol $0$ can only be changed to $1$, and symbol $Q-1$ can only be changed to $Q-2$. The proof of \cite[Theorem 11]{BesartDollma202509} implies the following lemma.
\begin{lemma}\cite[Theorem 11]{BesartDollma202509}\label{lem_reduce1}
    If a code $\cC\subseteq\Sigma_{k+1}^n$ can correct a $1$-LME without wrap-around, it is a binary $k$-resolution $1$-CECC.
\end{lemma}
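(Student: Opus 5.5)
The plan is to show that every single substitution in the composite channel either produces an invalid (hence locatable and fixable) column, or acts on the $(k+1)$-ary representation as a $1$-LME without wrap-around. Once this is established, decodability by $\cC$ finishes the proof. Throughout, identify $\sigma\in\Phi_{2,k}$ with $\Wth{1}{\sigma}\in\Sigma_{k+1}$. A valid binary column of weight $w$ is exactly $[0^{k-w},1^w]^\T$.

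First, I would classify the effect of one substitution in row $i$ of a column $\sigma=[0^{k-w},1^w]^\T$. The result is a column of weight $w\pm1$. It is valid if and only if the flipped position is the boundary position. That means position $k-w$ flipped from $0$ to $1$, giving weight $w+1$, or position $k-w+1$ flipped from $1$ to $0$, giving weight $w-1$. These moves exist only when $w<k$ and $w>0$, respectively. So every valid outcome is a change of the symbol $w$ to $w\pm1$ within $\Sigma_{k+1}$, i.e., exactly a $1$-LME without wrap-around: $0$ can only go to $1$, and $k$ only to $k-1$.

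Second, I would handle the invalid outcomes. Suppose the received $\bmy$ (rows $\bmy_1,\ldots,\bmy_k$) contains a column that is not nondecreasing. Since at most one substitution occurred in total, exactly one column is invalid and all others are correct. The invalid column differs from a valid column in one entry, and its weight $w'$ is known. The original weight $w$ satisfies $w=w'\pm1$, but it cannot be read off directly.

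Here I would use the code rather than try to fix the column locally. Replace the invalid column by the valid column of weight $w'$. The resulting word differs from $\bmc$ in one position by magnitude $1$ (with no wrap-around, since $w'\in[0,k]$ and $w\in[0,k]$). Therefore the LME decoder of $\cC$ recovers $\bmc$.

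If instead all columns of $\bmy$ are valid, then by the first step $\bmy$, viewed in $\Sigma_{k+1}^n$, is obtained from $\bmc$ by at most one $1$-LME without wrap-around, and $\cC$ again decodes $\bmc$.

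The main subtlety I expect is this invalid-column case. One must check that the substituted valid column of weight $w'$ differs from the true column by exactly one unit and stays inside the alphabet. Since $w,w'\in\{0,\ldots,k\}$ and $|w-w'|=1$, this holds. Combining the two cases shows that $\cC$ corrects any $k$-resolution $1$-composite-error, so $\cC$ is a binary $k$-resolution $1$-CECC.
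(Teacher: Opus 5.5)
Your proof is correct. The paper does not prove this lemma itself; it only points to the proof of Theorem~11 in \cite{BesartDollma202509}, so there is no in-paper argument to compare against.

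The case split can be collapsed into one observation. A single bit flip changes the Hamming weight of exactly one column by exactly one, and every column weight lies in $\{0,\ldots,k\}$. So mapping every received column, valid or not, to its weight already gives a word in $\Sigma_{k+1}^n$ that is within one $1$-LME without wrap-around of $\bmc$. Your characterization of which flips keep a column valid is true but not needed.

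You are right that an invalid column cannot always be repaired locally. For example, $[1,0,1,1]^\T$ arises from both $[0,0,1,1]^\T$ and $[1,1,1,1]^\T$. That ambiguity is exactly why passing the weight word to the LME decoder, rather than fixing the column on the spot, is the right move.
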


\begin{algorithm*}[htbp]
\small{
\DontPrintSemicolon
\SetAlgoLined
\KwIn {$\bmx\in\Sigma_Q^{n-m-1}$, where $m=\ceilenv{\log_{Q}(n)}$}
\KwOut {$\bmc\in\cC(n;Q,a)$}
\textbf{Initialization:}\;
$S\gets\mathset{Q^j:0\le j\le m-1}$\;
$T\gets S\cup\{n\}$\;
$\bmc\mid_{[n]\setminus T}\gets\bmx$\; $\bmc\mid_T\gets0^{m+1}$\;
$d\gets\parenv{a-\vt\parenv{\bmc}}\pmod{2n+1}$\;
\eIf{$1\le d< n$}
{
write $d$ as $d=\sum_{j=0}^{m-1}d_jQ^{j}$, where each $0\le d_j<Q$\;
$\bmc\mid_S\gets\parenv{d_0,\ldots,d_{m-1}}$
}
{
\eIf{$n\le d<2n$}
{
$d^\prime\gets d-n$\;
write $d^\prime$ as $d^\prime=\sum_{j=0}^{m-1}d_j^\prime Q^{j}$, where each $0\le d_j^\prime<Q$\;
$\bmc\mid_S\gets\parenv{d_0^\prime,\ldots,d_{m-1}^\prime}$\;
$\bmc[n]\gets 1$
}
{
$\bmc[n]\gets 2$
}
}
\Return{$\bmc$}
\caption{Encoder for $\cC(n;Q,a)$}
\label{alg_leecoder1}}
\end{algorithm*}

Next, we present a $1$-LME correcting code.
\begin{theorem}\label{thm_leecoder1}
  Let $Q\ge 3$ and $n\ge2$ be two integers. For any $0\le a\le 2n$, the code
  \begin{equation}\label{eq_binary1CECC}
      \cC(n;Q,a)=\mathset{\bmc\in\Sigma_Q^n:\vt(\bmc)\equiv a\pmod{2n+1}}
  \end{equation}
  can correct a $1$-LME without wrap-around. There exists some $a$ such that the redundancy of $\cC(n;Q,a)$ is at most $\log_Q(2n+1)$. \Cref{alg_leecoder1} is a systematic encoder for $\cC(n;Q,a)$ with redundancy $\ceilenv{\log_Q(n)}+1$.
\end{theorem}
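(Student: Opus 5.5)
The proof splits into three claims: error correction, existence of a good $a$, and correctness of \Cref{alg_leecoder1} as a systematic encoder.

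\textbf{Error correction.} Let $\bmc\in\cC(n;Q,a)$ be sent and let $\bmy$ be received after at most one $1$-LME without wrap-around. Then $\bmy=\bmc+\delta\bme_i$ for some $i\in[n]$ and $\delta\in\{-1,0,1\}$. Note that $\bmy\in\Sigma_Q^n$ automatically, since wrap-around is excluded. Compute the syndrome $d=\parenv{\vt(\bmy)-a}\pmod{2n+1}$, so that $d\equiv\delta i\pmod{2n+1}$. The $2n+1$ values $\{0\}\cup\{i:i\in[n]\}\cup\{-i:i\in[n]\}$ are pairwise distinct modulo $2n+1$, so $d$ determines the pair $(\delta,i)$.
\begin{itemize}
\item If $d=0$, there is no error.
\item If $1\le d\le n$, then $\delta=+1$ and $i=d$.
\item If $n+1\le d\le 2n$, then $\delta=-1$ and $i=2n+1-d$.
\end{itemize}
The decoder subtracts $\delta$ from $\bmy[i]$ to recover $\bmc$. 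The existence of an $a$ with redundancy at most $\log_Q(2n+1)$ follows from pigeonhole over the $2n+1$ residues.

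\textbf{Systematic encoder.} The message occupies $[n]\setminus T$, where $S=\{Q^j:0\le j\le m-1\}$ and $T=S\cup\{n\}$. So the redundancy is $\abs{T}=m+1$, provided $n\notin S$ and all elements of $S$ lie in $[n]$. This holds because $Q^{m-1}<n$ when $m=\ceilenv{\log_Q n}$. The boundary case $n=Q^j$ needs care; there one may have to argue that $n\notin S$ still holds, or treat the case separately. Initially $\bmc\mid_T=0^{m+1}$, and the algorithm sets $d=(a-\vt(\bmc))\pmod{2n+1}$. We need the parity symbols to contribute exactly $d$ to $\vt$ modulo $2n+1$. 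Since $\vt$ is linear, placing the digit $d_j$ at position $Q^j$ contributes $d_jQ^j$, and placing $b$ at position $n$ contributes $bn$. The cases are:
\begin{itemize}
\item $d=0$: the contribution is $0$.
\item $1\le d<n$: we have $d<n\le Q^m$, so the base-$Q$ expansion with $m$ digits exists and the contribution is $\sum_j d_jQ^j=d$.
\item $n\le d<2n$: write $d'=d-n<n$ and expand it in the same way; setting $\bmc[n]=1$ gives a total contribution of $d'+n=d$.
\item $d=2n$: setting $\bmc[n]=2$ gives $2n$. This requires $Q\ge3$, which is where that hypothesis is used.
\end{itemize}
In all cases $\vt(\bmc)\equiv a\pmod{2n+1}$, and the output lies in $\Sigma_Q^n$.

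The main subtlety I expect is verifying that the positions in $S$ are distinct, lie in $[1,n-1]$, and have enough digits for every $d<n$. The conditions $Q^{m-1}<n\le Q^m$ should give both. Everything else is direct computation.
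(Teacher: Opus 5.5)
Your proposal is correct and follows the paper's proof essentially step for step. The syndrome $(\vt(\bmy)-a)\bmod (2n+1)$ separates $0$ and $\pm i$, pigeonhole gives the existential bound, and the encoder is checked by the same case split on $d$, with $\bmc[n]\in\{1,2\}$ absorbing $d\ge n$ (which is where $Q\ge 3$ enters).

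Your worry about $n=Q^j$ is unnecessary, since $m-1<\log_Q n$ always gives $Q^{m-1}<n$. Note also that you, like the paper, treat $d=0$ as ``output $\bmc$ unchanged.'' As typeset, however, the last else-branch of \Cref{alg_leecoder1} also fires for $d=0$: it sets $\bmc[n]=2$, which yields $\vt(\bmc)\equiv a-1$, so that branch should be restricted to $d=2n$.
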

\begin{IEEEproof}
    Let $\bmy$ be obtained from $\bmc\in\cC(n;Q,a)$ by at most one $1$-LME without wrap-around. Suppose that $\bmc[i]$ is altered to $\bmy[i]=\bmc[i]+\delta$, where $\delta\in\mathset{-1,0,1}$. To decode $\bmc$ from $\bmy$, it suffices to determine the values of $i$ and $\delta$. Let $\Delta=\parenv{\vt(\bmy)-a}\pmod{2n+1}$. It is easy to verify that
    \begin{equation*}
        \Delta=
        \begin{cases}
            0,\mbox{ if }\delta=0,\\
            i,\mbox{ if }\delta=1,\\
            2n+1-i,\mbox{ if }\delta=-1.
        \end{cases}
    \end{equation*}
    Therefore, we can conclude that $\delta=0$ when $\Delta=0$, and
    \begin{equation*}
        (i,\delta)=
        \begin{cases}
            \parenv{\Delta,1},\mbox{ if }\Delta\in[n],\\
            \parenv{2n+1-\Delta,-1},\mbox{ if }\Delta\in[n+1,2n].
        \end{cases}
    \end{equation*}
    Now we have proved the first statement. The second statement follows from the pigeonhole principle.

    Next, we prove the last statement. Since the length of the output codeword is $n$ and the length of the input message is $n-m-1$, the redundancy is $m+1=\ceilenv{\log_Q(n)}+1$. It is easy to verify that $Q^{m-1}<n$. Therefore, $n\notin S$ and $T$ is a subset of $[n]$ of size $m+1$. In Lines 4--17, the encoder first assigns symbols in $\bmx$, in their original order, to coordinates of $\bmc$ indexed by $[n]\setminus T$, and assigns symbols $0$ to coordinates indexed by $T$. Then it continues to calculate $d=\parenv{a-\vt(\bmc)}\pmod{2n+1}$. If $d=0$, then $\bmc\in\cC(n;Q,a)$ and the encoder outputs $\bmc$. Otherwise, the encoder updates $\bmc$ according to the value of $d$.
    
    If $1\le d<n$, the encoder expands $d$ into a $Q$-ary vector $\parenv{d_0,\ldots,d_{m-1}}$ (this is possible since $1\le d<n\le Q^m$), and then updates $\bmc\mid_S$ with $\parenv{d_0,\ldots,d_{m-1}}$.

    If $n\le d< 2n$, the encoder expands $d-n$ into a $Q$-ary vector $\parenv{d_0^\prime,\ldots,d_{m-1}^\prime}$, and then updates $\bmc\mid_T$ with $\parenv{d_0^\prime,\ldots,d_{m-1}^\prime,1}$.

    If $d=2n$, the encoder updates $\bmc[n]$ with $2$.

    By the definition of $d$, it is easy to verify that the updated $\bmc$ satisfies $\vt(\bmc)\equiv a\pmod{2n+1}$, and hence $\bmc\in\cC(n;Q,a)$. 
\end{IEEEproof}

The following corollary follows immediately from \Cref{lem_reduce1} and \Cref{thm_leecoder1} by setting $Q=k+1$.
\begin{corollary}
    For all $k,n\ge2$, there exists a binary $k$-resolution $1$-CECC with redundancy at most $\log_{k+1}(2n+1)$, or equivalently, with size at least $\frac{(k+1)^n}{2n+1}$. Moreover, \Cref{alg_leecoder1} constructs a systematic $1$-CECC of length $n$ with redundancy $\ceilenv{\log_{k+1}(n)}+1$.
\end{corollary}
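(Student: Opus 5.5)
The corollary has three claims, and each one comes from combining \Cref{thm_leecoder1} with $Q=k+1$ and \Cref{lem_reduce1}.

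First we check the hypotheses. \Cref{thm_leecoder1} requires $Q\ge3$. Since $k\ge2$, we have $Q=k+1\ge3$. It also requires $n\ge2$, which is assumed. We keep the identification of $\Phi_{2,k}$ with $\Sigma_{k+1}$, sending $\sigma\mapsto\Wth{1}{\sigma}$. Under this identification, any code $\cC\subseteq\Sigma_{k+1}^n$ is a subset of $\Phi_{2,k}^n$. Also, $Q_{2,k}=k+1$, so redundancy measured with $\log_{k+1}$ agrees with the paper's definition.

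For the existence claim, \Cref{thm_leecoder1} guarantees some $a\in[0,2n]$ for which $\cC(n;k+1,a)$ has redundancy at most $\log_{k+1}(2n+1)$. The reason is pigeonhole: the $2n+1$ codes $\cC(n;k+1,a)$ partition $\Sigma_{k+1}^n$, so one of them has at least $(k+1)^n/(2n+1)$ elements. The redundancy bound and the size bound say the same thing, since
\begin{equation*}
\log_{k+1}\parenv{\frac{(k+1)^n}{\abs{\cC}}}\le\log_{k+1}(2n+1)\iff\abs{\cC}\ge\frac{(k+1)^n}{2n+1}.
\end{equation*}
By \Cref{thm_leecoder1}, this code corrects a $1$-LME without wrap-around. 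Then \Cref{lem_reduce1} says it is a binary $k$-resolution $1$-CECC.

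For the systematic claim, \Cref{alg_leecoder1} with $Q=k+1$ places the message symbols unchanged in the coordinates $[n]\setminus T$, so it is systematic. Its output always lies in $\cC(n;k+1,a)$, so the image is a subset of that code. A subset of a code that corrects a $1$-LME without wrap-around still corrects it. Applying \Cref{lem_reduce1} again, the image is a $1$-CECC. Its redundancy is $\ceilenv{\log_{k+1}(n)}+1$, as shown in \Cref{thm_leecoder1}.

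No step is a real obstacle. The only thing that needs care is the hypothesis check $Q\ge3$: it is exactly where $k\ge2$ is used, and it is needed because the encoder may write the symbol $2$.
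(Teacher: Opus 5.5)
Your proposal is correct and takes the same route as the paper, which obtains the corollary in one line by instantiating \Cref{thm_leecoder1} with $Q=k+1$ and applying \Cref{lem_reduce1}. Your extra steps (checking $Q\ge3$ from $k\ge2$, the pigeonhole count over the $2n+1$ residues, and noting that the encoder's image is a subcode of $\cC(n;k+1,a)$) only spell out details the paper leaves implicit.
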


\begin{remark}
\begin{itemize}
    \item Setting $e=1$ in the proof of \cite[Corollary 1]{BesartDollma202509} yields an explicit construction of binary length-$n$ $k$-resolution $1$-CECCs with redundancy $\ceilenv{\log_{k+1}(n+1)}+1$ when $k+1$ is a prime power. In comparison, our explicit construction (\Cref{alg_leecoder1}) holds for all $k$ and has redundancy $\ceilenv{\log_{k+1}(n)}+1$.
    \item In \cite[Corollary 5]{BesartDollma202509}, Dollma \emph{et al} constructed a binary length-$n$ $k$-resolution $1$-CECC with redundancy $\ceilenv{\log_{k+1}(2n+1)}$. This construction holds only when $k$ is even. The above corollary ensures that there is a code with redundancy $\log_{k+1}(2n+1)$ for \emph{all} $k$. \Cref{alg_leecoder1} provides an explicit construction with redundancy $\ceilenv{\log_{k+1}(n)}+1$.
\end{itemize}
\end{remark}

Next, we proceed to the case $q>2$. Our construction relies on the following lemma.
\begin{lemma}\label{lem_q1CECC}
    For integers $q>2,k\ge2$ and $n\ge q$, let $p_1\ge n$ and $p_2\ge q$ be two primes. Any $\bmc\in\Phi_{q,k}^n$ can be decoded from any $\bmc^\prime\in\cB_1^{(q,k)}\parenv{\bmc}$, if given the following three values:
    \begin{itemize}
        \item $a_1:=\sum_{i=1}^{k}\sum_{j=1}^{n}\bmc_i[j]\pmod {2q-1}$;
        \item $a_2:=\sum_{i=1}^{k}\vt\parenv{\bmc_i}\pmod{p_1}$;
        \item $a_3:=\sum_{i=1}^n\sum_{j=0}^{q-1}j^2\cdot \Wth{j}{\bmc[i]}\pmod{p_2}$.
    \end{itemize}
\end{lemma}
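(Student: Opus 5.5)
The plan is to show that the three syndromes determine, in turn, the magnitude of the error, the column where it occurred, and the original value of the changed entry. That is enough because a valid column is determined by its multiset of entries.

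Let $\bmc^\prime\in\cB_1^{(q,k)}(\bmc)$. If $\bmc^\prime\ne\bmc$, then $\bmc^\prime$ is obtained from $\bmc$ by changing a single entry: some $\bmc_i[j]=a$ becomes $a^\prime\ne a$, and the column $\bmc^\prime[j]$ is still nondecreasing. Write $\delta=a^\prime-a$, so $\delta\in[-(q-1),q-1]$, and $\delta\ne0$ exactly when an error occurred. From $\bmc^\prime$ the decoder computes the three quantities $a_1^\prime,a_2^\prime,a_3^\prime$ defined as in the statement with $\bmc^\prime$ in place of $\bmc$, and compares them with the given $a_1,a_2,a_3$.

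\textbf{Step 1 (magnitude).} We have $a_1^\prime-a_1\equiv\delta\pmod{2q-1}$. The $2q-1$ integers in $[-(q-1),q-1]$ are pairwise incongruent modulo $2q-1$, so $\delta$ is uniquely determined. If $\delta=0$, no error occurred and we output $\bmc^\prime$.

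\textbf{Step 2 (column).} If $\delta\ne0$, then $a_2^\prime-a_2\equiv j\delta\pmod{p_1}$, since changing $\bmc_i[j]$ changes $\vt(\bmc_i)$ by $j\delta$. Because $0<|\delta|\le q-1<q\le n\le p_1$ and $p_1$ is prime, $\delta$ is invertible modulo $p_1$. This gives $j\bmod p_1$. The indices $1,\ldots,n$ are pairwise incongruent modulo $p_1$ because $p_1\ge n$; in the boundary case $p_1=n$, only $j=n$ maps to $0$. Hence $j$ is recovered.

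\textbf{Step 3 (original value).} Only column $j$ changes, and in it one copy of $a$ is replaced by $a^\prime$. Therefore
\[
a_3^\prime-a_3\equiv a^{\prime 2}-a^2=\delta(2a+\delta)\pmod{p_2}.
\]
Since $q>2$, $p_2\ge q\ge3$ is an odd prime, and $0<|\delta|<p_2$. So both $\delta$ and $2$ are invertible modulo $p_2$, which yields $a\bmod p_2$. As $0\le a\le q-1<p_2$, the value $a$ is determined, and then $a^\prime=a+\delta$.

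\textbf{Step 4 (reconstruction).} We need to rebuild column $j$ without knowing the row index $i$, which may be ambiguous because several rows can carry the same value. A column of $\Phi_{q,k}$ is the sorted arrangement of its multiset of entries, and that multiset is all we need. The multiset of $\bmc[j]$ equals the multiset of $\bmc^\prime[j]$ with one occurrence of $a^\prime$ removed and one occurrence of $a$ added. Sorting this multiset gives $\bmc[j]$, and every other column of $\bmc$ equals the corresponding column of $\bmc^\prime$.

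I expect the main obstacle to be Step 3 together with Step 4. Here $a_3$ is defined through the counts $\Wth{j}{\bmc[i]}$ rather than through row positions, and this is exactly what lets it identify the old value. The argument also needs the invertibility of $2$, which is why $q>2$ is assumed. The remaining steps are routine modular checks.
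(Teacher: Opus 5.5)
Your proposal is correct and follows the paper's proof step for step. You recover $\delta$ from $a_1$ modulo $2q-1$, then the column $j$ from $a_2$ by inverting $\delta$ modulo $p_1$, then the old entry from $a_3$ via $\delta(2a+\delta)$ modulo the odd prime $p_2$, and finally rebuild the column using that it is nondecreasing. The paper additionally treats an invalid erroneous column as a separate case; this is not needed here, since $\cB_1^{(q,k)}(\bmc)$ contains only valid sequences, and your Steps 2--4 would cover it anyway. Your multiset reconstruction simply spells out what the paper compresses into one clause.
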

\begin{IEEEproof}
    Suppose that $\bmc^\prime$ is obtained from $\bmc$ by substituting $\bmc_i[j]$ with $\bmc_i[j]+\delta$, where $-(q-1)\le \delta\le q-1$. Let $\Delta_1= (\sum_{r=1}^{k}\sum_{s=1}^{n}\bmc_r^\prime[s]-a_1)\pmod{2q-1}$. Then $\Delta_1=\delta \pmod{2q-1}$, from which the value of $\delta$ can be recovered as
    \begin{equation*}
        \delta=
        \begin{cases}
            \Delta_1,&\mbox{ if }0\le\Delta_1<q,\\
            \Delta_1-(2q-1),&\mbox{ if }\Delta_1\ge q.
        \end{cases}
    \end{equation*}
    If $\delta=0$, no error occurred. Next, assume $\delta\ne0$.
    
    If $\bmc^\prime[j]$ is an invalid symbol, index $j$ can be directly identified. Moreover, there must exist one $1\le r<k$ such that $\bmc_r^\prime[j]>\bmc_{r+1}^\prime[j]$, implying that the substitution either increased $\bmc_r[j]$ or decreased $\bmc_{r+1}[j]$. Since $\delta$ is known, the error can be corrected immediately. Now suppose that $\bmc^\prime[j]$ is a valid symbol. Let $\alpha=\bmc_i[j]$. To correct the error, it suffices to determine the values of $j$ and $\alpha$, as $\bmc[j]$ is a nondecreasing column vector over $\Sigma_q$ and $\delta$ is known.

    We first determine the value of $j$. Let $\Delta_2=\parenv{\sum_{r=1}^{k}\vt\parenv{\bmc_r^\prime}-a_2}\pmod{p_1}$. Then $\Delta_2=j\delta\pmod{p_1}$.
    Since $0<\abs{\delta}<p_1$ and $p_1$ is a prime, we have $j=\delta^{-1}\Delta_2\pmod{p_1}$. Then since $1\le j\le n\le p_1$, integer $j$ can be recovered from $j\pmod{p_1}$.

    Next, we determine the value of $\alpha$. Let $\Delta_3=\parenv{\sum_{s=1}^n\sum_{l=0}^{q-1}l^2\cdot \Wth{l}{\bmc^\prime[s]}-a_3}\pmod{p_2}$. Then $\Delta_3=\parenv{2\alpha+\delta}\delta\pmod{p_2}$. Since $0<\abs{\delta}<p_2$, $0\le\alpha\le q-1<p_2$ and $p_2$ is an odd prime, we have $\alpha=\alpha\pmod{p_2}= 2^{-1}\parenv{\Delta_3\delta^{-1}-\delta}\pmod{p_2}$. This completes the proof.
\end{IEEEproof}

The above lemma implies the following existential result.
\begin{corollary}\label{cor_q1CECC2}
    Let $q,k,n,p_1$ and $p_2$ be given as above. There exists a $q$-ary $k$-resolution $1$-CECC in $\Phi_{q,k}^n$ with redundancy at most $\log_{Q_{q,k}}\parenv{p_1p_2(2q-1)}$.
\end{corollary}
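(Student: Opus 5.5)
We would prove \Cref{cor_q1CECC2} by the standard pigeonhole (coset-partition) argument built on \Cref{lem_q1CECC}. For each triple $(b_1,b_2,b_3)\in\Sigma_{2q-1}\times\Sigma_{p_1}\times\Sigma_{p_2}$, define
\begin{equation*}
\cC(b_1,b_2,b_3)=\mathset{\bmc\in\Phi_{q,k}^n: a_1(\bmc)=b_1,\ a_2(\bmc)=b_2,\ a_3(\bmc)=b_3},
\end{equation*}
where $a_1(\bmc),a_2(\bmc),a_3(\bmc)$ are the three residues defined in \Cref{lem_q1CECC}. These $(2q-1)p_1p_2$ sets partition $\Phi_{q,k}^n$, since every valid sequence has exactly one value of each residue.

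Next we show that each $\cC(b_1,b_2,b_3)$ is a $k$-resolution $1$-CECC. The decoder knows $(b_1,b_2,b_3)$, and these are exactly the values $a_1,a_2,a_3$ of the transmitted codeword $\bmc$. By \Cref{lem_q1CECC}, $\bmc$ is therefore uniquely determined from any received word obtained by at most one substitution. The lemma is stated for $\bmc^\prime\in\cB_1^{(q,k)}(\bmc)$, i.e.\ valid outputs, but its proof also handles an invalid received column explicitly. So every received word produced by a single substitution is decodable, and the set is a $1$-CECC.

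By pigeonhole, some $\cC(b_1,b_2,b_3)$ has size at least $Q_{q,k}^n/\parenv{(2q-1)p_1p_2}$. Its redundancy $\log_{Q_{q,k}}\parenv{Q_{q,k}^n/\abs{\cC}}$ is then at most $\log_{Q_{q,k}}\parenv{p_1p_2(2q-1)}$.

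There is no real obstacle here. The only point to check is that the lemma's hypotheses ($q>2$, $n\ge q$, primes $p_1\ge n$ and $p_2\ge q$) carry over verbatim, which they do because the corollary takes its parameters as in the lemma. The requirement that $p_2$ be odd is automatic, since $p_2\ge q\ge3$.
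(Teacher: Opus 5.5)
Your proposal is correct and matches the paper's proof: the paper also notes that every residue class fixed by $(a_1,a_2,a_3)\in\Sigma_{2q-1}\times\Sigma_{p_1}\times\Sigma_{p_2}$ is a $1$-CECC by \Cref{lem_q1CECC}, then applies pigeonhole over the $(2q-1)p_1p_2$ classes. Your remark about invalid received words is a useful clarification the paper leaves implicit: $\cB_1^{(q,k)}$ contains only valid sequences, but the lemma's proof explicitly handles an invalid column.
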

\begin{IEEEproof}
    It follows from \Cref{lem_q1CECC} that the code
    \begin{equation*}
        \mathset{\bmc\in\Phi_{q,k}^n:
        \begin{array}{c}
        \sum_{i=1}^{k}\sum_{j=1}^{n}\bmc_i[j]\equiv a_1\pmod {2q-1},\\
        \sum_{i=1}^{k}\vt\parenv{\bmc_i}\equiv a_2\pmod{p_1},\\
        \sum_{i=1}^n\sum_{j=0}^{q-1}j^2\Wth{j}{\bmc[i]}\equiv a_3\pmod{p_2}
        \end{array}
        }
    \end{equation*}
    is a $1$-CECC for any $0\le a_1<2q-1$, $0\le a_2<p_1$ and $0\le a_3<p_2$. The pigeonhole principle ensures a choice with the stated redundancy.
\end{IEEEproof}

Next, we explicitly construct a $1$-CECC with redundancy close to that ensured in \Cref{lem_q1CECC}.

\begin{construction}\label{construction_q1CECC}
    For integers $q>2,k\ge2$ and $m\ge q$, let $p_1\ge m$ and $p_2\ge q$ be two primes. Let $\Delta=\ceilenv{\log_{Q_{q,k}}\parenv{p_1p_2}}$ and $n=m+2+\Delta$. Identify $\Phi_{q,k}$ with $\Sigma_{Q_{q,k}}$ and define the mapping 
    $$
    \begin{array}{rl}
         \varphi_1^S:&\Phi_{q,k}^m\rightarrow\Phi_{q,k}^n \\
         &\quad\bmx\mapsto\bmc
    \end{array}
    $$ as follows:
    \begin{itemize}
        \item $\bmc\mid_{[m]}=\bmx$;
        \item write $\sum_{i=1}^{k}\sum_{j=1}^{m}\bmx_i[j]\pmod {2q-1}=aq+b$, where $a\in\{0,1\}$ and $0\le b\le q-1$; let
        $$
        \begin{array}{l}
        \bmc[m+1]=\sparenv{a,\ldots,a}^\T,\\
        \bmc[m+2]=\sparenv{b,\ldots,b}^\T;
        \end{array}
        $$
        \item view the tuple $\parenv{\sum_{i=1}^{k}\vt\parenv{\bmx_i}\pmod{p_1},\sum_{i=1}^n\sum_{j=0}^{q-1}j^2\Wth{j}{\bmx[i]}\pmod{p_2}}$ as an integer $a(\bmx)$ in $[p_1p_2]$, and let $$\bmc\mid_{[m+3,m+2+\Delta]}=\expan{Q_{q,k}}{a(\bmx)}.$$
    \end{itemize}
    Define $\cC_1^S=\mathset{\varphi_1^S(\bmx):\bmx\in\Phi_{q,k}^m}$.
\end{construction}

\begin{theorem}\label{theo:1-CECC}
   The code $\cC_1^S$ is a $q$-ary $k$-resolution $1$-CECC with redundancy at most $2+\ceilenv{\log_{Q_{q,k}}\parenv{p_1p_2}}$.
\end{theorem}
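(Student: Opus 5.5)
The redundancy claim is immediate: the message occupies $m$ columns and the codeword has $n=m+2+\Delta$ columns, so the redundancy is $2+\Delta=2+\ceilenv{\log_{Q_{q,k}}\parenv{p_1p_2}}$. The substance is decodability. The plan is a case analysis on which of the three blocks of $\bmc=\varphi_1^S(\bmx)$ the (at most one) substitution falls into: the message block $\bmc\mid_{[m]}$, the two marker columns $\bmc[m+1],\bmc[m+2]$, or the redundancy block $\bmc\mid_{[m+3,n]}$. In each case we reduce to \Cref{lem_q1CECC}, applied with length $m$ in place of $n$, which is legitimate because $m\ge q$, $p_1\ge m$ and $p_2\ge q$.

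\textbf{Step 1 (the marker columns).} Columns $m+1$ and $m+2$ are constant columns $[a,\ldots,a]^\T$ and $[b,\ldots,b]^\T$. Since $k\ge2$ and at most one entry in the whole codeword is substituted, each column can be read robustly. If all $k$ entries of a column agree, that common value is correct. Otherwise exactly one entry differs, and we locate the error there. For $k\ge3$ the majority value is correct. For $k=2$, a column with two distinct entries tells us the error lies in this column, so all other columns are error-free. Hence:
\begin{itemize}
\item If the error is in column $m+2$, then column $m+1$ is correct and gives $a$. The value $b$ can then be recomputed from the error-free message block, since $aq+b\equiv\sum_{i}\sum_{j}\bmx_i[j]\pmod{2q-1}$.
\item Symmetrically, if the error is in column $m+1$, $a$ is recovered from the message block.
\end{itemize}
In every case we recover $a_1=aq+b$ (note $aq+b<2q-1$ requires $a=1\Rightarrow b\le q-2$, which holds by construction). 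Moreover, we learn whether the error lies in these two columns; if it does, $\bmx=\bmy\mid_{[m]}$ directly.

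\textbf{Step 2 (message versus redundancy block).} Suppose the two marker columns are clean. The key observation is that the value $a_1=\sum_{i}\sum_{j}\bmx_i[j]\pmod{2q-1}$, read from the markers, is reliable, so it serves as a detector. Compute $\Delta_1=\parenv{\sum_{i}\sum_{j\le m}\bmy_i[j]-a_1}\pmod{2q-1}$.
\begin{itemize}
\item If $\Delta_1\ne0$, the error lies in the message block. Then the redundancy block is clean, so $a(\bmx)$, and hence $a_2$ and $a_3$, are read off correctly, and \Cref{lem_q1CECC} decodes $\bmx$ from $\bmy\mid_{[m]}$.
\item If $\Delta_1=0$, then, as in the proof of \Cref{lem_q1CECC}, $\delta\equiv0$ forces $\delta=0$ because $\abs{\delta}\le q-1$. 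So no error occurred in the message block, and we output $\bmx=\bmy\mid_{[m]}$. This is correct even if the redundancy block was corrupted.
\end{itemize}

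\textbf{Main obstacle.} The delicate part is Step 1 for $k=2$ combined with possible invalid symbols in the marker columns. We must argue carefully that a disagreement inside a marker column unambiguously localizes the error, so that the parity $a_1$ can be recovered from the message block. Beyond that, we must check that an error inside the redundancy block, which may even produce an invalid column, never misleads the decoder; this is guaranteed because we consult the redundancy block only when $\Delta_1\neq 0$. Finally, we verify that the expansion $\expan{Q_{q,k}}{\cdot}$ is injective on $[p_1p_2]$, so that $a(\bmx)$ determines the pair $(a_2,a_3)$.
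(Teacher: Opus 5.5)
Your proposal is correct and follows the paper's proof. You check the two constant marker columns, use the $\pmod{2q-1}$ checksum stored there to decide whether $\bmy\mid_{[m]}$ is corrupted, and only in that case read $a(\bmx)$ from the then-clean redundancy block and apply \Cref{lem_q1CECC} with length $m$. Your observation for $k=2$ is slightly more careful than the paper's phrase ``directly detected and corrected'': a corrupted marker column can only be detected, not corrected, but detection already certifies $\bmx=\bmy\mid_{[m]}$.
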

\begin{IEEEproof}
    Let $\bmc=\varphi_1^S(\bmx)$ be transmitted and $\bmy$ be received. We need to decode $\bmx$ from $\bmy$. Since both $\bmc[m+1]$ and $\bmc[m+2]$ consist of identical symbols, a substitution in either $\bmy[m+1]$ or $\bmy[m+2]$ can be directly detected and corrected. Now suppose that both $\bmy[m+1]$ and $\bmy[m+2]$ are error-free. We check if a substitution occurred in $\bmy\mid_{[m]}$. If $\sum_{i=1}^{k}\sum_{j=1}^{m}\bmy_i[j]\pmod {2q-1}=q\cdot\bmy_1[m+1]+\bmy_1[m+2]$, then $\bmy\mid_{[m]}$ is error-free and thus, $\bmx=\bmy\mid_{[m]}$. Otherwise, $\bmy\mid_{[m]}$ contains an error. This implies that $\bmy\mid_{[m+3,m+2+\Delta]}$ is error-free. Therefore, the values of $\sum_{i=1}^{k}\vt\parenv{\bmx_i}\pmod{p_1}$ and $\sum_{i=1}^n\sum_{j=0}^{q-1}j^2\Wth{j}{\bmx[i]}\pmod{p_2}$ are known to us. In addition, we have $\sum_{i=1}^{k}\sum_{j=1}^{m}\bmx_i[j]\pmod {2q-1}=q\cdot\bmy_1[m+1]+\bmy_1[m+2]$, since both of $\bmy[m+1]$ and $\bmy[m+2]$ are error-free. Now by \Cref{lem_q1CECC}, $\bmx$ can be decoded. This completes the proof. 
\end{IEEEproof}

\subsection{$t$-$(1,\ldots,1)$-CECCs}\label{sec_t1CECC}
%%%%%%%%%%%%%%%%%%%%%%%%%%%%%%%%%%%%%%%%%%%%%%%%%%%%
In this subsection, we construct $k$-resolution $t$-$(1,\ldots,1)$-CECCs. First, we need a lemma for single-substitution correction in  $q$-ary sequences.
\begin{lemma}\label{lem_qary1sub}
    Let $q\ge2$ be an integer. For any $\bmx\in\Sigma_q^n$, if $\vt\parenv{\bmx}\pmod{2n(q-1)}$ and $\Sum{\bmx}\pmod{q}$ are known, we can decode $\bmx$ from any $\bmx^\prime$ that is obtained from $\bmx$ by at most one substitution.
\end{lemma}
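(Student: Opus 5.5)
Write $\bmx^\prime=\bmx+\delta\bme_i$ for the received word, where $\delta\in\mathset{-(q-1),\ldots,q-1}$ and $i\in[n]$; the case $\delta=0$ is no error. The plan is to recover $\delta$ from the sum syndrome, and then $i$ from the $\vt$ syndrome, in the spirit of the proofs of \Cref{thm_leecoder1} and \Cref{lem_q1CECC}.

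\textbf{Step 1: the error value modulo $q$.} Compute $\Delta_1=\parenv{\Sum{\bmx^\prime}-\Sum{\bmx}}\pmod q$, which equals $\delta\pmod q$. If $\Delta_1=0$, then $\delta=0$, since $0<\abs{\delta}<q$ would give a nonzero residue, and we output $\bmx=\bmx^\prime$. Otherwise $\Delta_1\in[1,q-1]$, and $\delta$ is one of exactly two candidates: $\delta_+=\Delta_1>0$ or $\delta_-=\Delta_1-q<0$. Thus the sum constraint fixes the magnitude class and leaves only the sign unknown.

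\textbf{Step 2: sign and position from the $\vt$ syndrome.} Let $N=2n(q-1)$ and $\Delta_2=\parenv{\vt(\bmx^\prime)-\vt(\bmx)}\pmod N$, which equals $i\delta\pmod N$. If $\delta>0$, then $i\delta\in[1,n(q-1)]$. If $\delta<0$, then $i\delta\in[-n(q-1),-1]$, so $i\delta\pmod N\in[n(q-1),N-1]$. These two ranges overlap only at $n(q-1)$, which requires $\delta=q-1$, $i=n$ or $\delta=-(q-1)$, $i=n$.

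The decoder proceeds as follows. If $1\le\Delta_2<n(q-1)$, set $\delta=\delta_+$ and $i=\Delta_2/\delta_+$. If $n(q-1)<\Delta_2\le N-1$, set $\delta=\delta_-$ and $i=(N-\Delta_2)/\abs{\delta_-}$.

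\textbf{Main obstacle: the boundary value $\Delta_2=n(q-1)$.} Here $i=n$ and $\abs{\delta}=q-1$, so $\Delta_1$ equals either $q-1$ (for $\delta=q-1$) or $1$ (for $\delta=-(q-1)$). When $q>2$ these residues differ, so Step 1 already resolves the sign. When $q=2$ the two candidates would give the same $\Delta_1$, but the symbol value $\bmx^\prime[n]\in\{0,1\}$ does the job: the error was $+1$ exactly when $\bmx^\prime[n]=1$. In general one can also use the constraint $0\le\bmx^\prime[i]-\delta\le q-1$.

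It remains to confirm that the integer $i$ is recovered exactly. Since $\delta$ is known and $i\abs{\delta}$ lies in a range of length less than $N$ determined by the sign, the division is exact. Correct decoding then follows from uniqueness: $\bmx=\bmx^\prime-\delta\bme_i$.
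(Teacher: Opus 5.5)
Your proposal is correct and follows the paper's proof essentially step for step. You recover $\delta \bmod q$ from the sum syndrome, then the sign and position from $\Delta_2=i\delta \bmod 2n(q-1)$ split at $n(q-1)$, with the boundary value $\Delta_2=n(q-1)$ forcing $i=n$ and $\abs{\delta}=q-1$. Your treatment of that boundary case for $q=2$ via the received symbol $\bmx^\prime[n]$ is a small improvement: there $q-1=1$, so the paper's rule (decide by whether $\Delta_1=q-1$ or $\Delta_1=1$) cannot tell the two signs apart.
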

\begin{IEEEproof}
    Suppose that $\bmx^\prime$ is obtained from $\bmx$ by substituting $\bmx[i]$ with $\bmx[i]+\delta$, where $-(q-1)\le\delta\le q-1$. Let $\Delta_1=\parenv{\Sum{\bmx^\prime}-\Sum{\bmx}}\pmod{q}$. Then $\Delta_1=\delta\pmod{q}$. If $\Delta_1=0$, no substitution occurred, and hence $\bmx=\bmx^\prime$.
    
    Now consider the case when $\Delta_1\ne 0$. Let $\Delta_2=\parenv{\vt(\bmx^\prime)-\vt(\bmx)}\pmod{2n(q-1)}$. Then $\Delta_2=i\delta\pmod{2n(q-1)}$. Since $-n(q-1)\le i\delta\le n(q-1)$ and $\delta\ne 0$, we conclude that $1\le\Delta_2\le 2n(q-1)-1$. To decode $\bmx$ from $\bmx^\prime$, it suffices to determine the values of $i$ and $\delta$. We proceed according to the value of $\Delta_2$.

    If $\Delta_2<n(q-1)$, then $\delta>0$. Therefore, $\delta=\Delta_1$ and $i\delta=\Delta_2$. Then it follows that $i=\Delta_2/\Delta_1$.
    
    If $\Delta_2>n(q-1)$, then $\delta<0$. Therefore, $\delta=\Delta_1-q$ and $i\delta=\Delta_2-2n(q-1)$. Then it follows that $i=\frac{\Delta_2-2n(q-1)}{\Delta_1-q}$.
    
    If $\Delta_2=n(q-1)$, it must be that $i=n$ and $\delta\in\mathset{q-1,-(q-1)}$. In this case, we have that $\delta=q-1$ if $\Delta_1=q-1$, or $\delta=-(q-1)$ if $\Delta_1=1$. Now the proof is completed.
\end{IEEEproof}

Given $\bmx\in\Sigma_q^m$, define $\overline{\vt\parenv{\bmx}}\triangleq\vt\parenv{\bmx}\pmod{2m(q-1)}$. Then $0\le\overline{\vt\parenv{\bmx}}<2m(q-1)$. Similar to \Cref{thm_t1cdcc,thm_qt1cdcc}, we have the following existential result. The proof is based on \Cref{lem_qary1sub} and is similar to those of \Cref{thm_t1cdcc,thm_qt1cdcc}.
\begin{theorem}
    For $q\ge2$ and $2\le t\le k$, let $p>2m(q-1)$ be a prime. Then for any $0\le a_0,\ldots,a_{t-1}<p$ and $0\le b_1,\ldots,b_k<q$, the code
    \begin{equation*}
        \mathset{\bmc\in\Phi_{q,k}^n:\sum_{i=1}^{k}i^j\overline{\vt\parenv{\bmc_{i}}}\equiv a_{j}\pmod{p},\forall j\in[[t]],\Sum{\bmc_i}\equiv b_i\pmod{q},\forall i\in[k]}
    \end{equation*}
    is a $t$-$(1,\ldots,1)$-CECC. In addition, there exist some $a_0,\ldots,a_{t-1}$ and $b_1,\ldots,b_k$, such that its redundancy is at most $t\log_{Q_{q,k}}(pq)$.
\end{theorem}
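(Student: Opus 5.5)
The plan is to prove the correction property by combining \Cref{lem_qary1sub} with the Vandermonde argument of \Cref{thm_t1cdcc}, and then to bound the redundancy by the pigeonhole principle, as in the earlier existential results. One caveat up front: the pigeonhole count I can justify gives $t\log_{Q_{q,k}}(p)+k\log_{Q_{q,k}}(q)$, which matches the stated $t\log_{Q_{q,k}}(pq)$ only when $k=t$. I explain below why I do not see how to close this gap.

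\textbf{Correction.} Let $\bmc$ be a codeword and $\bmy$ the output. By assumption at most $t$ rows suffer a substitution, at most one each, and we do not know which rows. Proceed as follows.
\begin{enumerate}
\item For every $i\in[k]$, compute $\Delta_1^{(i)}=\parenv{\Sum{\bmy_i}-b_i}\pmod{q}$. By the first step of the proof of \Cref{lem_qary1sub}, $\Delta_1^{(i)}=0$ exactly when row $i$ is error-free. This identifies the set $I=\mathset{i_1,\ldots,i_s}$, $s\le t$, of erroneous rows, and for $i\in I$ it gives $\delta_i\pmod q$.
\item For every $i\notin I$ we have $\bmc_i=\bmy_i$, so $\overline{\vt\parenv{\bmc_i}}$ is known.
\item Taking the constraints with $j=0,\ldots,s-1$ yields an $s\times s$ linear system over $\mathbb{F}_p$ in the unknowns $\overline{\vt\parenv{\bmc_{i_l}}}$. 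Its coefficient matrix is the Vandermonde matrix on $i_1,\ldots,i_s$. It is invertible because $p>2m(q-1)$ can be taken larger than $k$ (if needed, take $p>\max\{k,2m(q-1)\}$), so the $i_l$ are distinct in $\mathbb{F}_p$.
\item Since $0\le\overline{\vt\parenv{\bmc_{i}}}<2m(q-1)<p$, each value is recovered exactly from its residue mod $p$.
\item Apply \Cref{lem_qary1sub} to each row in $I$.
\end{enumerate}
Codewords are valid matrices, and the decoder never needs validity, so no further case analysis is required.

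\textbf{Redundancy.} The codes indexed by all $(a_0,\ldots,a_{t-1},b_1,\ldots,b_k)$ partition $\Phi_{q,k}^n$ into $p^tq^k$ classes. Some class therefore has size at least $Q_{q,k}^n/(p^tq^k)$, which gives redundancy $t\log_{Q_{q,k}}(p)+k\log_{Q_{q,k}}(q)$. This equals $t\log_{Q_{q,k}}(pq)$ when $k=t$.

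\textbf{Main obstacle.} For $k>t$, the stated bound would require collapsing the $k$ row-sum checks into $t$ checks mod $q$. The natural attempt is to replace $\Sum{\bmc_i}\equiv b_i$ by the Vandermonde-weighted checks $\sum_i i^j\Sum{\bmc_i}\equiv b_j'\pmod q$ with $q$ prime. The difficulty is that the substituted rows are unknown: recovering both their positions and their values from $t$ syndromes is a Reed--Solomon-type problem, which needs $2t$ syndromes, not $t$. The per-row sum checks are exactly what locate the rows, and I see no way to avoid paying $k\log_{Q_{q,k}}(q)$ for them. My fallback is the bound $\log_{Q_{q,k}}(p^tq^k)$, with the stated form holding when $k=t$. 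Unless the row-location step is handled differently, I expect the statement's $t\log_{Q_{q,k}}(pq)$ to be attainable only in that case.
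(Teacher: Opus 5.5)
Your decoding argument is the one the paper has in mind when it says the proof is ``similar to'' \Cref{thm_t1cdcc,thm_qt1cdcc}:
\begin{enumerate}
\item row sums modulo $q$ locate the at most $t$ corrupted rows, since $0<\abs{\delta}<q$;
\item the first $s$ mod-$p$ checks form a Vandermonde system for the unknown $\overline{\vt\parenv{\bmc_i}}$;
\item these values are recovered exactly, because they lie in $[0,2m(q-1))\subseteq[0,p)$;
\item \Cref{lem_qary1sub} then finishes each row, reading $m$ as the row length $n$, as you do.
\end{enumerate}
Your extra hypothesis $p\ge k$ really is needed for the Vandermonde step: if $i_l\equiv i_{l'}\pmod p$, the corresponding columns of the system coincide. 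The paper has it only implicitly, from $n$ being large.

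On redundancy, your count is the correct one, and the statement as written is wrong when $k>t$. The family partitions $\Phi_{q,k}^n$ into $p^tq^k$ classes, so pigeonhole gives $\log_{Q_{q,k}}\parenv{p^tq^k}=t\log_{Q_{q,k}}(p)+k\log_{Q_{q,k}}(q)$. This matches \Cref{construction_t1CECC}, which pays $2k$ symbols for the $k$ row sums.

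Your ``main obstacle'' paragraph is beside the point. The code is fixed by the statement, so the only question is how large its largest class is, and that can be settled rather than left as an expectation. Put $N=2m(q-1)$.
\begin{enumerate}
\item The columns of a uniform $\bmc\in\Phi_{q,k}^n$ are i.i.d.\ uniform on $\Phi_{q,k}$. This set contains the zero column and the columns $\sparenv{0,\ldots,0,1,\ldots,1}^\T$, which form a basis of $\mathbb{Z}^k$.
\item Hence every nontrivial character of $\mathbb{Z}_N^k\times\mathbb{Z}_q^k$ has bias $e^{-\Omega(n)}$ on $\parenv{\overline{\vt\parenv{\bmc_i}},\Sum{\bmc_i}\bmod q}_{i\in[k]}$. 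For the $\mathbb{Z}_N^k$ part, a constant fraction of positions $j\le n$ have $j\xi/N$ bounded away from $\mathbb{Z}^k$.
\item There are only $(Nq)^k=\mathrm{poly}(n)$ characters, so this vector is uniform up to a factor $1+o(1)$.
\item The $t$ mod-$p$ checks have an invertible $t\times t$ minor, and every $\overline{\vt\parenv{\bmc_i}}<p$. So each $(a_0,\ldots,a_{t-1})$ is hit by at most $N^{k-t}$ vectors in $[0,N)^k$.
\end{enumerate}
Therefore every code in the family has at most $(1+o(1))Q_{q,k}^n/(N^tq^k)$ codewords. For large $n$ this is below $Q_{q,k}^n/(pq)^t$ whenever $(p/N)^t$ stays a constant factor below $q^{k-t}$. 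That holds for every $k>t$ when $p$ is the least prime above $N$ (the paper's convention), since then $p/N\to1$. So the bound you proved is the correct version of the statement.
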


We do not have an efficient encoder achieving exactly this redundancy. Instead, we give a construction with slightly larger redundancy but an efficient encoding algorithm.
\begin{construction}\label{construction_t1CECC}
  For $q\ge2$ and $2\le t\le k$, let $p>\max\mathset{2m(q-1),f(k,t)}$ be a prime, where $f(k,t)$ is defined in \Cref{lem_subvanmatrix}. Denote $\Delta=\ceilenv{\log_{Q_{q,k}}(p)}$. Let $n=m+2k+t(\Delta+k)$. Identifying $\Phi_{q,k}$ with $\Sigma_{Q_{q,k}}$, we define the mapping 
    $$
    \begin{array}{rl}
         \varphi_2^S:&\Phi_{q,k}^m\rightarrow\Phi_{q,k}^n \\
         &\quad\bmx\mapsto\bmc
    \end{array}
    $$ as follows:
    \begin{itemize}
        \item $\bmc\mid_{[m]}=\bmx$;
        \item for $1\le i\le k$, set $\bmc[m+2i-1]=\bmc[m+2i]=\sparenv{a_i,\ldots,a_i}^\T$, where $a_i=\Sum{\bmx_i}\pmod{q}$;
        \item for $1\le i\le k$ and $0\le j<t$, let
        $$
        \begin{array}{l}
        I_j=[j(\Delta+k)+1,j(\Delta+k)+\Delta]+m+2k,\\
        \bmc\mid_{I_j}=\expan{Q_{q,k}}{\sum_{s=1}^{k}s^j\overline{\vt\parenv{\bmx_{s}}}\pmod{p}},\\
        \bmc[m+2k+(j+1)\Delta+jt+i]=\sparenv{b_{i,j},\ldots,b_{i,j}}^\T,
        \end{array}
        $$
        where $b_{i,j}=\Sum{\bmc_i\mid_{I_j}}\pmod{q}$.
    \end{itemize}
    Define $\cC_2^S=\mathset{\varphi_2^S(\bmx):\bmx\in\Phi_{q,k}^m}$.
\end{construction}

\begin{theorem}
    The code $\cC_2^S$ is a $t$-$(1,\ldots,1)$-CECC with redundancy $t\ceilenv{\log_{Q_{2,k}}(p)}+2k+tk$.
\end{theorem}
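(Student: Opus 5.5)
Since the redundancy count is immediate ($n-m=2k+t(\Delta+k)$), the work is in the decoder. Let $\bmc=\varphi_2^S(\bmx)$ be transmitted and $\bmy$ received, with at most $t$ rows each suffering at most one substitution. The plan mirrors the proof of \Cref{thm_t1cdcc2}. First use the redundant parity columns to locate the erroneous rows and to learn, for each one, which part of the codeword was hit. Then apply \Cref{lem_subvanmatrix} to a subsystem of the stored syndromes, and finish with \Cref{lem_qary1sub}.

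\textbf{Step 1 (the repeated parity pairs).} Columns $m+2i-1,m+2i$ each equal the constant column $\sparenv{a_i,\ldots,a_i}^\T$. A substitution in row $r$ of one of these columns makes that column non-constant, so it is detected. The correct value is then read off from any other row of the same column, since at most one row per column... more safely, read the $r$-th entry from the twin column, because row $r$ has only one error. This recovers all $a_i=\Sum{\bmx_i}\pmod q$. Similarly, each column $\bmc[m+2k+(j+1)\Delta+jt+i]$ is constant, so each $b_{i,j}$ is recovered. Here $b_{i,j}$ is the row-$i$ parity of the $j$-th syndrome block $I_j$; this requires some other row of that column to be error-free, which holds whenever $t<k$. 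If $t=k$, I would instead compare the entry in row $r$ with the neighbours in that column, or simply note that row $r$ has one error, and the check is only needed when the error lies elsewhere in the row. I expect this bookkeeping to be the main obstacle, and I would resolve it by using the row's single-error budget.

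\textbf{Step 2 (classify each row).} For each row $i$, compare $\Sum{\bmy_i\mid_{[m]}}\pmod q$ with $a_i$. A mismatch means the unique error in row $i$ lies in $\bmx_i$; a match means $\bmy_i\mid_{[m]}=\bmx_i$. Likewise, compare $\Sum{\bmy_i\mid_{I_j}}\pmod q$ with $b_{i,j}$ to find whether row $i$'s error lies in block $I_j$. Let $I=\{i_1<\cdots<i_s\}$ be the set of rows whose error lies in the message part. The rows outside $I$ have correct $\bmx_i$. Each row not in $I$ corrupts at most one block index $j$, so at least $t-(t-s)\ge s$ block indices $j_1<\cdots<j_s$ have every row of $I_j$ intact. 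The corresponding syndromes $\sum_{s'}s'^{j}\overline{\vt(\bmx_{s'})}\pmod p$ are therefore known. The concern is whether a row in $I$ could also corrupt a block; it cannot, because it has only one error.

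\textbf{Step 3 (solve and correct).} Subtracting the known contributions of rows outside $I$ gives an $s\times s$ system with coefficient matrix $(i_l^{j_r})$. This matrix is invertible over $\mathbb{F}_p$ by \Cref{lem_subvanmatrix}, since $p>f(k,t)$. Solving it yields $\overline{\vt(\bmx_{i})}\bmod p$ for $i\in I$. Because $p>2m(q-1)>\overline{\vt(\bmx_i)}$, this residue equals $\overline{\vt(\bmx_i)}=\vt(\bmx_i)\bmod 2m(q-1)$. Together with $a_i=\Sum{\bmx_i}\bmod q$, \Cref{lem_qary1sub} recovers each $\bmx_i$ from $\bmy_i\mid_{[m]}$.
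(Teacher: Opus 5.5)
Your overall architecture matches the paper's proof. You classify each row by a mod-$q$ sum check, note that each erroneous row outside $I$ spoils at most one syndrome block so that $s$ clean blocks remain, solve the $s\times s$ system via \Cref{lem_subvanmatrix}, and finish with \Cref{lem_qary1sub} using $p>2m(q-1)$. Steps 2 and 3 are fine.

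The genuine problem is Step 1, where you try to recover $a_i$ and $b_{i,j}$ column-wise from the other rows of the constant parity columns. The premise ``at most one row per column'' is false: the up to $t$ affected channels may all hit the same column. For the $b_{i,j}$ columns, which have no twin, knowing that some row of the column is clean (your $t<k$ remark) does not tell the decoder which row it is. For example, with $q=2$, $k=3$, $t=2$, a received column $\sparenv{0,1,1}^\T$ arises both from true value $0$ with rows $2,3$ hit and from true value $1$ with row $1$ hit. So Step 1 as written does not deliver the $b_{i,j}$. Even for $a_i$, the twin trick needs an extra case: a row whose two copies agree yields $a_i$, and if no row agrees then $t=k$ and $\bmy\mid_{[m]}=\bmx$.

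The fix, which you gesture at (``the row's single-error budget'') but do not carry out, is to drop global recovery and make every check row-local, as the paper does.

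\textbf{Message part.} $\bmy_i\mid_{[m]}$ is error-free if and only if $\Sum{\bmy_i\mid_{[m]}}\pmod{q}$ equals at least one of $\bmy_i[m+2i-1]$, $\bmy_i[m+2i]$. If row $i$'s error lies in $[m]$, both copies are intact and the sum is off by some $\delta\not\equiv 0\pmod{q}$. Otherwise at most one copy is hit, so at least one matches.

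\textbf{Syndrome blocks.} Compare $\Sum{\bmy_i\mid_{I_j}}\pmod{q}$ with row $i$'s own parity entry for block $j$. A match certifies $\bmy_i\mid_{I_j}=\bmc_i\mid_{I_j}$. A mismatch means row $i$'s single error lies in $I_j$ or in that parity symbol. Either way row $i$ is charged to block $j$ alone, which is all the counting in your Step 2 needs.

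\textbf{Residues for $i\in I$.} For $i\in I$, the row-local copy $\bmy_i[m+2i-1]$ is intact, so it supplies $\Sum{\bmx_i}\pmod{q}$ for \Cref{lem_qary1sub}.

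With Step 1 replaced this way, your argument coincides with the paper's.
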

\begin{IEEEproof}
Suppose that the transmitted codeword is $\bmc=\varphi_2^S(\bmx)$ and the received sequence is $\bmy$. Our goal is to decode $\bmx$ from $\bmy$. Notice that $\bmc\mid_{[m]}=\bmx$. We first determine which of $\bmx_1,\ldots,\bmx_{k}$ suffered substitutions. For $i\in[k]$, compare $\Sum{\bmy_i\mid_{[m]}}\pmod{q}$ with $\bmy_i[m+2i-1]$ and $\bmy_i[m+2i]$. If $\Sum{\bmy_i\mid_{[m]}}\pmod{q}$ equals one of $\bmy_i[m+2i-1]$ and $\bmy_i[m+2i]$, then $\bmy_i\mid_{[m]}$ is error-free and thus, $\bmx_i=\bmy_i\mid_{[m]}$. Otherwise, $\bmy_i\mid_{[m]}$ is obtained from $\bmx_i$ by one substitution.

Let $I=\mathset{i_1,\ldots,i_s}$ be the set of indices of rows of $\bmy\mid_{[m]}$ that contain errors, where $1\le s\le t$ and $1\le i_1<\cdots<i_s\le k$ (if $s=0$, then $\bmx=\bmy\mid_{[m]}$). Let $I^\prime=[k]\setminus I$. Then $\bmx_{i}=\bmy_{i}\mid_{[m]}$ for each $i\in I^\prime$. Now the task is to decode $\bmx_{i}$ for $i\in I$. By \Cref{lem_qary1sub}, it suffices to obtain $\overline{\vt\parenv{\bmx_{i}}}$ and $\Sum{\bmx_{i}}\pmod{q}$, for $i\in I$. For these $i$, the only one substitution occurred in $\bmc_{i}\mid_{[m]}$. Thus, we conclude that $\Sum{\bmx_{i}}\pmod{q}=\bmy_{i}[m+2i-1]$ for each $i\in I$. Therefore, it remains to obtain $\overline{\vt\parenv{\bmx_{i}}}$  for each $i\in I$.

By definition of $\varphi_2^S$, for each $0\le j<t$, the value of $\sum_{i=1}^ki^j\overline{\vt\parenv{\bmx_{i}}}\pmod{p}$ is stored in $\bmc\mid_{I_j}$. By assumption, we have $\bmc_{i}\mid_{I_j}=\bmy_{i}\mid_{I_j}$, for each $i\in I$. We need to check whether $\bmy_{i}\mid_{I_j}$ is erroneous, for $i\in I^\prime$ and $0\le j<t$. This can be done by comparing $\Sum{\bmy_{i}\mid_{I_j}}\pmod{q}$ with $\bmy_{i}[m+2k+(j+1)\Delta+jt+i]$. If $\Sum{\bmy_{i}\mid_{I_j}}\pmod{q}=\bmy_{i}[m+2k+(j+1)\Delta+jt+i]$, then $\bmy_{i}\mid_{I_j}$ is error-free and thus, $\bmc_{i}\mid_{I_j}=\bmy_{i}\mid_{I_j}$. Since there are at most $t-s$ indices $i\in I^\prime$ such that $\bmy_{i}$ is erroneous, there are $0\le j_1<\cdots<j_s<t$ such that $\bmy\mid_{I_{j_1}},\ldots,\bmy\mid_{I_{j_s}}$ are error-free. Now, for each $1\le l\le s$, the value of $\sum_{i=1}^ki^{j_l}\overline{\vt\parenv{\bmx_{i}}}\pmod{p}$ is known.

For $1\le l\le s$, let $a_l=\parenv{\sum_{i=1}^{k}i^{j_l}\overline{\vt\parenv{\bmx_{i}}}-\sum_{i\in I^\prime}i^{j_l}\overline{\vt\parenv{\bmx_{i}}}}\pmod{p}$. Notice that all $a_l$'s are known. Then we have that
\begin{equation*}
    \begin{pmatrix}
        i_{1}^{j_{1}}&i_{2}^{j_{1}}&\cdots&i_{s}^{j_{1}}\\
        i_{1}^{j_{2}}&i_{2}^{j_{2}}&\cdots&i_{s}^{j_{2}}\\
        \vdots&\vdots&\cdots&\vdots\\
        i_{1}^{j_{s}}&i_{2}^{j_{s}}&\cdots&i_{s}^{j_{s}}
    \end{pmatrix}
    \begin{pmatrix}
        \overline{\vt\parenv{\bmx_{i_1}}}\\
        \overline{\vt\parenv{\bmx_{i_2}}}\\
        \vdots\\
        \overline{\vt\parenv{\bmx_{i_s}}}
    \end{pmatrix}
    =
    \begin{pmatrix}
        a_1\\
        a_2\\
        \vdots\\
        a_s
    \end{pmatrix}
    \pmod{p}.
\end{equation*}
This is a system of linear equations, with $s$ unknowns $\overline{\vt\parenv{\bmx_{i_1}}},\ldots,\overline{\vt\parenv{\bmx_{i_s}}}$, over the field $\mathbb{F}_p$.
According to \Cref{lem_subvanmatrix}, the coefficient matrix is invertible over the field $\mathbb{F}_p$. We can uniquely obtain the vector
$$
\parenv{\overline{\vt\parenv{\bmx_{i_1}}}\pmod{p},\cdots,\overline{\vt\parenv{\bmx_{i_s}}}\pmod{p}}
$$
by solving the equation system. Since $p>2m(q-1)$ and $0\le\overline{\vt\parenv{\bmx_{i_r}}}<2m(q-1)$, we have $\overline{\vt\parenv{\bmx_{i_r}}}=\overline{\vt\parenv{\bmx_{i_r}}}\pmod{p}$ for every $1\le r\le s$. Now the proof is completed.
\end{IEEEproof}

\section{Conclusion}\label{sec_conclusion}
%%%%%%%%%%%%%%%%%%%%%%%%%%%%%%%%%%%%%%%%%%%%%%%%
This paper conducted a deeper study of coding problems for the ordered composite DNA channel. We established nontrivial upper bounds on the sizes of $q$-ary $k$-resolution $(e_1,\ldots,e_k)$-CECCs and $e$-CECCs. These results together cover all situations of parameters $q$, $k$, $(e_1,\ldots,e_k)$ and $e$. For codes correcting deletions (CDCCs), we first generalized the non-asymptotic upper bound for $(1,0,\ldots,0)$-CDCCs in \cite{BesartDollma202509}. This bound is not in closed-form. To address this issue, we then derived a clearer asymptotic bound. Some constructions of codes with near-optimal redundancies were provided.

Beyond the $(e_1,\ldots,e_k)$- and $e$-composite-error/deletion models, we also introduced and investigated a novel error model, called the $t$-$(e_1,\ldots,e_t)$-composite-error/composite-deletion model.

Despite these advances, many problems still remain open. Firstly, current upper bounds for CDCCs are limited primarily to the $(1,0,\ldots,0)$-composite-deletion model (see \cite{BesartDollma202509} and \Cref{sec_bounddeltion}). Although these bounds also apply to $1$-CDCCs, a direct study of the $1$-composite-deletion model would likely yield tighter results. More fundamentally, upper bounds for general $(e_1,\ldots,e_k)$-CDCCs, $e$-CDCCs, and $t$-$(e_1,\ldots,e_t)$-CECCs/CDCCs are still unknown.

Secondly, with the exception of the constructions in \Cref{sec_10CECC,sec_1CECC}, all codes constructed do not exploit an intrinsic property of the ordered composite channel: each column of a codeword is a nondecreasing vector. While our constructions achieve near-optimal redundancy, it remains an open question how much further improvement is possible by explicitly incorporating this property into the code design. In particular, is the redundancy of the code in \Cref{thm_1cdcc} (as a $(1,0,\ldots,0)$-CDCC) near-optimal? Equivalently, is the order of magnitude of the upper bound in \Cref{sec_bounddeltion} optimal? Currently, the best known upper bound is $O\parenv{(k+1)^n}$, while the lower bound from \Cref{thm_1cdcc} is $\Omega\parenv{\frac{(k+1)^n}{n}}$. Closing this gap is an important direction for future work.

Lastly, in practice, DNA strands with some biological constraints (such as GC-content constraint, and runlength-limited constraint) are preferred \cite{Tuan2025ISIT}. A $k$-resolution ordered composite DNA sequence (i.e., $q=4$) corresponds to $k$ standard DNA sequences. It is of interest to design CECCs and CDCCs that satisfy one or two of these constraints.

\bibliographystyle{IEEEtran}
\bibliography{ref}

@article{science2012,
  title={Next-{G}eneration {D}igital {I}nformation {S}torage in {DNA}},
  author={{Church}, {George} M and {Gao}, {Yuan} and {Kosuri}, {Sriram}},
  journal={Science},
  volume={337},
  number={6102},
  pages={1628--1628},
  year={2012},
  Month={Sep.}
}

@article{nature2013,
  title={Towards {P}ractical, {H}igh-capacity, {L}ow-{M}aintenance {I}nformation {S}torage in {S}ynthesized {DNA}},
  author={{Goldman}, {Nick} and {Bertone}, {Paul} and {Chen}, {Siyuan} and {Dessimoz}, {Christophe} and {LeProust}, {Emily} M and {Sipos}, {Botond} and {Birney}, {Ewan}},
  journal={Nature},
  volume={494},
  number={7435},
  pages={77--80},
  year={2013},
  Month={Feb.}
}

@ARTICLE{Yazdi2015TMBMC,
  author={S. M. Hossein Tabatabaei Yazdi and Han Mao Kiah and Eva Garcia-Ruiz and Jian Ma and Huimin Zhao and Olgica Milenkovic},
  journal={IEEE Trans. Mol. Biol. Multiscale Commun.}, 
  title={{DNA}-{B}ased {S}torage: {T}rends and {M}ethods}, 
  year={2015},
 month={Sep.},
  volume={1},
  number={3},
  pages={230--248}
  }

@article{YANIV2017Science,
author={YANIV Erlich and DINA Zielinski},
title={{DNA} {F}ountain enables a robust and efficient storage architecture},
journal={Science},
year={2017},
month={Mar.},
volume={355},
number={6328},
pages={950--954}
}

@article{Levenshtein2001JCTA,
title = {Efficient {R}econstruction of {S}equences from {T}heir {S}ubsequences or {S}upersequences},
journal = {J. Combinat. Theory, A},
volume = {93},
number = {2},
pages = {310--332},
year = {2001},
month={Feb.},
author ={Vladimir Iosifovich Levenshtein}
}

@ARTICLE{Janson2004RSA,
	author = {Janson, Svante},
	title = {Large {D}eviations for {S}ums of {P}artly {D}ependent {R}andom {V}ariables},
    journal={Random Struct. Algorithms},
	year = {2004},
    month={Mar.},
	volume = {24},
	number = {3},
	pages = {234--248}
}

@article{VL1966,
  author ={Vladimir Iosifovich Levenshtein},
  journal = {Soviet Physics Doklady},
  year = {1966},
  month = {Feb.},
  number = {8},
  pages = {707--710},
  title = {Binary codes capable of correcting deletions, insertions and reversals},
  volume = {10}
}

@ARTICLE{Schalkwijk1972IT,
  author={Schalkwijk, J.},
  journal={IEEE Trans. Inf. Theory}, 
  title={An {A}lgorithm for {S}ource {C}oding}, 
  year={1972},
  month={May},
  volume={18},
  number={3},
  pages={395--399},
  }

@article{Yeongjae2019SR,
author={Yeongjae Choi and Taehoon Ryu and Amos C. Lee and others},
title={High information capacity {DNA}-based data storage with augmented encoding characters using degenerate bases},
journal={Sci. Rep.},
year={2019},
volume={9},
pages={6582}
}

@article{Leon2019NB,
author={Leon Anavy and Inbal Vaknin and  Orna Atar and Roee Amit and Zohar Yakhini},
title={Data storage in {DNA} with fewer synthesis cycles using composite {DNA} letters},
journal={Nat. Biotechnol.},
year={2019},
month={Oct.},
volume={37},
pages={1229--1236}
}

@ARTICLE{Makarychev2022IT,
  author={Makarychev, Konstantin and Rácz, Miklós Z. and Rashtchian, Cyrus and Yekhanin, Sergey},
  journal={IEEE Trans. Inf. Theory}, 
  title={Batch {O}ptimization for {DNA} {S}ynthesis}, 
  year={2022},
  month={Nov.},
  volume={68},
  number={11},
  pages={7454--7470}
  }

@article{Yan2023SR,
author={Yiqing Yan and  Nimesh Pinnamaneni and Sachin Chalapati and  Conor Crosbie and  Raja Appuswamy},
title={Scaling logical density of {DNA} storage with enzymatically‑ligated composite motifs},
journal={Sci. Rep.},
year={2023},
volume={13},
pages={15978}
}

@article{Press2024SR,
author={Inbal Preuss and  Michael Rosenberg and Zohar Yakhini and   Leon Anavy },
title={Efficient {DNA}‑based data storage using shortmer combinatorial encoding},
journal={Sci. Rep.},
year={2024},
volume={14},
pages={7731}
}

@ARTICLE{Tuan2024IT,
  author={Thanh Nguyen, Tuan and Cai, Kui and Siegel, Paul H.},
  journal={IEEE Trans. Inf. Theory}, 
  title={A {N}ew {V}ersion of $q$-{A}ry {V}arshamov-{T}enengolts {C}odes {W}ith {M}ore {E}fficient {E}ncoders: {T}he {D}ifferential {VT} {C}odes and {T}he {D}ifferential {S}hifted {VT} {C}odes}, 
  year={2024},
  month={Oct.},
  volume={70},
  number={10},
  pages={6989--7004}
}

@ARTICLE{Ohad2023IT,
  author={Elishco, Ohad and Huleihel, Wasim},
  journal={IEEE Trans. Inf. Theory}, 
  title={Optimal {R}eference for {DNA} {S}ynthesis}, 
  year={2023},
  month={Nov.},
  volume={69},
  number={11},
  pages={6941--6955}
  }

@article{Preuss2024TMBMSC,
author={Inbal Preuss and Ben Galili and Zohar Yakhini and Leon Anavy},
title={Sequencing {C}overage {A}nalysis for {A}ombinatorial {DNA}-{B}ased {S}torage {S}ystems},
year={2024},
month={Jun.},
journal={IEEE Trans. Mol. Biol. Multiscale Commun.},
volume={10},
number={2},
pages={297--316}
}

@ARTICLE{Wenkai2025IT,
  author={Zhang, Wenkai and Wang, Zhiying},
  journal={IEEE Trans. Inf. Theory}, 
  title={Codes for {L}imited-{M}agnitude {P}robability {E}rror in {DNA} {S}torage}, 
  year={2025},
  month={Jul.},
  volume={71},
  number={7},
  pages={5063--5081}
  }

@article{Roman2025TCOMM,
author={Roman Sokolovskii and Parv Agarwal and Luis Alberto Croquevielle and Zijian Zhou and Thomas Heinis},
title={Coding {O}ver {C}oupon {C}ollector {C}hannels for {C}ombinatorial {M}otif-{B}ased {DNA} {S}torage},
year={2025},
month={Jun.},
journal={IEEE Trans. Commun.},
volume={73},
number={6},
pages={3750--760}
}

@ARTICLE{Immink2024TMBMSC,
  author={Schouhamer Immink, Kees A. and Cai, Kui and Nguyen, Tuan Thanh and Weber, Jos H.},
  journal={IEEE Trans. Mol. Biol. Multiscale Commun.}, 
  title={Constructions and {P}roperties of {E}fficient {DNA} {S}ynthesis {C}odes}, 
  year={2024},
  month={Jun.},
  volume={10},
  number={2},
  pages={289--296}
  }

@ARTICLE{Cohen2025JSAIT,
  author={Cohen, Tomer and Yaakobi, Eitan},
  journal={IEEE J. Sel. Areas Inf. Theory}, 
  title={Optimizing the {D}ecoding {P}robability and {C}overage {R}atio of {C}omposite {DNA}}, 
  year={2025},
  month={},
  volume={6},
  number={},
  pages={417--431}
  }

@ARTICLE{Zuo2025DCC,
  author={Zuo Ye and Omer Sabary and Ryan Gabrys and Eitan Yaakobi and Ohad Elishco},
  journal={Des. Codes Cryptogr.}, 
  title={More on codes for combinatorial composite {DNA}}, 
  year={2025},
  month={Aug.},
  volume={93},
  number={8},
  pages={3437--3463}
  }

@INPROCEEDINGS{Lenz2020ISIT,
  author={Lenz, Andreas and Liu, Yi and Rashtchian, Cyrus and Siegel, Paul H. and Wachter-Zeh, Antonia and Yaakobi, Eitan},
  booktitle={Proc. IEEE Int. Symp. Inf. Theory (ISIT)}, 
  title={Coding for {E}fficient {DNA} {S}ynthesis}, 
  year={2020},
  month={Jun.},
  volume={},
  number={},
  pages={2885--2890},
  address={Los Angeles, CA, USA}
  }

@INPROCEEDINGS{Abu-Sini2023ISIT,
  author={Abu-Sini, Maria and Lenz, Andreas and Yaakobi, Eitan},
  booktitle={Proc. IEEE Int. Symp. Inf. Theory (ISIT)}, 
  title={{DNA} {S}ynthesis {U}sing {S}hortmers}, 
  year={2023},
  month={Jun.},
  volume={},
  number={},
  pages={585--590},
  address={Taipei, Taiwan}
  }

@INPROCEEDINGS{Chrisnata2023ISIT,
  author={Chrisnata, Johan and Kiah, Han Mao and Long Phuoc Pham, Van},
  booktitle={Proc. IEEE Int. Symp. Inf. Theory (ISIT)}, 
  title={Deletion {C}orrecting {C}odes for {E}fficient {DNA} {S}ynthesis}, 
  year={2023},
  month={Jun.},
  volume={},
  number={},
  pages={352--357},
  address={Taipei, Taiwan}
  }

@INPROCEEDINGS{Frederik2024ISIT,
author={Frederik Walter and Omer Sabary and Antonia Wachter-Zeh and Eitan Yaakobi},
booktitle={Proc. IEEE Int. Symp. Inf. Theory (ISIT)},   
title={Coding for {C}omposite {DNA} to {C}orrect {S}ubstitutions, {S}trand {L}osses, and {D}eletions},
year={2024},
month={Jul.}, 
volume={},  
number={},  
pages={97--102},  
address={Athens, Greece}
}

@INPROCEEDINGS{Omer2024ISIT,
author={Omer Sabary and Inbal Preuss and Ryan Gabrys and Zohar Yakhini and Leon Anavy and Eitan Yaakobi},
booktitle={Proc. IEEE Int. Symp. Inf. Theory (ISIT)},  
title={Error-{C}orrecting {C}odes for {C}ombinatorial {DNA} {C}omposite},
year={2024},
month={Jul.}, 
volume={},  
number={},  
pages={109--114},  
address={Athens, Greece}
}

@INPROCEEDINGS{Tuan2024ISIT,
  author={Nguyen, Tuan Thanh and Cai, Kui and Schouhamer Immink, Kees A.},
  booktitle={Proc. IEEE Int. Symp. Inf. Theory (ISIT)}, 
  title={Efficient {DNA} {S}ynthesis {C}odes with {E}rror {C}orrection and {R}unlength {L}imited {C}onstraint}, 
  year={2024},
  month={Jul.},
  volume={},
  number={},
  pages={669--674},
  address={Athens, Greece}
  }

@INPROCEEDINGS{Yajuan2025ISIT,
  author={Liu, Yajuan and Duman, Tolga M.},
  booktitle={Proc. IEEE Int. Symp. Inf. Theory (ISIT)}, 
  title={Constrained {E}rror-{C}orrecting {C}odes for {E}fficient {DNA} {S}ynthesis}, 
  year={2025},
  month={Jun.},
  volume={},
  number={},
  pages={},
  address={Ann Arbor, MI, USA}
  }

@INPROCEEDINGS{WangChen2025ITW,
  author={Wang, Chen and Nguyen, Tuan Thanh and Cai, Kui and Zhang, Yiwei},
  booktitle={Proc. IEEE Inf. Theory Workshop (ITW)}, 
  title={Correcting {E}rrors in {C}omposite {DNA}: {C}hannel {M}odel and {C}ode {D}esign}, 
  year={2025},
  month={Sept.},
  volume={},
  number={},
  pages={1--6},
  address={Sydney, Australia}
  }

@INPROCEEDINGS{Cohen2025ISIT,
  author={Cohen, Tomer and Wang, Zhiying and Yaakobi, Eitan and Yakhini, Zohar},
  booktitle={Proc. 13th Int. Symp. Topics in Coding (ISTC)}, 
  title={Rank {M}odulated {C}omposite {E}ncoding for {D}ata {S}torage in {DNA}}, 
  year={2025},
  month={Jun.},
  volume={},
  number={},
  pages={},
  address={Ann Arbor, MI, USA}
  }

@INPROCEEDINGS{Frederik2025ISIT,
author={Frederik Walter and Yonatan Yehezkeally},
booktitle={Proc. IEEE Int. Symp. Inf. Theory (ISIT)}, 
title={Coding for {S}trand {B}reaks in {C}omposite {DNA}},
year={2025},
month={Jun.},
volume={},
number={},
pages={},
address={Ann Arbor, MI, USA}
}

@INPROCEEDINGS{Tuan2025ISIT,
  author={Nguyen, Tuan Thanh and Wang, Chen and Cai, Kui and Zhang, Yiwei and Yakhini, Zohar},
  booktitle={Proc. IEEE Int. Symp. Inf. Theory (ISIT)}, 
  title={Constrained {C}oding for {C}omposite {DNA}: {C}hannel {C}apacity and {E}fficient {C}onstructions}, 
  year={2025},
  month={Jun.},
  volume={},
  number={},
  pages={},
  address={Ann Arbor, MI, USA}
  }

@book{GTM238,
  title={A Course in Enumeration},
  author={Martin Aigner},
  series={Graduate Texts in Mathematics},
  volume={238},
  year={2007},
  publisher={Springer-Verlag Berlin Heidelberg},
  edition={1}
}

@book{LucGabor2001,
  title={Combinatorial Methods in Density Estimation},
  author={Luc Devroye and G\'{a}bor Lugosi},
  series={Springer Series in Statistics},
  volume={},
  year={2001},
  publisher={Springer New York, NY},
  edition={1}
}

@article{BesartDollma202509,
author={Besart Dollma and Ohad Elishco and Eitan Yaakobi},
title={Coding for {O}rdered {C}omposite {DNA} {S}equences},
year={2025},
month={Sept.},
journal={arXiv:2509.26119},
volume={}
}

@misc{kabal2018,
  title={Combinatorial {C}oding and {L}exicographic {O}rdering},
  author={Kabal, Peter},
  year={2018},
  url={https://www.mmsp.ece.mcgill.ca/Documents/Reports/2018/KabalR2018.pdf}
}
\end{document}